\def\BState{\State\hskip-\ALG@thistlm}
\newcommand{\RNum}[1]{\uppercase\expandafter{\romannumeral #1\relax}}
\newcommand\norm[1]{\lVert#1\rVert}
\newtheoremstyle{theoremdd}
{\topsep}
{\topsep}
{\itshape}
{0pt}
{\bfseries}
{:}
{ }
{\thmname{#1}\thmnumber{ #2} \thmnote{(#3)}}
\theoremstyle{theoremdd}
 \titlespacing{\section}{0pt}{12pt plus 4pt minus 4pt}{0pt plus 2pt minus 2pt}
 \titlespacing{\subsection}{0 pt}{12pt plus 4pt minus 0pt}{0pt plus 2pt minus 2pt}
\def\BibTeX{{\rm B\kern-.05em{\sc i\kern-.025em b}\kern-.08em
		T\kern-.1667em\lower.7ex\hbox{E}\kern-.125emX}}
\begin{document}
	\title{Robust Policy Optimization in  Continuous-time Mixed $\mathcal{H}_2/\mathcal{H}_\infty$ 
		Stochastic Control}
	
	\author{Leilei Cui and 
		Lekan Molu (\IEEEmembership{Member, IEEE}) %
		\thanks{L. Cui is with the Control and Networks Lab, Department of Electrical and Computer Engineering, Tandon School of Engineering, New York University, Brooklyn, NY 11201, USA. (email: l.cui@nyu.edu).}
		\thanks{L. Molu is with the Reinforcement Learning group at Microsoft Research, 300 Lafayette Street, New York, NY 10012, USA. (email: lekanmolu@microsoft.com).}%
		
	}
	
	\maketitle
	\IEEEpeerreviewmaketitle
	\begin{abstract}
Following the recent resurgence in establishing linear control theoretic benchmarks for reinforcement leaning (RL)-based policy optimization (PO) for complex dynamical systems with continuous state and action spaces, 
an optimal control problem for a continuous-time infinite-dimensional linear stochastic system possessing additive Brownian motion is optimized on a cost that is an exponent of the quadratic form of the state, input, and disturbance terms. We lay out a model-based and model-free algorithm for RL-based stochastic PO. For the model-based algorithm, we establish rigorous convergence guarantees. For the sampling-based algorithm, over trajectory arcs that emanate from the phase space, we find that the Hamilton-Jacobi Bellman equation parameterizes trajectory costs --- resulting in a discrete-time (input and state-based) sampling scheme accompanied by unknown nonlinear dynamics with continuous-time policy iterates. The need for known dynamics operators is circumvented and we arrive at a reinforced PO algorithm (via policy iteration) where an upper bound on the  $\mathcal{H}_2$ norm is minimized (to guarantee stability) and a robustness metric is enforced by maximizing the cost with respect to a controller that includes the level of noise attenuation specified by the system's $H_\infty$ norm. Rigorous robustness analyses is prescribed in an input-to-state stability formalism. 
Our analyses and contributions are distinguished by many natural systems characterized by  additive Wiener process, amenable to \^Ito's stochastic differential calculus in dynamic game settings. 
\end{abstract}

\begin{IEEEkeywords}
Optimal control, Robust control, $\mathcal{H}_\infty$ control, Iterative learning control, Machine learning.
\end{IEEEkeywords}

	\definecolor{light-blue}{rgb}{0.30,0.35,1}
\definecolor{light-green}{rgb}{0.20,0.49,.85}
\definecolor{purple}{rgb}{0.70,0.69,.2}

\newcommand{\lb}[1]{\textcolor{light-blue}{#1}}
\newcommand{\bl}[1]{\textcolor{blue}{#1}}

\newcommand{\maybe}[1]{\textcolor{gray}{\textbf{MAYBE: }{#1}}}
\newcommand{\inspect}[1]{\textcolor{cyan}{\textbf{CHECK THIS: }{#1}}}
\newcommand{\more}[1]{\textcolor{red}{\textbf{MORE: }{#1}}}

\renewcommand{\figureautorefname}{Fig.}
\renewcommand{\sectionautorefname}{$\S$}
\renewcommand{\equationautorefname}{equation}
\renewcommand{\subsectionautorefname}{$\S$}
\renewcommand{\chapterautorefname}{Chapter}

\newcommand{\cmt}[1]{{\footnotesize\textcolor{red}{#1}}}
\newcommand{\todo}[1]{\textcolor{cyan}{TO-DO: #1}}
\newcommand{\review}[1]{\noindent\textcolor{red}{$\rightarrow$ #1}}
\newcommand{\response}[1]{\noindent{#1}}
\newcommand{\stopped}[1]{\color{red}STOPPED HERE #1\hrulefill}

\newcounter{mnote}
\newcommand{\marginote}[1]{\addtocounter{mnote}{1}\marginpar{\themnote. \scriptsize #1}}
\setcounter{mnote}{0}
\newcommand{\ie}{i.e.\ }
\newcommand{\eg}{e.g.\ }
\newcommand{\cf}{cf.\ }
\newcommand{\yes}{\checkmark}
\newcommand{\no}{\ding{55}}

\newcommand{\flabel}[1]{\label{fig:#1}}
\newcommand{\seclabel}[1]{\label{sec:#1}}
\newcommand{\tlabel}[1]{\label{tab:#1}}
\newcommand{\elabel}[1]{\label{eq:#1}}
\newcommand{\alabel}[1]{\label{alg:#1}}
\newcommand{\fref}[1]{\cref{fig:#1}}
\newcommand{\sref}[1]{\cref{sec:#1}}
\newcommand{\tref}[1]{\cref{tab:#1}}
\newcommand{\eref}[1]{\cref{eq:#1}}
\newcommand{\aref}[1]{\cref{alg:#1}}

\newcommand{\bull}[1]{$\bullet$ #1}
\newcommand{\argmax}{\text{argmax}}
\newcommand{\argmin}{\text{argmin}}
\newcommand{\mc}[1]{\mathcal{#1}}
\newcommand{\bb}[1]{\mathbb{#1}}

\def\kau{\mc{K}}
\def\particle{\bm{x}}
\def\materialresponse{\bm{G}}
\def\orthoggroup{{\textit{SO}}(3)}
\def\liegroup{{\textit{SE}}(3)}
\def\liealgebra{\mathfrak{se}(3)}
\def\identity{\bm{I}}
\def\sgmin{\underline{\sigma}}
\def\sgmax{\bar{\sigma}}
\def\eigmin{\underline{\lambda}}
\def\eigmax{\bar{\lambda}}
\def\df{\mathrm{d}}
\newcommand{\trace}[1]{\textbf{Tr}(#1)}

\def\rot{{R}}
\def\rthree{\bb{R}^3}
\def\reline{\bb{R}}
\def\ren{\bb{R}^n}
\def\skew{S}
\def\jacob{J}
\def\state{\bm{x}}
\def\statex{x}
\def\statey{y}
\def\statez{z}
\def\hot{h.o.t.\ }
\def\lhs{l.h.s.\ }
\def\rhs{r.h.s.\ }
\def\hinf{\mc{H}_\infty}
\def\htwo{\mc{H}_2}
\def\identity{I}
\def\costdiff{\mathbf{\tilde{V}}}
\def\gain{\bm{k}}

\newtheorem{assumption}{Assumption}
\newtheorem{problem}{Problem}
\newtheorem{proposition}{Proposition}
\newtheorem{theorem}{Theorem}
\newtheorem{lemma}{Lemma}
\newtheorem{remark}{Remark}
\newtheorem{corollary}{Corollary}

\def\vec{\texttt{vec}}
\def\vecs{\texttt{vecs}}
\def\vect{\texttt{vect}}
\def\svec{\texttt{svec}}
\def\vecv{\texttt{vecv}}
\def\mat{\texttt{mat}}
\def\smat{\texttt{smat}}
\def\Tr{Tr}

	\section{Introduction}
Lately, various system-theoretic results analyzing the global convergence~\cite{Fazel2018} and computational complexity~\cite{Mohammadi2022} of nonconvex, constrained~\cite{HuAnnualRevs} gradient-based~\cite{Gravell2021} and derivative-free~\cite{zhang2021derivative} policy optimization in sampling-based reinforcement learning (RL) when the complete set of decision (or state feedback) variables are not previously known have appeared as control benchmarks~\cite{Zhang2020SIAM, Zhang2021}. The most basic setting consists in optimizing over a decision variable $K$ which must be determined from a (restricted) class of controllers $\mc{K}$ \ie $	\min_{K \in \mc{K}} J(K)$
%
%
where $J(K)$ is an objective (e.g. tracking error, safety assurance, goal-reaching measure of performance e.t.c.) required to be satisfied.  In principle, $K$ can be realized as a linear controller, a linear-in-the-parameters polynomial, or as a nonlinear kernel in the form of a radial basis function, or neural network.

These policy optimization (PO) schemes apply to a broad range of problems and have enjoyed wide success in complex systems where analytic models are difficult to derive~\cite{LevineEnd2End}. While they have become a popular tool for modern learning-based control~\cite{RechtTour}, the theoretical underpinning of their convergence, sample complexity, and robustness guarantees are little understood \textit{in the large}. Only recently have  
rigorous analyses tools emerged~\cite{Zhang2020SIAM, Pang2021} for benchmarking RL with deterministic and additive Gaussian disturbance linear quadratic (LQ) controllers~\cite{DeanSampleComplexity, Fazel2018}. 

Tools for analyzing the convergence, sample complexity, or robustness of RL-based PO largely fall into one of infinite-horizon 
\begin{inparaenum}[(i)]
	\item discrete-time LQ regulator (LQR) settings \ie
	$$ \min_{K\in \mc{K}} \bb{E} \sum_{t=0}^\infty  (x_t^\top Q x_t + u_t^\top R u_t) \text{ s.t. } x_{t+1} = A x_t + B u_t, x_0 \sim \mathcal{P}_0$$ where $A, B, Q, \text{ and } R$ are standard LQR matrices for state $x_t$, control input $u_t$ and $x_0$ is drawn from a random distribution $P_0$
	~\cite{Fazel2018};
	\item discrete-time  LQ problems under multiplicative noise \ie $\min_{\pi \in \Pi} \mathbb{E}_{x_0, \{\delta_i\}, \{\gamma_i\}\}} \sum_{t=0}^{\infty} (x_t^\top Q x_t + u_t^\top R u_t)$ $\text{subject to } x_{t+1} = (A + \sum_{i=1}^{p}\delta_{ti} A_i) x_t + (B + \sum_{i=1}^{q}\gamma_{ti} B_i) u_t$ with covariance $\bb{E}_{x_0}[x_0x_0^T]$ 
	and $A, B, Q, R$ are the standard LQR matrices with $\delta_{ti}$ and $\gamma_{tj}$ serving as the i.i.d zero-mean and mutually independent multiplicative noise terms
	~\cite{Gravell2021}; or 
	\item Risk-sensitive $\hinf$-control~\cite{Glover1989} and discrete- and continuous-time  mixed $\htwo/\hinf$ design~\cite{Khargonekar1988, HuAnnualRevs} where the upper bound on the $\htwo$ cost is minimized subject to satisfying a  set of risk-sensitive (often $\hinf$) constraints that \textit{attenuate}~\cite{basar1990minimax} an unknown disturbance. \ie $\min_{K \in \mc{K}} J(K):=Tr(P_K DD^\top)$ $\text{subject to } \mc{K} := \{K | \rho(A-BK) < 1, \|T_{zw}(K)\|_\infty < \gamma \}$ where $P_K$ is the solution to the generalized algebraic Riccati equation (GARE), $A,B, D, K$ are standard closed-loop system matrices, $\|T_{zw}(K)\|_\infty$ denotes the $\hinf$-norm of the closed-loop transfer function from a disturbance input $w$ to its output $z$, and $\gamma>0$, Here, $\gamma>0$, upper-bounded by $\gamma^\star$, a scalar measure of system risk-sensitivity~\cite{book_Basar}.
\end{inparaenum}  

\textit{We focus on continuous-time linear systems in which disturbances enter additively as random stochastic Wiener processes} following recent efforts on policy optimization for LQ regulator problems~\cite{Fazel2018}; these systems may be modeled more accurately with uncertain additive Brownian noise where diffusion processes modeled with \^{I}to's stochastic calculus are the theoretical machinery for analysis. 
Prominent systems featuring such additive Wiener processes occur in economics and finance~\cite{SteeleStochCalc}, stock options trading~\cite{Oksendal}, protein kinetics, population growth models, and models involving computations with round-off error in floating point arithmetic calculations such as overparameterized neural network dynamics.

\textit{Our goal is to keep a controlled process, $z$, small in an infinite-horizon  constrained optimization setting under a minimizing policy $u \in \mc{U}$ in spite of unforeseen additive vector-valued stochastic Brownian process} $w(t) \in \mathbb{R}^q$ --- which may be of large noise intensity. In terms of the  $L_2$ norm, we can write $\|z\|_2 = \left(\int |z(t)|^2 dt\right)^{1/2}$. 
%
The associated performance criteria can be realized as minimizing the expected value of the risk-sensitive linear exponential functions of positive definite quadratic forms state and control variables 
\begin{align}
	\min_{u\in \mc{U}}\mc{J}_{exp}(x_0, u, w) := & 
	\mathbb{E}\bigg|_{x_0 \in \mc{P}_0}\exp\left[\frac{\alpha}{2}\int_{0}^{\infty}z^\top(t)z(t) \df t\right], \, \nonumber \\
	\text{subject to }	\df x(t) &= Ax(t) \df t + Bu(t) \df t + D \df w(t),  \nonumber \\
	z(t) &= Cx(t) + Eu(t), \,\, \alpha >0
	\label{eq:po_leqg_opt}
\end{align}
%
%
with state process $x \in \mathbb{R}^n$, output process $z \in \mathbb{R}^{p}$ to be controlled, and control input $u \in \mathbb{R}^{m}$. The derivative of $w(t) \in \bb{R}^{v} \, \ie dw/dt$ is a zero-mean Gaussian white noise with variance $W$, 
and $x(0)$ is a zero-mean Gaussian random vector independent of $w(t)$, $z(0)=0$, and $A\in \mathbb{R}^{n \times n}$, $B \in \mathbb{R}^{n \times m}, C  \in \mathbb{R}^{p \times n}, D \in \mathbb{R}^{n \times q}$, and $E \in\mathbb{R}^{p \times m}$ are constant matrix functions. The random signal $x(0)$ and the process $w(t)$ are defined over a complete probability space $(\Omega, \mc{F}, \mc{P})$. 
Suppose that we carry out a Taylor series expansion about $\alpha = 0$ in \eqref{eq:po_leqg_opt}, the variance term, $\alpha^{2} \texttt{var}(\int_0^\infty z^\top z)$, will be small after minimization. Thus, $\alpha$ can be seen as a measure of \textit{risk-aversion} if $\alpha >0$. It is important to note that in this paper, we only consider state feedback when $\alpha >0$. In particular when noise is present in the system, the value of $\alpha$ signifies the level of noise attenuation that penalizes the covariance matrix of the system's noise.

We adopt an adaptive policy optimzation policy iteration (PI) method in a continuous PO scheme. This can be seen as an instance of the actor-critic (AC) configuration in RL-based \textit{online} policy optimization schemes. Without explicit access to internal dynamics (system matrices), we iterate between steps of policy evaluation and policy improvement. Mimicking the actor in an RL AC setting, a parameterized controller must be evaluated relative to a parameterized cost function (the critic). The new policy is then used to improve the erstwhile (actor) policy by aiming to drive the cost to an extremal on the overall.

\textbf{Contributions}:
We focus on the more sophisticated case of optimizing an \textit{unknown stochastic linear policy} class $\mathcal{K}$ in an \textit{infinite-horizon} LQ cost setting such that optimization iterates enjoy the \textit{implicit regularization (IR) property}~\cite{Zhang2021}---satisfying $\hinf$ robustness constraints. We place PO for \textit{continuous-time linear stochastic controllers} on a rigorous \textit{global convergence} and \textit{robustness} footing. This is a distinguishing feature of our work. Our contributions are stated below.
\begin{itemize}
	\item We propose a \textit{two-loop iterative alternating best-response procedure} for computing the optimal \textit{mixed-design policy} parameterized by \textit{continuous time linear quadratic stochastic control}-- that accelerates the optimization scheme's convergence -- in model- and sampling-based cases;
	\item Rigorous convergence analyses follow for the model-based loop updates;
	\item In the absence of exact system models, we provide a robust PO scheme as a hybrid system with discrete-time samples from a nonlinear dynamical system. Its robustness is analyzed in an input-to-state (ISS) framework for robustness to perturbations and uncertainties  for loop updates.
	\item Lastly, we benchmark our results against the natural policy gradient~\cite{ShamNPG} in the spirit of recent system-theoretic analysis works~\cite{Fazel2018, Zhang2019, Bu2019, HuAnnualRevs}. 
\end{itemize}

\textbf{Notations}:
\label{sec:notations}
The set of all symmetric matrices with dimension $n$ is  $\mathbb{S}^n$ and $\bb{R}\, (\text{respectively } \mathbb{N}_+)$ is the set of real numbers (resp. positive integers). The Kronecker product is denoted by $\otimes$. The Euclidean (Frobenius) norm of a vector or the spectral norm of a matrix is $\norm{\cdot}$ ($\norm{\cdot}_F$). Let $\norm{\cdot}_\infty$ denote the supremum norm of a matrix-valued signals, i.e. $\norm{\Delta}_\infty= \sup_{s \in \mathbb{Z}_+}\norm{\Delta_s}_F$. The open ball of radius $\delta$ is $\mathcal{B}_\delta(X)= \{Y \in \bb{R}^{m \times n}| \norm{Y-X}_F < \delta \}$. The maximum and minimum singular values (eigenvalues) of a matrix $A$ are respectively denoted by $\sgmax(A)$ ($\eigmax(A)$) and $\sigma_{min}(A)$ ($\lambda_{min}(A)$). The eigenvalues of $A \in \bb{R}^{n\times n}$ are $\lambda_i(A)$ for $i=1, 2,\cdots, n$. For the transfer function $G(s)$, its $\mathcal{H}_\infty$ norm is  $\norm{G}_{\mathcal{H}_\infty} = \sup_{\omega \in \mathbb{R}} \sgmax(G(j\omega))$.

The $n$-dimensional identity matrix is $I_n$. 
The full vectorization of $X  \in \reline^{m\times n}$ is $\vec(X) = \left[x_{11}, x_{21}, \cdots, x_{m1}, x_{12}, \cdots, x_{m2}, \cdots, x_{mn}\right]^\top$. Let $P \in \bb{S}^n$; the half-vectorization of $P$ is the $n(n+1)/2$ column  vectorization of the upper-triangular part of $P$:  $\svec(P)=[p_{11}, \sqrt{2}p_{12}, \allowbreak \cdots, \sqrt{2}p_{1n}, p_{22}, \cdots, \sqrt{2}p_{n-1, n}, p_{nn}]^\top$. The vectorization of the dot product $\langle x, x^\top\rangle $, where $x
\in \ren$,  is  $\vecv(x)= [x_1^2, \cdots, x_1x_n, \allowbreak x_2^2, x_2x_3,\cdots,x_n^2]^\top$. 

\textbf{Paper Structure}:
A linear exponential quadratic Gaussian (LEQG) stochastic optimal control connection to dynamic games is first established in Section \ref{sec:back}. In \S \ref{sec:model_based},  we present a nested double-loop procedure for robust policy recovery in a mixed $\htwo$/$\hinf$ PO (in model-free and model-based) settings; this is followed by a rigorous analysis of their convergence and robustness properties. We  demonstrate the efficacy of our proposed algorithm on numerical examples, and discuss findings in \S \ref{sec:sim}.
	\section{PO: Dynamic Games Connection}
\label{sec:back}
%
In this section, we connect PO under linear controllers to the theory of two-person dynamic games. 
\begin{assumption}
	We take $C^\top C \triangleq Q\succ 0$, $E^T\left(C, \, E\right) = \left(0, \, R \right)$ for some matrix-valued function $R\succ 0$; and since in \eqref{eq:po_leqg_opt}, we want $w(t)$ to be statistically independent,  we take $D D^T$ = 0. Seeking a linear feedback controller for \eqref{eq:po_leqg_opt}, we require that the pair $(A, B)$ be \textit{stabilizable}. We expect to compute solutions via an optimization process, therefore we require that \textit{unstable modes of $A$ must be observable through $Q$}. Whence $(\sqrt{Q}, A)$ must be \textit{detectable}.
	\label{ass:realizability}
\end{assumption}
Given Assumption \autoref{ass:realizability}, the LEQG cost functional now becomes
\begin{align}
	\mc{J}_{exp}(x_0, u) = &
	\mathbb{E}\bigg|_{x_0 \in \mc{P}_0}\exp\left\{\frac{\alpha}{2}\int_{0}^{\infty}\left[x^\top (t)Q x(t) + \nonumber \right. \right. \\ & \left.\left.
	\qquad \qquad\qquad u^\top (t) R u(t)\right] \df t \right\},
	\label{eq:cost_leqg}
\end{align}
for a fixed $\alpha>0$ and the closed-loop transfer function is 
\begin{align}
	T_{zw}(K) =  \left(C-EK\right)(s\identity-A +B K)^{-1} D.
	\label{eq:leqg_tf}
\end{align} 
The set of all \textit{suboptimal} controllers that robustly stabilizes the linear system against all (finite gain) stable perturbations $\Sigma$, interconnected to the system by $w = \Sigma z$, such that $\|\Sigma\|_\infty \le 1/\gamma$ is
\begin{align}
	\mc{K} = \{\,K: \,\,\lambda_i(A-B_1K) < 0,  \,\, \|T_{zw}(K)\|_\infty < \gamma \}.
	\label{eq:constraint_leqg}
\end{align}


\begin{proposition}{\cite[Th. II.1]{Duncan2013}}
	The optimal control to the LEQG optimization  problem~\eqref{eq:po_leqg_opt} and cost functional ~\eqref{eq:cost_leqg} under $u$ in the infinite-horizon setting is of a linear-in-the-data form \ie $u^\star(t) = -{K}_{leqg}^\star \hat{x}(t) $
	%
	where gain $K_{leqg}^\star = R^{-1}B^\top P_\tau$,	and  $P_\tau$ is the unique, symmetric, positive definite solution to the algebraic Riccati equation (ARE)
	\begin{align} 
		A^\top P_\tau + P_\tau A - P_\tau(B R^{-1} B^\top - \alpha^{-2}DD^\top)P_\tau = - Q.
		\label{eq:GARE}
	\end{align}
\end{proposition}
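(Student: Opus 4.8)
The plan is to recognize \eqref{eq:po_leqg_opt}--\eqref{eq:cost_leqg} as a risk-sensitive (LEQG) problem and reduce it, via a logarithmic (Fleming-type) transformation, to an equivalent deterministic zero-sum differential game, which is then solved by dynamic programming plus a verification argument. Concretely, I would introduce $V(x) := \tfrac{2}{\alpha}\log \inf_{u\in\mc{U}} \mathbb{E}\big[\exp\!\big(\tfrac{\alpha}{2}\int_0^\infty (x^\top Q x + u^\top R u)\,\df t\big)\,\big|\,x(0)=x\big]$ and write the stochastic dynamic-programming (HJB) equation for the exponential criterion. Because Itô's rule differentiates through the exponential, the generator contributes, in addition to the usual term linear in the Hessian of $V$, a term \emph{quadratic in} $\nabla V$ weighted by the noise intensity $DD^\top$; after the log-transformation this reorganizes into the Isaacs equation of a game in which a fictitious adversary $w$ enters the running cost with weight proportional to $\alpha^{-2}$. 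Tracking exactly how the factor $\alpha$ converts the diffusion term into the \emph{indefinite} quadratic term $-P(BR^{-1}B^\top-\alpha^{-2}DD^\top)P$ is the central bookkeeping step of Step~1.

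\noindent\textbf{Step 2: quadratic ansatz and the ARE.} I would then seek a quadratic value $V(x)=x^\top P_\tau x$ (the additive constant coming from the $\Tr$-term is irrelevant to the minimizer). Substituting into the Isaacs equation and performing the inner $\min$ over $u$ and $\max$ over $w$ — each a strictly convex/concave quadratic thanks to $R\succ0$ and the standing smallness of $\alpha$ — produces precisely the algebraic Riccati equation \eqref{eq:GARE}, together with the optimal feedback $u^\star=-R^{-1}B^\top P_\tau x$ and the worst-case disturbance $w^\star=\alpha^{-2}D^\top P_\tau x$. Since for $\alpha>0$ we restrict to state feedback, $\hat x(t)=x(t)$ and no filtering step is needed, so this already yields the claimed form $u^\star(t)=-K_{leqg}^\star\hat x(t)$ with $K_{leqg}^\star=R^{-1}B^\top P_\tau$.

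\noindent\textbf{Step 3: existence/uniqueness of $P_\tau$.} Next I would invoke Assumption~\ref{ass:realizability} — $(A,B)$ stabilizable and $(\sqrt{Q},A)$ detectable — together with the implicit requirement that $\alpha$ lie below the critical value $\alpha^\star$ (equivalently $\gamma$ above $\gamma^\star$) at which the game has finite upper value: under these hypotheses the indefinite ARE \eqref{eq:GARE} has a unique symmetric positive (semi)definite solution $P_\tau$ for which $A-(BR^{-1}B^\top-\alpha^{-2}DD^\top)P_\tau$ is Hurwitz. This is the standard mixed $\htwo/\hinf$ (Başar--Bernhard) solvability result; it simultaneously certifies $K_{leqg}^\star\in\mc{K}$ from \eqref{eq:constraint_leqg}, i.e.\ $A-BK_{leqg}^\star$ Hurwitz and $\|T_{zw}(K_{leqg}^\star)\|_\infty<\gamma$ with $T_{zw}$ as in \eqref{eq:leqg_tf}.

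\noindent\textbf{Step 4: verification.} Finally, a completion-of-squares/verification argument closes the proof: applying Itô's formula to $x(t)^\top P_\tau x(t)$ along the closed loop and combining with the integrand of \eqref{eq:cost_leqg}, one forms $M_t := \exp\!\big[\tfrac{\alpha}{2}\big(\int_0^t (x^\top Q x + u^\top R u)\,\df s + x(t)^\top P_\tau x(t)\big)\big]$ and shows, using \eqref{eq:GARE}, that $M_t$ is a submartingale for every admissible $u$ and a martingale for $u=u^\star$; taking expectations and letting $t\to\infty$ (the Hurwitz property from Step~3 supplies the transversality/integrability needed to discard the terminal term) gives $\mc{J}_{exp}(x_0,u)\ge \mathbb{E}\exp(\tfrac{\alpha}{2}x_0^\top P_\tau x_0 + c)$ with equality at $u^\star$, which proves optimality. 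I expect the main obstacle to be this last step in the infinite-horizon \emph{stochastic} setting: rigorously justifying the interchange of limit and expectation and the vanishing of the boundary contribution (uniform integrability, using the stability margin implied by the $\hinf$ bound), and — upstream — pinning down the exact admissible range of $\alpha$, since finiteness of the risk-sensitive cost and solvability of the indefinite Riccati equation are delicately coupled.
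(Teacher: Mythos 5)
You should first note that the paper offers no proof of this proposition at all: it is imported verbatim as \cite[Th.~II.1]{Duncan2013}, where the argument is a direct completion of squares \emph{inside the exponent} combined with an explicit (Girsanov-type) evaluation of the expectation of an exponential quadratic functional of Brownian motion, rather than your log-transform/Isaacs-equation route. Your route (Jacobson--Whittle--Fleming) is legitimate in principle, but as written it has two concrete problems. First, the bookkeeping you yourself call ``the central step'' of Step~1 does not close: carrying out the logarithmic transformation $\psi=\exp(\tfrac{\alpha}{2}V)$ through It\^o's formula, the gradient-squared term comes out as $\tfrac{\alpha}{4}\nabla V^\top DD^\top\nabla V$, i.e.\ with the ansatz $V=x^\top P x$ one lands on $A^\top P+PA+Q-P\big(BR^{-1}B^\top-\alpha\,DD^\top\big)P=0$, with a coefficient \emph{linear} in $\alpha$ --- not the $\alpha^{-2}DD^\top$ appearing in \eqref{eq:GARE}. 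Matching the two requires an explicit reparameterization (identifying $\alpha$ with $\gamma^{-2}$, as the paper's Lemma~\ref{lm:DGSolution} implicitly does); you flag the issue but never resolve it, so Step~2 does not actually produce \eqref{eq:GARE}.

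Second, the verification in Step~4 fails as stated. Computing the drift of $M_t=\exp[\tfrac{\alpha}{2}(\int_0^t(x^\top Qx+u^\top Ru)\,\df s+x^\top P_\tau x)]$ and using the Riccati equation, the quadratic-in-$x$ terms cancel and the $(u-u^\star)^\top R(u-u^\star)$ term gives the submartingale inequality, but there remains the strictly positive residual $\tfrac{\alpha}{2}\Tr(D^\top P_\tau D)$ coming from the It\^o correction. Hence $M_t$ is \emph{not} a martingale at $u=u^\star$; rather $e^{-\frac{\alpha}{2}\Tr(D^\top P_\tau D)\,t}M_t$ is, and $\mathbb{E}[M_t]$ grows like $e^{\frac{\alpha}{2}\Tr(D^\top P_\tau D)\,t}$. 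Consequently the raw infinite-horizon criterion \eqref{eq:po_leqg_opt} is $+\infty$ for every admissible control, and your plan of ``letting $t\to\infty$ and discarding the terminal term'' cannot work: the obstacle is not merely uniform integrability, as you suggest, but that the infinite-horizon LEQG cost must be understood either through the finite-horizon problem followed by the limit $P_\tau\to P_\infty$ (which is exactly what the paper's Corollary records), or through the normalized growth-rate criterion $\lim_{T\to\infty}\tfrac{2}{\alpha T}\log\mathbb{E}[\cdot]$. Steps~2 and~3 (quadratic ansatz, saddle point, and B\c{a}sar--Bernhard solvability of the indefinite ARE under Assumption~\ref{ass:realizability} and $\alpha$ below the critical level) are sound once the parameterization is fixed, but the proof is not complete without repairing both points above.
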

\begin{corollary}
	In the infinite-horizon time-invariant case with constant system matrices and a stabilizable $(A, B)$, by the theorem on ``limit of monotonic operators"~\cite{Kantarovich} and~\cite[Theorem 9.7]{book_Basar}, we find that $ P^\star \triangleq P_\infty= \lim_{\tau \rightarrow \infty} P_{\tau}$, and $K_{leqg}^\star \triangleq K_\infty= \lim_{\tau \rightarrow \infty} K_{\tau}$.
\end{corollary}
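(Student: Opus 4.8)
The plan is to realize $P_\tau$ as the value matrix of the truncated, horizon-$\tau$ version of the game associated with the cost~\eqref{eq:cost_leqg}, exhibit $\{P_\tau\}_{\tau\ge 0}$ as a monotone nondecreasing and uniformly bounded family in $\bb{S}^n$, invoke the monotone-convergence principle for self-adjoint operators (the ``limit of monotonic operators'' theorem~\cite{Kantarovich}) to produce the limit, pass to the limit in the Riccati flow to identify that limit with the stabilizing solution of the ARE~\eqref{eq:GARE}, and finally transfer convergence to the gains through linearity of $P\mapsto R^{-1}B^\top P$.

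Concretely, I would first write the differential Riccati equation (DRE) underlying the horizon-$\tau$ problem,
\begin{align}
	-\dot P(t) &= A^\top P(t) + P(t)A + Q \nonumber \\
	&\quad - P(t)\big(BR^{-1}B^\top-\alpha^{-2}DD^\top\big)P(t),
	\label{eq:DRE_plan}
\end{align}
with terminal condition $P(\tau)=0$, and set $P_\tau:=P(0;\tau)$. Monotonicity $P_{\tau_1}\preceq P_{\tau_2}$ for $\tau_1\le\tau_2$ and nonnegativity $P_\tau\succeq 0$ then follow from a Riccati comparison argument, or equivalently from the dynamic-programming fact that lengthening the horizon only adds a nonnegative running cost $x^\top Q x+u^\top R u$ to the upper value of the game. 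The step that needs care, relative to the classical LQR development underlying~\cite{Fazel2018}, is the \emph{indefinite} coefficient $BR^{-1}B^\top-\alpha^{-2}DD^\top$ of the quadratic term: the usual completion-of-squares monotonicity estimate no longer applies verbatim, so the comparison must be routed through the game value itself (or through a Riccati comparison theorem that does not presuppose sign-definiteness of that coefficient).

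The main obstacle is the $\tau$-uniform upper bound, which is precisely where the risk-sensitivity/attenuation threshold enters. Under Assumption~\ref{ass:realizability} together with the standing requirement that $\gamma$ stay below $\gamma^\star$ (equivalently, $\alpha$ below its critical value), the admissible set $\mc{K}$ of~\eqref{eq:constraint_leqg} is nonempty; fixing any $\bar K\in\mc{K}$ and letting $\bar P$ be the associated closed-loop cost matrix (the solution of the Lyapunov/Riccati equation under $\bar K$) furnishes a $\tau$-independent majorant, $0\preceq P_\tau\preceq\bar P$ for all $\tau$. This is exactly the content imported from~\cite[Theorem 9.7]{book_Basar}: in the admissible $\alpha$-range the horizon-$\tau$ game is well posed (no finite escape time) and its value stays bounded. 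Monotonicity and boundedness then give $P_\tau\uparrow P^\star\in\bb{S}^n$; since the $\tau$-derivative of $P_\tau$ equals the right-hand side of~\eqref{eq:DRE_plan}, is nonnegative, and $P_\tau$ remains bounded, that derivative converges to zero, so passing to the limit in~\eqref{eq:DRE_plan} shows $P^\star$ solves~\eqref{eq:GARE}; detectability of $(\sqrt Q,A)$ from Assumption~\ref{ass:realizability} then pins $P^\star$ down as the unique symmetric positive semidefinite stabilizing solution, so $P^\star=P_\infty$. Finally $K_\tau=R^{-1}B^\top P_\tau\to R^{-1}B^\top P^\star=K_\infty$ by continuity of the (linear) gain map, which closes the argument.
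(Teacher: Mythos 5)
Your proposal is correct and follows essentially the same route the paper intends: the corollary is justified in the text only by citing the monotone-operator limit theorem of Kantorovich and Theorem 9.7 of Ba\c{s}ar--Bernhard, and your argument is precisely an unpacking of those two ingredients (monotone nondecreasing, uniformly bounded finite-horizon Riccati solutions, with the bound supplied by an admissible $\bar K\in\mc{K}$ under the attenuation threshold, followed by passage to the limit in the Riccati flow and continuity of the gain map). Your added care about the indefinite quadratic coefficient and the identification of the limit as the minimal/stabilizing solution are consistent with the paper's Lemma on the dynamic-game connection, so no gap remains.
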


\begin{remark}
	It is well-known that directly solving the LEQG problem~\eqref{eq:po_leqg_opt} in policy-gradient frameworks incurs biased gradient estimates during iterations; this may affect the preservation of risk-sensitivity in infinite-horizon LTI settings (see ~\cite{zhang2021derivative,Zhang2021}). As such, we introduce a workaround with an equivalent dynamic game formulation to the stochastic LQ PO control problem.
\end{remark}
	
	\begin{lemma}[Closed-loop Two-Player Game Connection]
		Consider the \textit{parameterized soft-constrained} upper value, with a stochastically perturbed noise process $w(t)$, which enters the system dynamics as an additive bounded Gaussian with known statistics\footnote{Since the time derivative of a Brownian process $w(t)$ is $dw(t)$, we maximize over the Gaussian $dw(t)$, rather than the unbounded stochastic noise $w(t)$.}, 
		\begin{align}
			\min_{u \in \mathcal{U}} \max_{\xi \in {W}}
			\mathcal{\bar{J}}_{\gamma}(x_0, u,&\xi) := \mathbb{E}\bigg|_{x_0 \sim \mc{P}_0, \, \xi(t)}\int_{0}^{\infty} \left[x^\top(t) Q x(t) + \right. \nonumber \\ &\left.
			\quad  u^\top(t) R u(t) - \gamma^2 \xi^\top(t) \xi(t)  \right]\df t\nonumber \\
			\text{subject to } \df x(t) &= Ax(t) \df t + Bu(t) \df t + D \xi(t), \nonumber \\
			z(t) &= C x(t) + E u(t)
			\label{eq:LQOptProb}
		\end{align}
		with $\xi (\equiv dw)$ as the zero-mean Gaussian noise with variance $W$(equivalent to $dw/dt$ in \eqref{eq:po_leqg_opt}), scalar $\gamma>0$ denoting the level of disturbance attenuation, and  $x_0$ an arbitrary initial state. Suppose that there exists a non-negative definite (nnd) solution of \eqref{eq:GARE} (with $\alpha$ replaced by $\gamma$), then its minimal realization, $P^\star$, exists.  
		If $(A, Q^{\frac{1}{2}})$ is observable, then every nnd solution  $P^\star$ of \eqref{eq:GARE} is positive definite. For a nnd $P^\star$, there exists a common upper and  lower value for the game and if $\bar{\mc{J}}_\gamma$ is finite for some $\gamma:=\hat{\gamma}>0$, then $\bar{\mc{J}}_\gamma$ is bounded (if and only if the pair $(A, B)$ is stabilizable) and equivalent to the lower value $\underline{\mc{J}}_\gamma$\footnote{The lower value is constructed by reversing the order of play in the value defined in \eqref{eq:LQOptProb}.}.  In addition, for a bounded $\bar{\mc{J}}_\gamma$ for some $\gamma=\hat{\gamma}$ and for  optimal gain matrices, $K^\star = R^{-1}B^\top P_{K,L}$, $L^\star = \gamma^{-2}D^\top P_{K,L}$,
		$\bar{\mc{J}}_\gamma$ admits the following Hurwitz feedback matrices for all $\gamma > \hat{\gamma}$
		\begin{align}
			A_K^\star = A-BK^\star , \, A^\star_{K,L} = A_K^\star + D L^\star
			\label{eq:CLMatrices}
		\end{align}
		where the nnd $P_{K,L}$ is the unique solution of \eqref{eq:GARE} for $\gamma > \hat{\gamma}$ in the class of nnd matrices if it renders $A^\star_{K,L}$ Hurwitz.  Whence, the saddle-point optimal controllers are
		\begin{align}
			u^\star(x(t)) = -K^\star x(t), \,\, \xi^\star (x(t)) = L^\star x(t).
			\label{eq:gains_zero_sum}
		\end{align}
		%
		\label{lm:DGSolution}
	\end{lemma}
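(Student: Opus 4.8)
The plan is to establish the equivalence between the LEQG stochastic control problem and the deterministic two-player zero-sum differential game in several stages, relying on the completion-of-squares technique adapted to the \^{I}to setting and then invoking standard results on the generalized algebraic Riccati equation (GARE).

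First I would rewrite the running cost of \eqref{eq:LQOptProb} by introducing the candidate value function $V(x) = x^\top P^\star x$ with $P^\star$ the nnd solution of \eqref{eq:GARE} (with $\alpha \to \gamma$). Applying \^{I}to's differential rule to $V$ along trajectories of $\df x(t) = Ax \df t + Bu \df t + D\xi$, and using the assumption $DD^\top = 0$ from Assumption \ref{ass:realizability} so that the second-order (trace) term vanishes, one obtains $\df V = (2x^\top P^\star(Ax + Bu + D\xi))\df t$. Adding and subtracting this to the integrand and completing the square in $u$ and in $\xi$ separately, the integrand becomes $|R^{1/2}(u + K^\star x)|^2 - \gamma^2|\xi - L^\star x|^2 + x^\top(A^\top P^\star + P^\star A + Q - P^\star(BR^{-1}B^\top - \gamma^{-2}DD^\top)P^\star)x - \tfrac{\df V}{\df t}$, where the Riccati term vanishes by \eqref{eq:GARE}. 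Integrating from $0$ to $\infty$, taking expectations, and using that $A^\star_{K,L}$ Hurwitz forces $\mathbb{E}[V(x(T))] \to 0$, the cost collapses to $x_0^\top P^\star x_0 + \mathbb{E}\int_0^\infty(|R^{1/2}(u+K^\star x)|^2 - \gamma^2|\xi - L^\star x|^2)\df t$. This shows the saddle-point value is $x_0^\top P^\star x_0$, attained by the pair \eqref{eq:gains_zero_sum}, and simultaneously yields the upper-value/lower-value equality since the same quadratic form is obtained regardless of the order of the min and max (the cross terms decouple).

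Next I would address the structural claims: (a) that observability of $(A, Q^{1/2})$ promotes any nnd solution to positive definiteness --- this follows by a standard Lyapunov argument, since $P^\star x = 0$ for some $x \ne 0$ would force $x$ to lie in an unobservable $A$-invariant subspace, contradicting observability; (b) that stabilizability of $(A,B)$ is equivalent to boundedness of $\bar{\mc J}_\gamma$ for some finite $\hat\gamma$ --- for sufficiency one exhibits a stabilizing $K$ making the closed loop Hurwitz with finite $\htwo$-type cost and notes that increasing $\gamma$ only relaxes the max player; for necessity, an uncontrollable unstable mode makes the cost unbounded below/above under adversarial $\xi$; (c) monotonicity in $\gamma$: for $\gamma > \hat\gamma$ the solution $P_{K,L}$ remains nnd and the closed-loop matrix $A^\star_{K,L}$ stays Hurwitz --- this I would obtain from the monotone-operator / limiting-Riccati argument already cited in the Corollary (via \cite{book_Basar}, Theorem 9.7), observing that the finite-horizon solutions $P_\tau$ increase monotonically and converge, and that the stabilizing property is preserved in the limit. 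The uniqueness of $P_{K,L}$ within the nnd class subject to $A^\star_{K,L}$ being Hurwitz is then the usual argument: the difference of two such solutions satisfies a Lyapunov equation with a Hurwitz generator and nonpositive forcing, forcing it to vanish.

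The main obstacle I anticipate is the rigorous handling of the \^{I}to term and the limit $\mathbb{E}[V(x(T))] \to 0$: one must verify that the stochastic integral $\int_0^T 2x^\top P^\star D \,\df w$ is a genuine martingale (so its expectation vanishes) --- which requires an a priori second-moment bound $\mathbb{E}|x(t)|^2 \le c\,e^{-\beta t}|x_0|^2$ on the closed-loop trajectory, obtainable precisely because $A^\star_{K,L}$ is Hurwitz --- and that the transversality condition holds uniformly under the admissible policy class $\mc{U}$ and disturbance class $W$. A secondary subtlety is that the equivalence with the original LEQG cost \eqref{eq:cost_leqg} goes through the small-noise/large-deviations (Varadhan-type) identification of the exponential-of-integral criterion with the soft-constrained game value, for which I would invoke \cite{Duncan2013} and the classical LEQG--$\hinf$ correspondence rather than reprove it; the footnote's device of maximizing over $\xi \equiv \df w$ rather than $w$ itself is what makes this identification clean. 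Once these analytic points are secured, the remaining algebra is routine.
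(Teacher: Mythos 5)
Your proposal is essentially correct, but it takes a genuinely different (and far more explicit) route than the paper: the paper's entire proof is a one-line deferral to \cite[Th.\ 9.7]{book_Basar}, noting only that the $\gamma^{-1}$ factor in that reference's ARE must be replaced by $\gamma^{-2}$ to match \eqref{eq:GARE}. What you have written out --- the completion of squares in $u$ and $\xi$ around the candidate value $V(x)=x^\top P^\star x$, the cancellation of the quadratic residue via the GARE, the collapse of the cost to $x_0^\top P^\star x_0$ plus two decoupled sign-definite terms, and the resulting saddle point and upper/lower value equality --- is precisely the argument that underlies the cited theorem, so the two approaches are mathematically the same at bottom; yours simply makes it self-contained. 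What your version buys is that it forces the stochastic issues into the open: the It\^o correction $\Tr(D^\top P^\star D)\,\df t$, the martingale property of $\int_0^T 2x^\top P^\star D\,\df w$, and the transversality limit $\mathbb{E}[V(x(T))]\to 0$ are all points the paper's citation silently skips (the cited result is for the deterministic soft-constrained game, and the identification with the LEQG exponential criterion is a separate classical correspondence that the paper also does not spell out). One caution: you dispose of the It\^o trace term by invoking the paper's Assumption \ref{ass:realizability} that $DD^\top=0$, which literally forces $D=0$ and is inconsistent with the paper's later use of $\Tr(D^\top \hat{P}_{K,L}^{p,q}D)$ with $D=I$ in \eqref{eq:sys_trajos}; a cleaner resolution is to keep the trace term, note it is a constant offset $\Tr(D^\top P^\star D)\cdot T$ independent of $(u,\xi)$ over any finite horizon, and observe that it does not affect the saddle-point policies --- only the value. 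With that adjustment your argument is sound and strictly more informative than the paper's.
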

	\begin{proof}
		The proof follows that in~\cite[Th. 9.7]{book_Basar} exactly if we preserve the $\gamma^{-1}$ term in the ARE of equation 9.31 in \cite{book_Basar} and replace it by $\gamma^{-2}$ as we have here.
	\end{proof}
	%
%
For any stabilizing control pair $(K, L)$, if \eqref{eq:gains_zero_sum} is applied to the system in \eqref{eq:LQOptProb}, the resulting cost from \eqref{eq:LQOptProb} is $\bar{J}_\gamma = (x_0^\top P_{K,L} x_0)$~\cite{Kleinman1968}.  In what follows, we present a double-loop iterative solver for the gains $K$ and $L$ in \eqref{eq:gains_zero_sum} -- in model-based and model-free settings. 
	\section{Policy Optimization via Policy Iteration}
\label{sec:model_based}
We now present a special case to Kleinman's policy iteration (PI) algorithm~\cite{Kleinman1968} in a PO setting via a nested double loop PI scheme  when (i) exact models are known; this will provide a barometer for our later analysis when (ii) exact system models are unknown. 

Let $p$ and $q$ be indices of nested iterations between updating the closed-loop minimizing player's controller $K_p$ (in an outer loop) and the maximizing player's controller $L_q(K_p)$ (in an inner-loop) for $p=\{1,2,\ldots, \bar{p}\}$ and $q=\{1,2,\ldots, \bar{q}\}$ for $(\bar{p}, \bar{q}) \in \bb{N}_+$. Furthermore, define the identities  
%
\begin{align}
&A_K^{p} = A-BK_p, \quad  A^{p,q}_{K,L} = A_K^p + D L_q(K_p) , \nonumber \\
& Q_{K}^{p} = Q + K_{p}^\top R K_{p},\,\, A_K^\gamma = A_K^p + \gamma^{-2} D D^\top P_K^p.
\label{eq:identities}
\end{align} 
%
For the soft-constrained value functional \eqref{eq:LQOptProb} at the $p$'th iterate of the minimizing controller $K$ we have the following value iteration form for \eqref{eq:GARE}, 
\begin{subequations}
\begin{align}
	A_K^{p\top} P_K^p &+ P_K^p A_K^p +Q_K^p  + \gamma^{-2} P^p_K DD^\top P^p_K  = 0, \label{eq:riccati_outer_loop_iter_ric} \\
	K_{p+1} &= R^{-1}B^\top P_K^p
	\label{eq:riccati_outer_loop_iter_gain}
\end{align}
\label{eq:riccati_outer_loop_iter}
\end{subequations}
where $P_K^p$ is the $p$'th iterate's solution to \eqref{eq:riccati_outer_loop_iter}.  Similarly, for the maximizing controller, $L_q(K_p)$, the following closed-loop continuous-time ARE (CARE) iteration applies
%
\begin{subequations}
\begin{align}
	&A_{K,L}^{(p,q)^\top} P^{p,q}_{K,L} + P^{p,q}_{K,L}  A_{K,L}^{p,q} + Q_K^p-  {\gamma^{2}} L_{q}^{\top}(K_p) L_{q}(K_p) =0  \\
	&K_{p+1} = R^{-1}B^\top P_K^{p,{q}}, \,\, L_{q+1}(K_p) = \gamma^{-2} D^\top P_{K,L}^{p,q}
	\label{eq:AREOuter} 
\end{align}
\label{eq:riccati_inner_loop_iter}
\end{subequations}
where $P_{K,L}^{p,q}$ is the solution to \eqref{eq:riccati_inner_loop_iter} for gains $[K_p,L_q(K_p)]$. Choosing a stabilizing minimizing player control gain, $K_p$ we first evaluate $u$'s  performance by solving \eqref{eq:riccati_outer_loop_iter}. \textit{This is the policy evaluation step in PI}. The policy is then improved in a following iteration by solving for the cost matrix in \eqref{eq:AREOuter} -- \textit{this is the policy improvement step}. The process can thus be seen as \textit{a policy iteration algorithm where the performance of an initial control gain $K_p$ is first evaluated against a cost function. A newer evaluation of the cost matrix $P_{K,L}^{p,q}$ is then used to improve the controller gain $K_{p+1}$ in the outer loop.}
\begin{problem}[Model-Based Policy Iteration]
Given system matrices $A, B, C, D, E$, find the optimal controller gains $K_p$, $L_q(K_p)$ that robustly stabilizes \eqref{eq:po_leqg_opt} such that the controller gains do not leave the set of all \textit{suboptimal} controllers denoted by
\begin{align}
	\breve{\mc{K}} &= \{(K_p,L_q(K_p)): \lambda_i(A_{K}^{p}) < 0,  \lambda_i(A_{K,L}^{p,q}) < 0, \,\nonumber \\
	&\qquad \|T_{zw}(K_p,L_q(K_p))\|_\infty < \gamma \text{ for all } (p, q) \in \bb{N}\}.
	\label{eq:robo_stable_cond}
\end{align}

\label{prob:mbased}
\end{problem}

\subsection{Double Loop (DL) Successive Substitution}
The procedure for obtaining the optimal $P^\star$ in Problem \ref{prob:mbased} is  described in Algorithm \ref{alg:model_based_design}. It  finds a global Nash Equilibrium (NE) (or equivalently a saddle-point equilibrium)~\cite{Zhang2019} of the LQ zero-sum game \eqref{eq:LQOptProb} by solving the nonlinear ARE \eqref{eq:riccati_inner_loop_iter} in a nested two-loop policy iteration (PI) scheme. 
\begin{algorithm}[t!] 
	\caption{(Model-Based) PO via Policy Iteration		
		\label{alg:model_based_design}}
	\KwIn{Max. outer iteration $\bar{p}$, $q=0$, and an $\epsilon >0$;}
	\KwIn{Desired risk attenuation level $\gamma>0$;}
	\KwData{Minimizing player's control matrix $R \succ 0$.}\
	Compute $(K_0, L_0) \in \mathcal{K}$; \Comment{{From \cite[Alg. 1]{LekanIFAC}}}\;
	Set  $P^{0,0}_{K,L} = Q^0_K$; \Comment{See equation  \eqref{eq:identities}}\;
	\For{$p=0,\ldots, \bar{p}$}
	{   
		Compute $Q^{p}_{K}$ and $A_K^{p}$ \Comment{See equation \eqref{eq:identities}}\;
		Obtain $P_K^p$ by evaluating $K_p$ on \eqref{eq:riccati_outer_loop_iter}\;
		\While{$\|P^{p}_{K} - P^{{p}, {q}}_{K,L}\|_F \le \epsilon$}
		{\label{alg:mb::inner_loop_start}
			Compute $L_{q+1}(K_p):= \gamma^{-2}D^\top P^{p,q}_{K,L}$\;
			\label{line:alg1::rl1}
			%
			Solve \eqref{eq:riccati_inner_loop_iter} until $\|P^{p}_{K} - P^{{p}, {q}}_{K,L}\|_F \le \epsilon$\;
			\label{alg:model_based_design::gare_solver}
			$\bar{q} \leftarrow q +1$
		}
		\label{alg:mb::inner_loop_end}
		Compute $K_{p+1} = R^{-1}B^\top P^{p,\bar{q}}_{K,L}$ \Comment{See \eqref{eq:AREOuter}} \;
		\label{alg:mb_based::gain_improv}
	}
\end{algorithm}

An initial $(K_0, L_0)$ control pair that guarantees the iterates'  feasibility upon projection onto the set $\breve{\mc{K}}$ is first determined in order to enforce the condition \eqref{eq:robo_stable_cond}. We refer readers to our recent conference paper~\cite{LekanIFAC} where this procedure is further elaborated\footnote{We remark that this can also be found by solving a convex optimization problem if the Riccati equation is formulated as a linear matrix inequality~\cite{boyd1994linear}.}. Afterwards,  the Riccati equation's \eqref{eq:riccati_inner_loop_iter} solution \ie $P_{K,L}^{0,0} \triangleq Q_K^0$ must be computed and the gains $[K_p, L_q(K_p)]$ are updated as in \eqref{eq:AREOuter}. \textit{As an RL-based PO procedure, Line \ref{alg:mb_based::gain_improv} in Alg. \ref{alg:model_based_design}  can be seen as a reinforcement over the iterates $\{p, p+1\}$}. 
After the last update of the inner loop maximizing player gain $L_{\bar{q}}(K_p)$, the outer-loop update of the minimizing controller gain $K_p$ robustly stabilizes the closed-loop transfer function from $w$ to $z$ \ie $T_{zw}(K_p,L_q(K_p))$ under gains $K_p$ and $L_{\bar{q}}(K_p)$ against all (finite gain) disturbances $\Sigma$ interconnected to system \eqref{eq:LQOptProb} (by $w = \Sigma z$) such that $\|\Sigma\|_\infty \le \gamma^{-2}$. 

\subsection{Outer Loop Stability, Optimality, and Convergence}

We now  discuss the convergence guarantees of the iterations under perfect dynamics. 
Let us first introduce the preliminary results. 
%

%
\begin{lemma}
Under Assumption \ref{ass:realizability} and for the ARE \eqref{eq:riccati_outer_loop_iter}, if $K_0 \in \mathcal{K}$\footnote{The $\mc{K}$ defined here refers to the one defined in \eqref{eq:constraint_leqg}.}, then for any $p \in \mathbb{N}_+$, we must have the following conditions for the optimal $K^\star$ and $P^\star$,
\begin{enumerate}[(1)]
	\item $K_p \in \mathcal{K}$; \label{lm:cvg::Kp}
	\item $P_{K}^0 \succeq P_{K}^1 \succeq \cdots P_{K}^p \succeq \cdots \succeq P^{\star}$; 
	\item $\lim_{p \rightarrow \infty}\norm{K_p - K^*}_F = 0$, $\lim_{p \rightarrow \infty}\norm{P_{K}^p - P^*}_F = 0$.
\end{enumerate}
\label{lm:oloop_iter}
\end{lemma}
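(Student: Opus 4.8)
The plan is to prove (1)--(3) jointly by induction on $p$, via a monotone (Riccati‑comparison) argument adapted to the indefinite, $\hinf$‑type equation \eqref{eq:riccati_outer_loop_iter_ric}. Throughout I would read $P_K^p$ as the \emph{stabilizing} solution of \eqref{eq:riccati_outer_loop_iter_ric}, \ie the unique nnd solution for which the worst‑case closed loop $\hat A_p:=A_K^p+\gamma^{-2}DD^\top P_K^p$ is Hurwitz (this is exactly what the inner loop returns), and I would strengthen the induction hypothesis to also carry the assertions $K_p\in\mc{K}$, $P_K^p\succ0$, and $\hat A_p$ Hurwitz. For the base case, $K_0\in\mc{K}$ gives $A_K^0$ Hurwitz and $\|T_{zw}(K_0)\|_\infty<\gamma$, so the strict bounded real lemma supplies a nnd stabilizing $P_K^0$; casting \eqref{eq:riccati_outer_loop_iter_ric} in Lyapunov form and using $Q_K^0\succeq Q\succ0$ (Assumption~\ref{ass:realizability}) then forces $P_K^0=\int_0^\infty e^{(A_K^0)^\top t}\big(Q_K^0+\gamma^{-2}P_K^0DD^\top P_K^0\big)e^{A_K^0 t}\,dt\succ0$.

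For the inductive step I would set $K_{p+1}=R^{-1}B^\top P_K^p$ as in \eqref{eq:riccati_outer_loop_iter_gain}, substitute $A_K^p=A_K^{p+1}+B(K_{p+1}-K_p)$ into \eqref{eq:riccati_outer_loop_iter_ric}, use $B^\top P_K^p=RK_{p+1}$, and complete the square in $R$, arriving at the identity
\[
(A_K^{p+1})^\top P_K^p+P_K^p A_K^{p+1}+Q_K^{p+1}+\gamma^{-2}P_K^pDD^\top P_K^p=-(K_{p+1}-K_p)^\top R(K_{p+1}-K_p)\preceq0 .
\]
Dropping the psd term $\gamma^{-2}P_K^pDD^\top P_K^p$ leaves $(A_K^{p+1})^\top P_K^p+P_K^p A_K^{p+1}\preceq-Q\prec0$, so $A_K^{p+1}$ is Hurwitz by Lyapunov's theorem (using $P_K^p\succ0$); the leftover slack $Q\succ0$ then lets the non‑strict inequality be perturbed into a \emph{strict} bounded‑real certificate for $K_{p+1}$, whence $\|T_{zw}(K_{p+1})\|_\infty<\gamma$ and $K_{p+1}\in\mc{K}$. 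Since $P_K^p$ is thus a nnd supersolution of the $K_{p+1}$‑Riccati whose worst‑case closed loop is stable, I would invoke the comparison/monotonicity theory for Riccati equations (cf.\ \cite{Kantarovich}, \cite[Ch.~9]{book_Basar}) to obtain the nnd stabilizing solution $P_K^{p+1}$ with $0\preceq P_K^{p+1}\preceq P_K^p$ and $\hat A_{p+1}$ Hurwitz; $P_K^{p+1}\succ0$ follows as in the base case. This closes (1) and the descending chain in (2).

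For the lower bound in (2) I would use optimality of the game value: for any $x_0$, the stationary feedback $u=-K_p x$ is admissible and keeps \eqref{eq:LQOptProb} finite (because $K_p\in\mc{K}$), with worst‑case payoff $\sup_{\xi\in W}\bar{\mc{J}}_\gamma(x_0,-K_p x,\xi)=x_0^\top P_K^p x_0$, whereas $\inf_{u\in\mc{U}}\sup_{\xi\in W}\bar{\mc{J}}_\gamma(x_0,u,\xi)=x_0^\top P^\star x_0$ by Lemma~\ref{lm:DGSolution}; since restricting the minimizer cannot lower the value, $P_K^p\succeq P^\star$ for all $p$. For (3), the chain $P_K^0\succeq P_K^1\succeq\cdots\succeq P^\star\succ0$ is monotone and bounded below, so $v^\top P_K^p v$ converges for every $v$ and, by polarization, $P_K^p\to P^\infty$ in $\|\cdot\|_F$ with $P^\infty\succeq P^\star\succ0$, hence $K_{p+1}=R^{-1}B^\top P_K^p\to K^\infty:=R^{-1}B^\top P^\infty$. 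Letting $p\to\infty$ in \eqref{eq:riccati_outer_loop_iter} shows $P^\infty$ solves the GARE \eqref{eq:GARE}, and, being the monotone limit of stabilizing iterates with $P^\infty\succ0$, it coincides with the stabilizing solution $P^\star$ by the limit‑of‑monotone‑operators argument underlying the Corollary (\cite{Kantarovich}, \cite[Th.~9.7]{book_Basar}); then $K^\infty=R^{-1}B^\top P^\star=K^\star$, which yields $\|P_K^p-P^\star\|_F\to0$ and $\|K_p-K^\star\|_F\to0$.

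The hard part will be the inductive feasibility claim $K_{p+1}\in\mc{K}$: because $BR^{-1}B^\top-\gamma^{-2}DD^\top$ is indefinite, \eqref{eq:riccati_outer_loop_iter_ric} is neither a convex nor a concave Riccati operator, so the classical LQR Kleinman monotonicity does not carry over verbatim. One must propagate through the induction that $P_K^p$ is the stabilizing solution (with $\hat A_p$ Hurwitz), and upgrade a non‑strict supersolution inequality into the strict bounded‑real condition needed to re‑enter $\mc{K}$; both steps hinge on careful use of the bounded real lemma and of Riccati comparison in the indefinite regime, and the same subtlety resurfaces when pinning the monotone limit in (3) down to $P^\star$ rather than to some other nnd solution of \eqref{eq:GARE}.
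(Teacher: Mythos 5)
Your proposal is correct in outline and, for statement (1), follows essentially the same route as the paper: the same completed-square identity
\begin{align*}
(A_K^{(p+1)})^\top P_K^p + P_K^p A_K^{(p+1)} + Q_K^{p+1} + \gamma^{-2}P_K^p DD^\top P_K^p = -(K_{p+1}-K_p)^\top R(K_{p+1}-K_p),
\end{align*}
followed by the bounded real lemma to certify $K_{p+1}\in\mathcal{K}$ (this is exactly the paper's \eqref{eq:riccati_outer_converge_lm1}--\eqref{eq:lemma2_st1}). You diverge in two substantive places. First, for the monotone decrease in (2) the paper subtracts the $(p{+}1)$st Riccati equation from the $p$th and kills the indefinite difference $\gamma^{-2}(P_K^pDD^\top P_K^p - P_K^{p+1}DD^\top P_K^{p+1})$ by invoking its assumption $DD^\top=0$, reducing to a pure Lyapunov equation; you instead read the displayed identity as saying that $P_K^p$ is a nonnegative supersolution of the $K_{p+1}$-Riccati equation and appeal to Riccati comparison, which needs no such assumption and is the more robust argument in the indefinite regime. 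Second, you supply an explicit game-theoretic lower bound $P_K^p\succeq P^\star$ (restricting the minimizer cannot decrease the upper value), whereas the paper only bounds the sequence below by $0$ and identifies the limit afterwards; your version delivers the full chain in (2) as stated. One caution: the ``hard part'' you flag --- upgrading the non-strict inequality above to the \emph{strict} bounded-real certificate needed for $\|T_{zw}(K_{p+1})\|_\infty<\gamma$ --- is genuinely delicate and is not actually discharged by ``leftover slack from $Q$,'' since $Q$ is already consumed inside $Q_K^{p+1}$ and the right-hand side is only $\preceq 0$; note that the paper's own fix, asserting $(K_{p+1}-K_p)^\top R(K_{p+1}-K_p)\succ0$ whenever $K_{p+1}\neq K_p$, cannot hold when $m<n$ (that matrix has rank at most $m$), so this step requires the finer implicit-regularization argument of \cite{Zhang2021} in either treatment.
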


We provide a straightforward proof (with the robustness operator $\gamma$ of \eqref{eq:riccati_outer_loop_iter}) in the appendix. An alternate proof is given in \cite{Kleinman1968}. 
%
%
In \cite[Theorem A.7 and A.8]{Zhang2021}, the authors showed that the controller update phase in the outer-loop has a global sub-linear and local quadratic convergence rates. We now demonstrate that the outer-loop iteration has a global linear convergence rate. Let us first establish a few preliminary results that we will need in the proof of our main result.
\begin{lemma}
Let $\Psi =(K_{p+1}- K_p)^\top R (K_{p+1}-K_p)$; and $\Psi= \Psi^\top \succeq 0$. Furthermore, let $\Phi \in \mathbb{R}^{n \times n}$ be Hurwitz so that  $\Theta = \int_{0}^{\infty} e^{(\Phi^\top t)} \Psi e^{(\Phi t)} dt$ and define $c(\Phi) = {\log(5/4)}{\norm{\Phi}}^{-1}$. Then, $\norm{\Theta} \geq \frac{1}{2} c(\Phi) \norm{\Psi}.$
\label{lm:mb_upper_bd}
\end{lemma}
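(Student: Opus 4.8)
The plan is to obtain the lower bound by evaluating the quadratic form $v^\top \Theta v$ at a unit eigenvector $v$ of $\Psi$ corresponding to its largest eigenvalue (which equals $\norm{\Psi}$ since $\Psi = \Psi^\top \succeq 0$), and then truncating the defining integral to a short time window on which $e^{\Phi t}$ provably stays within operator-norm distance $1/4$ of the identity. The Hurwitz hypothesis on $\Phi$ is needed only to ensure $\Theta$ is well-defined; the lower bound itself comes from the truncated integral and does not use decay.

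Concretely, write $\Psi = \Psi^{1/2}\Psi^{1/2}$ with $\Psi^{1/2} \succeq 0$, and record the two facts $\norm{\Psi^{1/2}} = \sqrt{\norm{\Psi}}$ and $\norm{\Psi^{1/2} v} = \sqrt{v^\top \Psi v} = \sqrt{\norm{\Psi}}$ (the second because $v$ is a top eigenvector). Since $\Theta \succeq 0$ and $v^\top e^{\Phi^\top t}\Psi e^{\Phi t} v = \norm{\Psi^{1/2} e^{\Phi t} v}^2$,
\[
\norm{\Theta} \;\ge\; v^\top \Theta v \;=\; \int_0^\infty \norm{\Psi^{1/2} e^{\Phi t} v}^2 \, dt \;\ge\; \int_0^{c(\Phi)} \norm{\Psi^{1/2} e^{\Phi t} v}^2 \, dt .
\]
On the window $t \in [0, c(\Phi)]$ I would bound the perturbation of $e^{\Phi t}$ from $I$ via $e^{\Phi t} - I = \int_0^t \Phi e^{\Phi s}\,ds$, which gives $\norm{e^{\Phi t} - I} \le e^{\norm{\Phi} t} - 1 \le e^{\log(5/4)} - 1 = \tfrac14$ precisely because $c(\Phi) = \log(5/4)/\norm{\Phi}$. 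Then the reverse triangle inequality yields, for all such $t$,
\[
\norm{\Psi^{1/2} e^{\Phi t} v} \;\ge\; \norm{\Psi^{1/2} v} - \norm{\Psi^{1/2}}\,\norm{(e^{\Phi t} - I)v} \;\ge\; \sqrt{\norm{\Psi}}\Bigl(1 - \tfrac14\Bigr) \;=\; \tfrac34\sqrt{\norm{\Psi}},
\]
and substituting back gives $\norm{\Theta} \ge \tfrac{9}{16}\, c(\Phi)\,\norm{\Psi} \ge \tfrac12\, c(\Phi)\,\norm{\Psi}$, as claimed.

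There is no real obstacle in this argument; the only two points requiring care are (i) choosing $v$ to be a maximizing eigenvector of $\Psi$ so that $\norm{\Psi^{1/2} v}$ actually attains $\norm{\Psi^{1/2}} = \sqrt{\norm{\Psi}}$ (an arbitrary direction would be too weak), and (ii) stopping the integration exactly at $c(\Phi)$, which is calibrated so that the matrix exponential has not yet drifted far enough from $I$ to destroy the gain $\norm{\Psi^{1/2} v}$. The specific constant $\log(5/4)$ is an arbitrary-looking but convenient choice ensuring the slack factor $\tfrac{9}{16}$ comfortably exceeds $\tfrac12$.
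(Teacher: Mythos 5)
Your proof is correct and follows essentially the same route as the paper's: both pick a top eigenvector of $\Psi$, truncate the integral to $[0,c(\Phi)]$, and use the series bound $\norm{e^{\Phi t}-I}\le e^{\norm{\Phi}t}-1\le \tfrac14$ to retain the factor $(3/4)^2=\tfrac{9}{16}\ge\tfrac12$. Your write-up actually makes explicit the reverse-triangle-inequality step that the paper compresses into ``it can be verified.''
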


\begin{proof}
Define $S(t) ={\sum_{k=1}^{\infty} (\Phi t)^k}/{k!}$ so that $	e^{\Phi t} = I_n +{\sum_{k=1}^{\infty} (\Phi t)^k}/{k!} \triangleq I_n + S(t)$ after  a Taylor series expansion.
Whence $	\norm{S(t)} = \sum_{k=1}^{\infty} (\norm{\Phi} t)^k/{k!}$ or $\norm{S(t)} \ge e^{\norm{\Phi}t} - 1$.
	%
	For $x_0 \neq 0$ satisfying $x_0^\top \Psi  x_0 = \norm{\Psi}\norm{x_0}^2$, it can be verified that
	\begin{align}
			x_0^\top \Theta x_0 & = \int_{0}^{\infty}x_0^\top e^{\Phi^\top} \Psi e^{\Phi}x_0 \df t \geq \int_{0}^{c(\Phi)}x_0^\top e^{\Phi^\top} \Psi e^{\Phi}x_0 \df t, \nonumber \\
			&\ge \int_{0}^{c(\Phi)} \frac{1}{2}\norm{\Psi}\norm{x_0}^2 \df t 
			\geq \frac{1}{2}c(\Phi)\norm{\Psi}\norm{x_0}^2.
		\label{eq:bound_on_E}
	\end{align}
	\textit{A fortiori}, Lemma \ref{lm:mb_upper_bd}'s proof follows from \eqref{eq:bound_on_E}.
\end{proof}

\begin{remark}
	For $A_K = A-BK$, we know from the bounded real Lemma~\cite[Lemma A.1]{Zhang2021} that the Riccati equation 
	\begin{align}
		A_K^\top P_K + P_K A_K + Q_K + \gamma^{-2} P_K DD^\top P_K = 0
		\label{eq:brlemma2}
	\end{align}
	admits a unique positive definite solution $P_K \succ 0$ with a Hurwitz  $(A_K + \gamma^{-2} DD^\top P_K)$.
\end{remark}

\begin{lemma}[Optimality of the iteration]
	Consider any $K \in \mathcal{K}$, let $K^\prime= R^{-1}B^\top P_K$ (where $P_K$ is the solution to \eqref{eq:brlemma2}, and $\Psi_K= (K - K^\prime)^\top R(K-K^\prime)$. If $\Psi_K=0$, then $K = K^\star$.
	\label{lm:EK=0}
\end{lemma}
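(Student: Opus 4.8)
The plan is to turn the hypothesis $\Psi_K = 0$ into the algebraic identity $K = R^{-1}B^\top P_K$, substitute this into the closed-loop Riccati equation \eqref{eq:brlemma2}, observe that it collapses exactly to the GARE \eqref{eq:GARE} (with $\alpha$ replaced by $\gamma$), and then invoke uniqueness of the stabilizing positive-definite solution of that GARE to identify $P_K = P^\star$, whence $K = K^\star$.

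First, since $R \succ 0$, the quadratic form $\Psi_K = (K-K^\prime)^\top R (K-K^\prime) = 0$ forces $K = K^\prime = R^{-1}B^\top P_K$, equivalently $B^\top P_K = R K$. Next I would substitute $A_K = A - BK$ and $Q_K = Q + K^\top R K$ into \eqref{eq:brlemma2}. Using $B^\top P_K = RK$ one has $K^\top B^\top P_K = P_K B K = K^\top R K = P_K B R^{-1} B^\top P_K$, so the $K^\top R K$ contributions from the cross terms and from $Q_K$ telescope, and \eqref{eq:brlemma2} becomes
\begin{align}
  A^\top P_K + P_K A - P_K B R^{-1} B^\top P_K + \gamma^{-2} P_K DD^\top P_K + Q = 0, \nonumber
\end{align}
which is precisely \eqref{eq:GARE} with $\alpha \mapsto \gamma$. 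Moreover $P_K \succ 0$ by the bounded real Lemma recalled in the Remark following Lemma \ref{lm:mb_upper_bd}.

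It remains to verify that $P_K$ is the \emph{stabilizing} solution of this GARE. The associated closed-loop matrix is $A - B R^{-1} B^\top P_K + \gamma^{-2} DD^\top P_K = A_K + \gamma^{-2} DD^\top P_K$, which is Hurwitz, again by the bounded real Lemma. Under Assumption \ref{ass:realizability} ($(A,B)$ stabilizable and $(\sqrt{Q},A)$ detectable), the Proposition together with Lemma \ref{lm:DGSolution} and its Corollary guarantee that the GARE admits a unique positive-definite solution rendering the closed-loop matrix Hurwitz, namely $P^\star$. Hence $P_K = P^\star$, and therefore $K = R^{-1}B^\top P_K = R^{-1}B^\top P^\star = K^\star$. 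The only point requiring care is this last step: one must check that $P_K$ meets exactly the hypotheses (positive definiteness and the Hurwitz closed-loop property) under which $P^\star$ is characterized as the unique solution; the intervening algebra is routine.
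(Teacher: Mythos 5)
Your proof is correct, and it is in fact more complete than the paper's own argument. The paper's proof is a three-line sketch: it observes that $R\succ 0$ forces $K=K'$, asserts that this implies $P_K=P_{K'}$, and then simply states that ``it suffices to conclude'' $K=K^\star$ --- i.e., it appeals to the fixed-point property of the policy-improvement map without justifying why the fixed point must be the optimal gain. You supply exactly the missing justification: substituting $B^\top P_K = RK$ into \eqref{eq:brlemma2} and cancelling the $K^\top RK$ terms collapses the closed-loop Riccati equation to the GARE \eqref{eq:GARE} (with $\alpha\mapsto\gamma$), and the bounded real Lemma (the Remark after Lemma \ref{lm:mb_upper_bd}) gives both $P_K\succ 0$ and Hurwitzness of $A_K+\gamma^{-2}DD^\top P_K$, which is the closed-loop matrix $A-BR^{-1}B^\top P_K+\gamma^{-2}DD^\top P_K$ of the GARE. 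Uniqueness of the stabilizing nonnegative-definite solution (Lemma \ref{lm:DGSolution} and the Proposition) then pins down $P_K=P^\star$. Your algebra checks out, and your closing caveat --- that one must verify $P_K$ satisfies precisely the hypotheses under which $P^\star$ is characterized as unique --- is exactly the point the paper glosses over. The only cosmetic issue is that the paper never defines $P_{K'}$ or $P'_K$ in the lemma statement, so your route, which avoids that object entirely, is arguably cleaner.
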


\begin{lemma}[Bound on Cost Difference Matrix]
	For any $h>0$, define $\mathcal{K}_h := \{K \in \mathcal{K}| \Tr(P_K^p - P^\star) \leq h \}$. For any $K \in \mathcal{K}_h$, let $K^\prime := R^{-1}B^\top P_K^p$, where $P_K^p$ is the $p$'th iterate solution to \eqref{eq:brlemma2}, and $\Psi_{K_p}= (K_p - K_p^\prime)^\top R(K_p-K_p^\prime)$. Then, there exists $b(h)>0$, such that $\norm{P_K^p - P^\star}_F \leq b(h) \norm{\Psi_{K_p}}_F$.
	\label{lm:bound_EK}
\end{lemma}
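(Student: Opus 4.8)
The plan is to derive a Lyapunov equation for the suboptimality gap $E := P_K^p - P^\star$, rewrite it as a Gramian (integral) representation using the Hurwitz property of the optimal closed loop, and then absorb the term that is quadratic in $E$ by invoking the sublevel-set constraint, so that the estimate closes to $\norm{P_K^p - P^\star}_F \le b(h)\norm{\Psi_{K_p}}_F$.

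First, starting from the Riccati equation \eqref{eq:brlemma2} satisfied by $P_K^p$ (with $K = K_p$), I would expand $A_K = A - BK$ and $Q_K = Q + K^\top R K$ and complete the square in $K$; since $K' = R^{-1}B^\top P_K^p$, this turns \eqref{eq:brlemma2} into $A^\top P_K^p + P_K^p A + Q - P_K^p N P_K^p + \Psi_{K_p} = 0$, where $N := B R^{-1} B^\top - \gamma^{-2} D D^\top$ and $\Psi_{K_p} = (K_p - K')^\top R (K_p - K')$ exactly as in the statement. Subtracting the GARE \eqref{eq:GARE} (with $\alpha$ replaced by $\gamma$) solved by $P^\star$, writing $P_K^p = P^\star + E$, and collecting terms gives $(\hat A^\star)^\top E + E \hat A^\star - E N E + \Psi_{K_p} = 0$, with $\hat A^\star := A - B K^\star + \gamma^{-2} D D^\top P^\star$ the optimal closed-loop matrix; this is Hurwitz by the bounded real lemma recalled in the Remark following Lemma \ref{lm:mb_upper_bd} applied at $K = K^\star$. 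Hence $E = \int_0^\infty e^{(\hat A^\star)^\top t}\bigl(\Psi_{K_p} - E N E\bigr) e^{\hat A^\star t}\, dt$.

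Next I would control $E N E$. By Lemma \ref{lm:oloop_iter}(2), $E \succeq 0$, and $K \in \mathcal{K}_h$ means $\Tr(E) \le h$, so $0 \preceq E \preceq h I$ and therefore $E^2 \preceq h E$; combined with $N \succeq -\gamma^{-2} D D^\top$ this yields $-E N E \preceq \gamma^{-2} E D D^\top E \preceq \gamma^{-2}\norm{D}^2 h\, E$, so $\Psi_{K_p} - E N E \preceq \Psi_{K_p} + \gamma^{-2}\norm{D}^2 h\, E$. Substituting into the integral representation, using monotonicity of the linear map $W \mapsto \int_0^\infty e^{(\hat A^\star)^\top t} W e^{\hat A^\star t}\,dt$ and of the Frobenius norm with respect to the positive semidefinite order, together with the estimate $\bigl\|\int_0^\infty e^{(\hat A^\star)^\top t} W e^{\hat A^\star t}\,dt\bigr\|_F \le \ell \norm{W}_F$ where $\ell := \int_0^\infty \norm{e^{\hat A^\star t}}^2\, dt < \infty$ is a fixed constant ($\hat A^\star$ does not depend on $K$), I get $\norm{E}_F \le \ell \norm{\Psi_{K_p}}_F + \gamma^{-2}\norm{D}^2 h\, \ell\, \norm{E}_F$. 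As long as $h$ is below the fixed threshold $\gamma^2/(\norm{D}^2 \ell)$, rearranging gives the claim with $b(h) := \ell\,(1 - \gamma^{-2}\norm{D}^2 h\, \ell)^{-1}$. (Alternatively one may avoid an explicit constant: $\mathcal{K}_h$ is compact by coercivity of the LQ value, the maps $K \mapsto \norm{P_K^p - P^\star}_F$ and $K \mapsto \norm{\Psi_{K_p}}_F$ are continuous and, by Lemma \ref{lm:EK=0}, vanish only at $K^\star$, and a local expansion at $K^\star$ shows both behave like $\norm{K - K^\star}^2$ there, so their ratio is bounded on $\mathcal{K}_h$ and its supremum serves as $b(h)$.)

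The main obstacle is the term $E N E$ coming from the quadratic nonlinearity $P N P$ of the Riccati operator: it reintroduces $E$ on the right-hand side, so the bound does not close with a single Gramian estimate as in the pure LQR case. Resolving this needs both the ordering $P_K^p \succeq P^\star$ (Lemma \ref{lm:oloop_iter}) to make $E N E$ one-sided and the sublevel-set bound $\Tr(P_K^p - P^\star) \le h$ to render the remaining feedback term a contraction -- which is precisely why the constant must depend on $h$ and why one has to track the threshold beyond which $b(h)$ blows up. A minor additional care is the sign-indefiniteness of $N$, handled by the one-sided bound $-N \preceq \gamma^{-2} D D^\top$ rather than attempting to use $N \succeq 0$.
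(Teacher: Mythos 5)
Your derivation of the quadratic matrix equation $(\hat A^\star)^\top E + E\hat A^\star - ENE + \Psi_{K_p} = 0$ is exactly the equation the paper arrives at (it is \eqref{eq:Aopt(PK-Popt)} once one recognizes $(K_p'-K^\star)^\top R(K_p'-K^\star) = E\,BR^{-1}B^\top E$), but from there the two arguments diverge. The paper absorbs the quadratic term into the coefficient, rewriting the identity as the exactly linear Sylvester equation $A^{\star\top}E + E\left(A - BR^{-1}B^\top P_K^p + \gamma^{-2}DD^\top P_K^p\right) + \Psi_{K_p} = 0$, vectorizes it, and obtains $\norm{E}_F \le \sgmin^{-1}(\mathcal{A}(K_p))\norm{\Psi_{K_p}}_F$ on a neighborhood of $K^\star$ where the Sylvester operator is invertible; away from $K^\star$ it uses compactness of $\mathcal{K}_h$ together with Lemma \ref{lm:EK=0} to bound $\norm{\Psi_{K_p}}_F$ below and $\norm{E}_F$ above. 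You instead keep the Hurwitz matrix $\hat A^\star$ on both sides and treat $-ENE$ as a perturbation, using $0 \preceq E \preceq hI$ (hence $E^2 \preceq hE$) and $-N \preceq \gamma^{-2}DD^\top$ to close a contraction estimate. Your route is more quantitative and self-contained --- it produces an explicit $b(h)$ and needs no implicit-function or compactness machinery --- and its individual steps are sound, including the passage from the Loewner order to the Frobenius norm via eigenvalue monotonicity.

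The caveat is that your main argument only establishes the lemma for $h < \gamma^2/(\norm{D}^2\ell)$, whereas the statement (and its use in Theorem \ref{thm:oloop_converge_rate}) requires every $h>0$; for large $h$ you fall back on the parenthetical compactness argument, which is essentially the paper's two-region proof but is only sketched. In particular, the claim that the ratio $\norm{E}_F/\norm{\Psi_{K_p}}_F$ stays bounded near $K^\star$ because ``both behave like $\norm{K_p-K^\star}^2$'' requires the nondegenerate lower bound $\norm{\Psi_{K_p}}_F \gtrsim \norm{K_p - K^\star}^2$, which in turn needs $K_p - K_p' = (K_p - K^\star) + O(\norm{K_p - K^\star}^2)$; the paper's Sylvester-operator invertibility delivers exactly this step. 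If you supply that estimate (or simply adopt the paper's two-region split for general $h$ and retain your contraction bound as the explicit small-$h$ refinement), the proof is complete.
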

\begin{theorem}
	For any $h>0$ and $K_0 \in \mathcal{K}_h$, there exists $\alpha(h) \in \bb{R}$ 
	such that $\Tr(P_{K}^{p+1} - P^\star) \leq \alpha(h) \Tr(P_{K}^p - P^\star)$. That is, $P^\star$ is an exponentially stable equilibrium.
	\label{thm:oloop_converge_rate}
\end{theorem}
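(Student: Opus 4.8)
The plan is to bound the one-step decrease of the trace residual $\Tr(P_K^p-P^\star)$ below by a fixed fraction of the residual itself, which yields geometric (hence exponential) decay. Throughout, by Lemma~\ref{lm:oloop_iter} we have $K_p\in\mathcal{K}$ for all $p$, $P^\star\preceq P_K^{p+1}\preceq P_K^p\preceq P_K^0$, and $\Tr(P_K^p-P^\star)\le h$; write $\Delta_p:=P_K^p-P_K^{p+1}\succeq 0$. If $\Psi_{K_p}=0$ at some step, then Lemma~\ref{lm:EK=0} gives $K_p=K^\star$, whence $P_K^p=P^\star$ and the claimed bound is trivial; so assume $\Psi_{K_p}\neq 0$.

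\emph{Step 1 (Lyapunov characterization of $\Delta_p$).} Starting from \eqref{eq:riccati_outer_loop_iter_ric} at index $p$, substitute $B^\top P_K^p=RK_{p+1}$ from \eqref{eq:riccati_outer_loop_iter_gain} and complete the square in $K_p$ to re-expand the $p$-th Riccati equation about the closed-loop matrix $A-BK_{p+1}$, obtaining
\[(A-BK_{p+1})^\top P_K^p+P_K^p(A-BK_{p+1})+Q_K^{p+1}+\Psi_{K_p}+\gamma^{-2}P_K^p DD^\top P_K^p=0.\]
Subtract from this \eqref{eq:riccati_outer_loop_iter_ric} written at index $p+1$ (whose solution is exactly $P_K^{p+1}$), use $P_K^p DD^\top P_K^p-P_K^{p+1}DD^\top P_K^{p+1}=\Delta_p DD^\top P_K^p+P_K^{p+1}DD^\top\Delta_p$, and regroup; writing $A_K^{\gamma,p+1}:=A-BK_{p+1}+\gamma^{-2}DD^\top P_K^{p+1}$ (Hurwitz by the bounded-real lemma of the Remark, since $K_{p+1}\in\mathcal{K}$) this collapses to
\[(A_K^{\gamma,p+1})^\top\Delta_p+\Delta_p A_K^{\gamma,p+1}+\gamma^{-2}\Delta_p DD^\top\Delta_p+\Psi_{K_p}=0.\]
Since $\gamma^{-2}\Delta_p DD^\top\Delta_p\succeq 0$ and $A_K^{\gamma,p+1}$ is Hurwitz, this forces $\Delta_p\succeq\Theta_p:=\int_0^\infty e^{(A_K^{\gamma,p+1})^\top t}\Psi_{K_p}e^{A_K^{\gamma,p+1} t}\,dt\succeq 0$, hence $\norm{\Delta_p}\ge\norm{\Theta_p}$.

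\emph{Step 2 (chaining the lemmas and uniformity).} Apply Lemma~\ref{lm:mb_upper_bd} with $\Phi=A_K^{\gamma,p+1}$ and $\Psi=\Psi_{K_p}$ to get $\norm{\Theta_p}\ge\tfrac12 c(A_K^{\gamma,p+1})\norm{\Psi_{K_p}}$, and Lemma~\ref{lm:bound_EK} to get $\norm{\Psi_{K_p}}_F\ge b(h)^{-1}\norm{P_K^p-P^\star}_F$. Using, for $M\succeq 0$ of size $n$, the elementary facts $\Tr M\ge\norm{M}_F\ge\norm{M}$, $\norm{M}\ge n^{-1/2}\norm{M}_F$, and $\norm{M}_F\ge n^{-1/2}\Tr M$, and noting $\Tr(\Delta_p)=\Tr(P_K^p-P^\star)-\Tr(P_K^{p+1}-P^\star)$, one chains these into
\[\Tr(P_K^p-P^\star)-\Tr(P_K^{p+1}-P^\star)\ \ge\ \norm{\Delta_p}\ \ge\ \frac{c(A_K^{\gamma,p+1})}{2n\,b(h)}\,\Tr(P_K^p-P^\star).\]
Because $P^\star\preceq P_K^p\preceq P_K^0$ on $\mathcal{K}_h$, the gains $K_{p+1}=R^{-1}B^\top P_K^p$ and thus the matrices $A_K^{\gamma,p+1}$ are norm-bounded by a constant depending only on $h$ and the system data, so $c(A_K^{\gamma,p+1})=\log(5/4)/\norm{A_K^{\gamma,p+1}}\ge\underline{c}(h)>0$ uniformly in $p$. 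Setting $\alpha(h):=1-\underline{c}(h)/(2n\,b(h))$, which lies in $[0,1)$ (the bound $\le 1$ being forced by $\Tr(\Delta_p)\le\Tr(P_K^p-P^\star)$), gives $\Tr(P_K^{p+1}-P^\star)\le\alpha(h)\Tr(P_K^p-P^\star)$, hence $\Tr(P_K^p-P^\star)\le\alpha(h)^p\Tr(P_K^0-P^\star)\to 0$; since $P_K^p-P^\star\succeq 0$ this is the claimed exponential stability of the equilibrium $P^\star$.

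\textbf{Main obstacle.} The delicate part is Step~1: turning the two \emph{quadratic} Riccati equations \eqref{eq:riccati_outer_loop_iter_ric} at consecutive indices into a genuine Lyapunov equation driven by the single Hurwitz matrix $A_K^{\gamma,p+1}$, so that Lemma~\ref{lm:mb_upper_bd} applies verbatim. One must verify that the difference of the cross quadratic terms $\gamma^{-2}(P_K^pDD^\top P_K^p-P_K^{p+1}DD^\top P_K^{p+1})$ reorganizes, with the correct sign, so that the residual $\gamma^{-2}\Delta_p DD^\top\Delta_p+\Psi_{K_p}$ is positive semidefinite --- this is exactly what yields $\Delta_p\succeq\Theta_p$. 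The secondary technical point --- the uniform lower bound on $c(A_K^{\gamma,p+1})$ --- is comparatively routine, resting only on the monotonicity and boundedness already established in Lemma~\ref{lm:oloop_iter} rather than on any new estimate.
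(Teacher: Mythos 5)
Your proposal is correct and follows essentially the same route as the paper: derive a Lyapunov-type characterization of $P_K^p-P_K^{p+1}$ driven by the updated closed-loop matrix, lower-bound its norm via Lemma~\ref{lm:mb_upper_bd}, connect $\norm{\Psi_{K_p}}$ back to $\norm{P_K^p-P^\star}$ via Lemma~\ref{lm:bound_EK}, and close the trace chain with Lemma~\ref{lm:normTrace} to get the contraction factor $\alpha(h)=1-c(h)/(2nb(h))$. The one substantive difference is in Step~1: the paper obtains its Lyapunov equation \eqref{eq:riccati_outer_converge_lm1.4} for the difference by invoking $DD^\top=0$ from Assumption~\ref{ass:realizability} to discard the cross terms, and works with the matrix $A_K^{p+1}$, whereas you keep the quadratic terms, complete the square about $A_K^{\gamma,p+1}=A-BK_{p+1}+\gamma^{-2}DD^\top P_K^{p+1}$ (Hurwitz by the bounded real lemma), and show the residual $\gamma^{-2}\Delta_p DD^\top\Delta_p+\Psi_{K_p}$ is positive semidefinite so that $\Delta_p\succeq\Theta_p$ follows from statement (4) of Lemma~\ref{lm:ZhouRobust}. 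Your version is internally more consistent --- the paper's subsequent bound on $\norm{A_K^{p+1}}$ retains a $\gamma^{-2}\norm{DD^\top}h$ term that would vanish under the very assumption it just used --- and it establishes the linear rate without needing $DD^\top=0$ at all, which is the setting one actually cares about here. The remaining bookkeeping (the $\Psi_{K_p}=0$ degenerate case via Lemma~\ref{lm:EK=0}, and the uniform lower bound $\underline{c}(h)$ from the monotonicity $P^\star\preceq P_K^p\preceq P_K^0$ on $\mathcal{K}_h$) matches the paper's intent and is stated more carefully than in the published chain \eqref{eq:trace_bound}.
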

\begin{proof}
	Since $[P_{K}^p - P_{K}^{p+1}]$ satisfies the Lyapunov equation \eqref{eq:riccati_outer_converge_lm1.4}, $(K_{p+1} - K_p)^\top R (K_{p+1}-K_p)\ge 0$ implies that $P_{K}^p - P_{K}^{p+1}\succeq 0$ by Lemma \ref{lm:ZhouRobust}. Hence  $A_{K}^{(p+1)}$ must be Hurwitz going by Lemma~\ref{lm:inverseLya}. Define $H_{K}^p=\int_{0}^{\infty} e^{A_K^{(p+1)^\top}t}  \Psi_p e^{A_K^{(p+1)}t}\df t$ where $\Psi_p$ is as defined in Lemma \ref{lm:mb_upper_bd}. Thus, we have by the second statement of Lemma~\ref{lm:inverseLya} that the cost matrix admits the form (see statement 1 of Lemma \ref{lm:ZhouRobust})
	\begin{align}\label{eq:PKidiff}
		\begin{split}
			P_{K}^p - P_{K}^{p+1} \succeq H_{K}^p.
		\end{split}
	\end{align}
	From Lemma \ref{lm:oloop_iter}, we have for a $p>0$ that $P_{K}^0 \succeq P_{K}^p$ so that for an $h \ge 0$, we find that %
	\begin{align}
		\norm{A_{K}^{p+1}} \leq \norm{A} + (\norm{BR^{-1}B^\top}+\gamma^{-2}\norm{DD^\top})h.
	\end{align} 
	Set $c(h) = {\log(5/4)}/{\norm{A} + (\norm{BR^{-1}B^\top}+\gamma^{-2}\norm{DD^\top})h}$
	so that we have $\norm{ H_{K}^p} \geq \frac{1}{2} c(h) \norm{\Psi_p}$  from Lemma \ref{lm:mb_upper_bd}. Using Lemmas \ref{lm:bound_EK} and \ref{lm:normTrace}, and taking the trace of \eqref{eq:PKidiff} we find that
	\begin{align}
		&\Tr(P_{K}^{p+1} - P^\star) 
		\leq \Tr(P_{K}^p - P^\star) - \Tr(H^p_{K}), \nonumber \\
		&\leq \Tr(P_{K}^p - P^\star) - {c(h)}\norm{\Psi_p}/2, \nonumber \\
		&\leq \Tr(P_{K}^p - P^\star) - \frac{\sqrt{n}c(h)}{2n}\norm{\Psi_p}_F, \nonumber \\
		&\leq \Tr(P_{K}^p - P^\star) - \frac{c(h)}{2\sqrt{n}b(h)}\norm{P_{K}^p - P^\star}_F, \nonumber \\
		&\leq \left(1-\frac{c(h)}{2 n b(h)}\right)\Tr(P_{K}^p - P^\star).
		\label{eq:trace_bound}       
	\end{align}
	The proof follows if we set $\alpha(h)=1-c(h)/2 n b(h)$.
\end{proof}


\subsection{Inner Loop Stability, Optimality, and Convergence}
We now analyze the monotonic convergence rate of the inner loop. Given arbitrary gains $K_p \in \mathcal{K}$  and $L_q (K_p)$, let $P_{K,L}^{p,q}$ be the positive definite solution of the associated Lyapunov equation \eqref{eq:riccati_inner_loop_iter}.  
The following lemma shows that the cost matrix $P_{K,L}^{p,q}$ monotonically converges to \eqref{eq:riccati_inner_loop_iter}'s solution. 

\begin{lemma}
	Suppose that $L_0(K_0)$ is stabilizing, then for any $q \in \mathbb{N}_+$ (with $P_{K,L}^{p,\bar{q}}$ as the solution to \eqref{eq:riccati_inner_loop_iter}), 
	\begin{enumerate}
		\item $A_{K,L}^{p,q}$ is Hurwitz;
		\item $P_{K,L}^{p,\bar{q}} \succeq \cdots \succeq P_{K}^{(p, q+1)} \succeq P_{K}^{p, q} \succeq \cdots \succeq P_{K,L}^{p,0}$; and 
		\item $\lim_{q \rightarrow \infty} \norm{P_{K,L}^{p,q} - P_{K,L}^{p,\bar{q}}}_F = 0$.
	\end{enumerate}
	\label{lm:inner_loop_iter}
\end{lemma}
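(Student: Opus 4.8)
\emph{Proof plan.} The plan is to parallel the proof of Lemma~\ref{lm:oloop_iter} for the outer loop, the point of departure being the observation that, for the \emph{fixed} $K_p\in\mathcal{K}$, the inner recursion \eqref{eq:riccati_inner_loop_iter} is Kleinman's (Newton) iteration applied to the bounded-real Riccati equation \eqref{eq:brlemma2} taken at $K=K_p$: inserting $L_{q+1}(K_p)=\gamma^{-2}D^\top P_{K,L}^{p,q}$ into the Lyapunov equation identifies $A_{K,L}^{p,q}=A_K^p+\gamma^{-2}DD^\top P_{K,L}^{p,q-1}$ and $\gamma^{2}L_q^\top L_q=\gamma^{-2}P_{K,L}^{p,q-1}DD^\top P_{K,L}^{p,q-1}$, so that $P_{K,L}^{p,q}$ is the Newton iterate of the Riccati map $F(P)=A_K^{p\top}P+PA_K^p+Q_K^p+\gamma^{-2}PDD^\top P$. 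By the Remark following Lemma~\ref{lm:mb_upper_bd}, $F$ has a unique positive-definite zero $P_{K_p}$ making $A_K^p+\gamma^{-2}DD^\top P_{K_p}$ Hurwitz, and this zero is the fixed point $P_{K,L}^{p,\bar q}$ of \eqref{eq:riccati_inner_loop_iter}. First I would set up a joint induction on $q$ for the three claims; the base case $q=0$ is precisely the hypothesis that $L_0(K_p)$ is stabilizing, which makes $A_{K,L}^{p,0}$ Hurwitz and $P_{K,L}^{p,0}$ the unique solution of its Lyapunov equation.

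For the inductive step I would assume $A_{K,L}^{p,q}$ Hurwitz and first handle claim~2. Subtracting the Lyapunov equation of \eqref{eq:riccati_inner_loop_iter} at index $q$ from the one at index $q+1$ and completing the square using $D^\top P_{K,L}^{p,q}=\gamma^{2}L_{q+1}(K_p)$ --- the same manipulation as in the outer loop but with $\gamma^2$ and $D$ in place of $R$ and $B$ --- gives that $\Delta_q:=P_{K,L}^{p,q+1}-P_{K,L}^{p,q}$ solves $A_{K,L}^{(p,q+1)\top}\Delta_q+\Delta_qA_{K,L}^{p,q+1}+\gamma^2\big(L_{q+1}(K_p)-L_q(K_p)\big)^\top\big(L_{q+1}(K_p)-L_q(K_p)\big)=0$, a Lyapunov equation with nonnegative driving term, so $\Delta_q\succeq0$ once $A_{K,L}^{p,q+1}$ is Hurwitz. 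Similarly, subtracting the $q$-th Lyapunov equation from \eqref{eq:brlemma2} (at $K=K_p$) and completing the square shows $Z_q:=P_{K_p}-P_{K,L}^{p,q}$ solves $A_{K,L}^{(p,q)\top}Z_q+Z_qA_{K,L}^{p,q}+\gamma^2\big(L_q(K_p)-\gamma^{-2}D^\top P_{K_p}\big)^\top\big(L_q(K_p)-\gamma^{-2}D^\top P_{K_p}\big)=0$, hence $Z_q\succeq0$, i.e. $P_{K,L}^{p,q}\preceq P_{K,L}^{p,\bar q}$; together with $\Delta_q\succeq0$ and the comparison Lemma~\ref{lm:ZhouRobust} this yields the full Loewner chain.

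The remaining and hardest step is claim~1 at level $q+1$, namely that $A_{K,L}^{p,q+1}=A_K^p+\gamma^{-2}DD^\top P_{K,L}^{p,q}$ is Hurwitz. This is exactly where the inner loop departs from the outer loop: there the sign $Q_K^{p}\succ0$ gave Hurwitzness for free via Lemma~\ref{lm:inverseLya}, whereas the maximizing player's weight $Q_K^{p}-\gamma^2L_q^\top L_q$ is indefinite, so I would route through $P_{K_p}$. Using $Z_q=P_{K_p}-P_{K,L}^{p,q}\succeq0$ and rearranging \eqref{eq:brlemma2}, one checks that $Z_q$ also satisfies $A_{K,L}^{(p,q+1)\top}Z_q+Z_qA_{K,L}^{p,q+1}+\gamma^{-2}Z_qDD^\top Z_q+\gamma^2\big(L_{q+1}(K_p)-L_q(K_p)\big)^\top\big(L_{q+1}(K_p)-L_q(K_p)\big)=0$, in which $A_{K,L}^{p,q+1}+\gamma^{-2}DD^\top Z_q=A_K^p+\gamma^{-2}DD^\top P_{K_p}$ is Hurwitz; in particular $A_{K,L}^{(p,q+1)\top}Z_q+Z_qA_{K,L}^{p,q+1}\preceq0$. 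If $j\omega$ were an eigenvalue of $A_{K,L}^{p,q+1}$ with eigenvector $v$, testing this identity on $v$ forces $D^\top Z_qv=0$ and $\big(L_{q+1}(K_p)-L_q(K_p)\big)v=0$, so $v$ is a $j\omega$-eigenvector of $A_K^p+\gamma^{-2}DD^\top P_{K_p}$, contradicting its Hurwitzness; hence $A_{K,L}^{p,q+1}$ is Hurwitz (equivalently, apply the converse Lyapunov Lemma~\ref{lm:inverseLya} with $P_{K_p}$ as certificate and a detectability check). I expect this to be the main obstacle, because of the indefiniteness just noted and because the boundary case in which $Z_q$ is only semidefinite needs the $\ker Z_q$-invariance/detectability refinement.

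Finally, claim~3 would follow from claims~1--2 by monotone convergence: $\{P_{K,L}^{p,q}\}_{q\ge0}$ is nondecreasing in the Loewner order and bounded above by $P_{K,L}^{p,\bar q}=P_{K_p}$, so $\{\Tr(P_{K,L}^{p,q})\}$ is nondecreasing and bounded, hence convergent; by Lemma~\ref{lm:normTrace} the Frobenius increments are controlled by the trace increments, so $P_{K,L}^{p,q}\to P^{p,\infty}\preceq P_{K_p}$. Passing to the limit in \eqref{eq:riccati_inner_loop_iter} shows $P^{p,\infty}$ solves \eqref{eq:brlemma2}; since $P_{K_p}$ is the minimal positive-semidefinite (stabilizing) solution of \eqref{eq:brlemma2} --- cf. the Remark --- we get $P^{p,\infty}=P_{K_p}$, that is $\norm{P_{K,L}^{p,q}-P_{K,L}^{p,\bar q}}_F\to0$. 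This identification can alternatively be quoted from the convergence of Newton's iteration for the algebraic Riccati equation~\cite{Kleinman1968}.
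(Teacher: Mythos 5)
Your plan is correct, and for statements (2) and (3) it coincides with the paper's proof: the paper also obtains the Loewner chain by subtracting consecutive Lyapunov equations to get the difference identity \eqref{eq:riccati_inner_diff} with the nonnegative driving term $\gamma^2(L_{q+1}-L_q)^\top(L_{q+1}-L_q)$, and then concludes convergence from monotonicity and boundedness (bounded above by the solution of \eqref{eq:riccati_outer_loop_iter}, via the limit-of-monotone-operators theorem). Where you genuinely diverge is statement (1). The paper proves Hurwitzness by subtracting \eqref{eq:riccati_inner_loop_iter} from \eqref{eq:riccati_outer_loop_iter} to get \eqref{eq:P_KL_Diff} and then appeals to Lemma \ref{lm:ZhouRobust}; but that lemma only delivers $P\succeq 0$ \emph{given} a Hurwitz coefficient matrix, and the natural converse tool, Lemma \ref{lm:inverseLya}, is blocked precisely by the indefiniteness of $Q_K^p-\gamma^2L_q^\top L_q$ that you identify. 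Your route --- recognizing the inner loop as Newton's iteration on the bounded-real ARE \eqref{eq:brlemma2}, deriving the completed-square identity for $Z_q=P_{K_p}-P_{K,L}^{p,q}$ whose closed-loop matrix $A_K^p+\gamma^{-2}DD^\top P_{K_p}$ is Hurwitz by the bounded real lemma, and running an eigenvector argument --- is the standard rigorous way to close exactly this gap, and it also yields the upper bound $P_{K,L}^{p,q}\preceq P_{K,L}^{p,\bar q}$ as a byproduct rather than as a separate claim. The one point to finish carefully in your sketch: testing the $Z_q$ identity on an eigenvector only kills purely imaginary eigenvalues directly; to exclude eigenvalues with positive real part you must additionally invoke $Z_q\succeq 0$ (so that $2\,\mathrm{Re}(\lambda)\,v^*Z_qv\leq 0$ forces $Z_qv=0$ and the same contradiction), which you have available from the induction hypothesis but should state explicitly. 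With that detail supplied, your argument is tighter than the paper's own.
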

A proof is provided in the Appendix. 
%
%
We next analyze the monotonic convergence of the inner loop of the nested double loop algorithm. Let us first discuss a  preliminary result.

\begin{lemma}[Monotonic Convergence of the Inner-Loop]
	For any $K \in \mathcal{K}$, let $L(K)$ be the control gain for the player $w$ such that $A_K + DL(K)$ is Hurwitz. Let $P_K^L$ be the solution of
	\begin{align}
		&\left(A_K+DL(K)\right)^\top P_K^L + P_K^L\left(A_K+DL(K)\right) + Q_K \nonumber \\
		&\qquad \qquad -\gamma^2L(K)^\top L(K) = 0.
		\label{eq:lyapunovPKL}
	\end{align}
	Let $L^\prime(K) = \gamma^{-2}D^\top P_K^L$ and $\Psi_K^L = \gamma^{-2} (L^\prime(K) - L(K))^\top (L^\prime(K) -L(K))$.
	Then, for a $c(K)= \Tr\left(\int_{0}^{\infty} e^{\left(A_K + DL(K^\star)\right)t} e^{\left(A_K + DL(K^\star)\right)^\top t} \mathrm{d}t\right)$, the following inequality holds $\Tr(P_K - P_K^L) \leq \norm{\Psi_K^L}c(K)$.
	%
	\label{lm:traceInner}
\end{lemma}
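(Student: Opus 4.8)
The plan is to proceed in direct analogy to the outer-loop argument (Theorem \ref{thm:oloop_converge_rate} and Lemma \ref{lm:bound_EK}), but now differencing the fixed policy-evaluation Riccati-type equation \eqref{eq:lyapunovPKL} against the equation \eqref{eq:brlemma2} that characterizes $P_K$ itself. First I would observe that $P_K$ (the solution of \eqref{eq:brlemma2}) can itself be written as the value of the inner maximization at its fixed point, i.e. it solves \eqref{eq:lyapunovPKL} with $L(K)$ replaced by the optimal $L^\star(K) = \gamma^{-2} D^\top P_K$. Subtracting \eqref{eq:lyapunovPKL} from this fixed-point identity and completing the square in $L$, one obtains a Lyapunov equation of the form
\begin{align}
	\left(A_K+DL(K)\right)^\top (P_K - P_K^L) + (P_K - P_K^L)\left(A_K+DL(K)\right) = -\gamma^{2}\,\big(L^\prime(K)-L(K)\big)^\top\big(L^\prime(K)-L(K)\big) + (\text{sign-definite remainder}),
\end{align}
where the remainder carries the appropriate sign so that $P_K - P_K^L \succeq 0$; the key cross terms cancel exactly because $L^\prime(K)=\gamma^{-2}D^\top P_K^L$ is defined from $P_K^L$. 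This is the standard algebraic manipulation behind policy iteration, so I would not belabor it.

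Next I would solve that Lyapunov equation explicitly: since $A_K + DL(K)$ is Hurwitz by hypothesis, we have
\begin{align}
	P_K - P_K^L = \int_0^\infty e^{(A_K+DL(K))^\top t}\Big(\gamma^{2}\big(L^\prime-L\big)^\top\big(L^\prime-L\big) + (\text{remainder})\Big) e^{(A_K+DL(K)) t}\,\df t \;\succeq\; \int_0^\infty e^{(A_K+DL(K))^\top t}\,\Psi_K^L\, e^{(A_K+DL(K)) t}\,\df t,
\end{align}
recalling $\Psi_K^L = \gamma^{-2}(L^\prime-L)^\top(L^\prime-L)$; here I am using that $\gamma^2 \Psi_K^L = (L^\prime-L)^\top(L^\prime-L)$ up to the $\gamma$ bookkeeping, which should be reconciled with the statement (the precise constant may absorb a factor of $\gamma^{\pm2}$). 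Taking the trace, using the cyclic property and $\Tr(ABA^\top)\le \|B\|\,\Tr(AA^\top)$ together with $\Psi_K^L \preceq \|\Psi_K^L\| I$, gives
\begin{align}
	\Tr(P_K - P_K^L) \;\le\; \|\Psi_K^L\|\,\Tr\!\left(\int_0^\infty e^{(A_K+DL(K)) t}\, e^{(A_K+DL(K))^\top t}\,\df t\right).
\end{align}
This is almost the claimed bound, except that the constant $c(K)$ in the statement is written with $L(K^\star)$ rather than $L(K)$ in the exponent.

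The main obstacle — and the step I would spend the most care on — is exactly this discrepancy: the lemma's constant $c(K)$ uses the closed-loop matrix $A_K + DL(K^\star)$, not $A_K + DL(K)$. To bridge this I expect one needs a monotonicity/domination argument: since $L(K)$ is an arbitrary stabilizing inner gain while $L(K^\star)$ is (one step toward) the optimal one, and since along the inner iteration the cost matrices are monotone (Lemma \ref{lm:inner_loop_iter}), one can bound the controllability Gramian of $(A_K + DL(K))$ by that of $(A_K + DL(K^\star))$, or else re-derive the estimate directly at the argument $L(K^\star)$ after noting the bound is to be applied only along iterates that have entered a neighborhood where the exponent can be replaced by the optimal one up to a constant. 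An alternative, cleaner route is to prove the inequality with $c(K)$ defined via $A_K + DL(K)$ and then invoke a uniform bound over the (compact, by the sublevel-set argument of Lemma \ref{lm:bound_EK}) set of admissible inner gains to replace it by the stated constant; I would check which of these the authors intend and adopt that one. Everything else is the routine Lyapunov/trace calculus already used verbatim in the outer-loop proofs above.
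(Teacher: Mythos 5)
Your overall strategy is the paper's: subtract \eqref{eq:lyapunovPKL} from \eqref{eq:brlemma2}, read the difference as a Lyapunov equation for $P_K-P_K^L$, write its solution as a matrix-exponential integral, and finish with the trace/cyclic-property bound. However, there is a genuine gap at the completion-of-squares step, and it is precisely the point you flag as the "main obstacle." If you carry out the subtraction relative to the closed loop $A_K+DL(K)$, as your first display does, the quadratic residual that comes out exactly is $\gamma^{2}\bigl(L^\star(K)-L(K)\bigr)^\top\bigl(L^\star(K)-L(K)\bigr)$ with $L^\star(K)=\gamma^{-2}D^\top P_K$ --- i.e.\ the gain built from $P_K$, not from $P_K^L$ --- and there is no additional remainder. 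Your claim that the residual is $\gamma^{2}(L'-L)^\top(L'-L)$ plus a \emph{sign-definite} remainder is false for that choice of closed loop: writing $L^\star-L=(L'-L)+(L^\star-L')$, the discrepancy contains indefinite cross terms $(L'-L)^\top(L^\star-L')+(L^\star-L')^\top(L'-L)$. Consequently your integral bound involves the wrong quadratic \emph{and} the wrong Gramian, and the patches you propose (dominating the Gramian of $A_K+DL(K)$ by that of $A_K+DL(K^\star)$, or a compactness/uniformity argument) are neither justified nor needed --- there is no general ordering between those two Gramians.

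The paper's resolution is simpler and removes both mismatches at once: perform the subtraction relative to the closed-loop matrix $A_K+DL(K^\star)$ from the start, using $L(K^\star)=\gamma^{-2}D^\top P_K$ to absorb the quadratic term of \eqref{eq:brlemma2}. A short computation (set $a=L-L'$, $b=L^\star-L'$ and expand) shows the difference equation is
\begin{align*}
	&(A_K+DL(K^\star))^\top(P_K-P_K^L)+(P_K-P_K^L)(A_K+DL(K^\star)) \\
	&\qquad +\gamma^{2}(L'-L)^\top(L'-L)-\gamma^{-2}(P_K-P_K^L)DD^\top(P_K-P_K^L)=0,
\end{align*}
where now the remainder $-\gamma^{-2}(P_K-P_K^L)DD^\top(P_K-P_K^L)\preceq 0$ genuinely has a sign, so by Lemma \ref{lm:ZhouRobust} one gets $P_K-P_K^L\preceq\int_0^\infty e^{(A_K+DL(K^\star))^\top t}\,\gamma^{2}(L'-L)^\top(L'-L)\,e^{(A_K+DL(K^\star))t}\,\df t$, and the trace step then produces exactly the stated constant $c(K)$. (You are right that a $\gamma^{\pm2}$ bookkeeping factor needs reconciling with the lemma's definition of $\Psi_K^L$; that is an issue with the paper's normalization, not with the argument.) To repair your proof, replace your first two displays with the subtraction around $A_K+DL(K^\star)$ and delete the final paragraph.
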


\begin{theorem}
	For a $K \in \breve{\mathcal{K}}$, and for any $(p,q) \in \mathbb{N}_+$, there exists $\beta(K) \in \bb{R}$ 
	 such that 
	\begin{align} 
		\Tr(P_{K}^p - P_{K,L}^{p,q+1}) \leq \beta(K) \Tr(P_K^p - P_{K,L}^{p,q}).
		\label{eq: linear conv}
	\end{align}
	\label{thm:iloop_converge_rate}
\end{theorem}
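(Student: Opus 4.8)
The plan is to follow the template of the outer-loop result, Theorem~\ref{thm:oloop_converge_rate}: first produce a Lyapunov equation for the inner-loop increment $P_{K,L}^{p,q+1}-P_{K,L}^{p,q}$ whose forcing term is a positive-semidefinite multiple of $(L_{q+1}-L_q)^\top(L_{q+1}-L_q)$, then bound the trace of that increment from below by a positive multiple of $\Tr(P_K^p-P_{K,L}^{p,q})$, and finally subtract to obtain the contraction. To lighten notation I will write $P^q:=P_{K,L}^{p,q}$, $L_q:=L_q(K_p)$ and $A^q:=A_{K,L}^{p,q}=A_K^p+DL_q$, recalling that $L_{q+1}=\gamma^{-2}D^\top P^q$ and that the inner iteration's fixed point is exactly $P_K^p$, the solution of~\eqref{eq:riccati_outer_loop_iter} (substitute $\gamma^{-2}D^\top P_K^p$ for $L$ in~\eqref{eq:riccati_inner_loop_iter} to verify this). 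The first step is to subtract~\eqref{eq:riccati_inner_loop_iter} at index $q$ from the same equation at index $q+1$, after re-expressing the $q$-th equation through $A^q=A^{q+1}+D(L_q-L_{q+1})$ and using the defining identity $D^\top P^q=\gamma^2 L_{q+1}$; completing the square in the $L$-terms, I expect to land on
\begin{align*}
A^{(q+1)\top}\!\left(P^{q+1}-P^q\right)+\left(P^{q+1}-P^q\right)\!A^{q+1}+\Xi^q=0,\qquad \Xi^q:=\gamma^2\left(L_{q+1}-L_q\right)^\top\!\left(L_{q+1}-L_q\right)\succeq 0 .
\end{align*}
Since $A^{q+1}$ is Hurwitz by Lemma~\ref{lm:inner_loop_iter}(1), this gives $P^{q+1}-P^q=\int_0^\infty e^{A^{(q+1)\top}t}\,\Xi^q\,e^{A^{q+1}t}\,\df t\succeq 0$, which also re-derives the monotonicity of Lemma~\ref{lm:inner_loop_iter}(2).

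Next I would lower-bound $\Tr(P^{q+1}-P^q)$. Using $\Tr(X)\ge\|X\|$ for $X\succeq 0$ together with Lemma~\ref{lm:mb_upper_bd} applied with $\Phi=A^{q+1}$, $\Psi=\Xi^q$, I get $\Tr(P^{q+1}-P^q)\ge\tfrac12\,c(A^{q+1})\,\|\Xi^q\|$ with $c(A^{q+1})=\log(5/4)/\|A^{q+1}\|$. The point that makes this useful is that the bound is \emph{uniform in} $q$: from Lemma~\ref{lm:inner_loop_iter}(2) we have $0\prec P^q\preceq P_K^p$ for every $q$, hence $\|L_{q+1}\|\le\gamma^{-2}\|D\|\,\|P_K^p\|$ and $\|A^{q+1}\|\le\|A_K^p\|+\gamma^{-2}\|D\|^2\|P_K^p\|=:\bar a(K)$, so I may replace $c(A^{q+1})$ by the $q$-independent constant $c_0(K):=\log(5/4)/\bar a(K)>0$.

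Then I would invoke Lemma~\ref{lm:traceInner} with $K=K_p$, $L(K)=L_q$, $P_K^L=P^q$ and $L'(K)=\gamma^{-2}D^\top P^q=L_{q+1}$, so that $\Psi_K^L=\gamma^{-2}(L_{q+1}-L_q)^\top(L_{q+1}-L_q)$ and $\Xi^q=\gamma^4\Psi_K^L$; the hypothesis $A_K^p+DL_q$ Hurwitz needed by that lemma is again Lemma~\ref{lm:inner_loop_iter}(1). The lemma then gives $\Tr(P_K^p-P^q)\le c(K)\|\Psi_K^L\|$, i.e. $\|\Xi^q\|=\gamma^4\|\Psi_K^L\|\ge\gamma^4 c(K)^{-1}\Tr(P_K^p-P^q)$. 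Chaining the two bounds yields $\Tr(P^{q+1}-P^q)\ge\tfrac{\gamma^4 c_0(K)}{2c(K)}\Tr(P_K^p-P^q)$, and because $0\prec P^q\preceq P^{q+1}\preceq P_K^p$ makes all three traces nonnegative while $\Tr(P_K^p-P^{q+1})=\Tr(P_K^p-P^q)-\Tr(P^{q+1}-P^q)$, I conclude with $\beta(K):=1-\tfrac{\gamma^4 c_0(K)}{2c(K)}\in[0,1)$, which is the assertion.

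The hard part will be the first step: carrying out the completing-the-square so that the residual forcing is \emph{exactly} $\Xi^q\succeq 0$ — the cross terms $D^\top P^q$ and $P^q D$ must collapse cleanly against $\gamma^2 L_{q+1}$ and $\gamma^2 L_{q+1}^\top$ — and then the bookkeeping that keeps the constants $\bar a(K)$ and $c(K)$ genuinely independent of the inner index $q$, which relies entirely on the order sandwich $0\prec P_{K,L}^{p,q}\preceq P_K^p$ supplied by Lemma~\ref{lm:inner_loop_iter}. Once that equation and those uniform bounds are secured, the remaining manipulations are the same trace/operator-norm inequalities already used in Theorem~\ref{thm:oloop_converge_rate}. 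A minor separate check is that the constant $c(K)$ in Lemma~\ref{lm:traceInner} is finite at $K=K_p$; this holds because the relevant closed-loop matrix $A_{K_p}+\gamma^{-2}DD^\top P_K^p$ is Hurwitz by the bounded-real-lemma remark following Lemma~\ref{lm:mb_upper_bd}.
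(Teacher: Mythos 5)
Your proposal is correct and follows essentially the same route as the paper's proof: derive the Lyapunov equation for $P_{K,L}^{p,q+1}-P_{K,L}^{p,q}$ with forcing $\gamma^2(L_{q+1}-L_q)^\top(L_{q+1}-L_q)$, lower-bound its trace via Lemma \ref{lm:mb_upper_bd} with a $q$-uniform constant obtained from $P_{K,L}^{p,q}\preceq P_K^p$, and close the loop with Lemma \ref{lm:traceInner}. Your explicit tracking of the $\gamma^4$ conversion between $\Xi^q$ and the $\Psi_K^L$ of Lemma \ref{lm:traceInner} is in fact slightly more careful than the paper's own bookkeeping, but it changes nothing structural.
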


\begin{proof}
	Define $\Psi_K^q = \gamma^2 \left[L_{q+1}  - L_q\right]^\top \left[L_{q+1}  - L_q\right]$ in \eqref{eq:riccati_inner_diff} and $F_K^q = \int_{0}^{\infty} e^{(A_{K,L}^{(p,q+1)^\top})t}\Psi_K^q e^{A_{K,L}^{(p,q+1)t}} \mathrm{d}t$. By Lemma \ref{lm:inner_loop_iter},  $A_{K,L}^{p,q+1}$ is Hurwitz in \eqref{eq:riccati_inner_diff} so that from Lemma \ref{lm:ZhouRobust}  we have 
	\begin{align}
		P_{K,L}^{p,q+1} - P_{K,L}^{p,q} = F_K^q.  
		\label{eq:thm_inner}
	\end{align}
	By Lemma \ref{lm:inner_loop_iter}, $P_{K,L}^{p,q+1} \succeq P_{K,L}^{p,q}$ so that
	\begin{align}\label{eq:P_Knner_loop}
		\norm{A_{K,L}^{(p,q+1)}} \leq \norm{A-BK_p} + \gamma^{-2}\norm{DD^\top}\norm{P_{K}^p}.
	\end{align}
	\begin{align}\label{eq:defdK}
		\text{Let }	d(K) = {\log(5/4)}/\left({\norm{A_K} + \gamma^{-2}\norm{DD^\top}\norm{P_{K}^p}}\right), 
	\end{align}
	so that subtracting both sides of \eqref{eq:thm_inner} from $P_K^p$ and taking the trace of the resulting expression, we find that
	\begin{subequations}
		\begin{align}
			\Tr(P_K^p - &P_{K,L}^{p,q+1}) =  \Tr(P_K^p - P_{K,L}^{p,q}) - \Tr(F_K^q), \\
			&\leq \Tr(P_K^p - P_{K,L}^{p,q}) - \norm{F_K^q},
			\label{eq:normF}\\
			&\leq \Tr(P_K^p - P_{K,L}^{p,q}) - \frac{1}{2}d(K)\norm{\Psi_K^q} \label{eq:normFnormE},
		\end{align}
	\end{subequations}
	where we have used Lemma \ref{lm:normTrace} to arrive at the inequality in \eqref{eq:normF}, and extended Lemma \ref{lm:mb_upper_bd} to arrive at \eqref{eq:normFnormE} since $\Psi_K^q = \Psi_K^{q^\top}$. 
	Furthermore, from Lemma \ref{lm:traceInner} 
	\begin{align}
		\Tr(P_K^p -P_{K,L}^{p,q}) &\leq   \left(1 - \dfrac{d(K)}{2c(K)}\right)\Tr(P_K^p - P_{K,L}^{p,q}).
		\label{eq: linear conv proof}
	\end{align} 
	The proof follows if we set $\beta(K) = 1 -{d(K)}/{2c(K)}$.
\end{proof}

\begin{remark}
	As seen from Lemma \ref{lm:inner_loop_iter}, $P_K^p - P_{K,L}^{p,q} \succeq 0$. From Lemma \ref{lm:normTrace} and the result of Theorem \ref{thm:iloop_converge_rate}, we have $\norm{P_K - P_{K,L}^{p,q}}_F \leq \Tr(P_K - P_{K,L}^{p,q}) \leq \beta(K)\Tr(P_K)$, i.e. $P_{K,L}^{p,q}$ exponentially converges to $P_K$ in the  Frobenius norm.
\end{remark}

%
\begin{lemma}[Uniform Convergence of Iterates]
	For any $h>0$, $K \in \mathcal{K}_h$, and $\epsilon>0$, there exists ${q}^\prime(h) \in \mathbb{N}_+$ independent of $K$, such that if $q \geq {q}^\prime(h)$, $\norm{P_{K,L}^{p,q} - P_{K}}_F \leq \epsilon$.
	\label{thm:uniform_converge}
\end{lemma}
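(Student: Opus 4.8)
The plan is to leverage the geometric (exponential) convergence rate established in Theorem \ref{thm:iloop_converge_rate} together with a \emph{uniform} control on the per-iteration contraction factor $\beta(K)$ and on the initial gap, valid over the entire sublevel set $\mathcal{K}_h$. First I would recall from the remark following Theorem \ref{thm:iloop_converge_rate} that $P_K^p - P_{K,L}^{p,q}\succeq 0$ and $\Tr(P_K^p - P_{K,L}^{p,q+1})\le \beta(K)\,\Tr(P_K^p - P_{K,L}^{p,q})$, so that iterating gives the bound $\Tr(P_K^p - P_{K,L}^{p,q}) \le \beta(K)^q\,\Tr(P_K^p - P_{K,L}^{p,0})$. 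Since by Lemma \ref{lm:normTrace} we have $\norm{P_{K,L}^{p,q} - P_K}_F \le \Tr(P_K^p - P_{K,L}^{p,q})$, it suffices to make the right-hand side smaller than $\epsilon$ by a choice of $q$ that does not depend on $K$.

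The crux is therefore to establish two uniform estimates over $K \in \mathcal{K}_h$: (i) a uniform upper bound $\bar{\beta}(h) < 1$ on the contraction factor, i.e. $\beta(K) \le \bar{\beta}(h)$ for all $K \in \mathcal{K}_h$; and (ii) a uniform upper bound $M(h)$ on the initial gap $\Tr(P_K^p - P_{K,L}^{p,0})$. For (ii), recall $P_{K,L}^{p,0}$ is initialized and $P_K^p$ satisfies \eqref{eq:brlemma2}; since $K \in \mathcal{K}_h$ means $\Tr(P_K^p - P^\star) \le h$, and $P^\star$ is fixed, we get $\Tr(P_K^p) \le h + \Tr(P^\star) =: M(h)$, and $P_{K,L}^{p,0} \succeq 0$ gives $\Tr(P_K^p - P_{K,L}^{p,0}) \le \Tr(P_K^p) \le M(h)$. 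For (i), I would inspect the formula $\beta(K) = 1 - d(K)/(2c(K))$ from the proof of Theorem \ref{thm:iloop_converge_rate}. Here $d(K) = \log(5/4)/(\norm{A_K} + \gamma^{-2}\norm{DD^\top}\norm{P_K^p})$, and on $\mathcal{K}_h$ both $\norm{A_K}$ (via $\norm{A} + \norm{B}\norm{K}$, with $\norm{K} = \norm{R^{-1}B^\top P_K^p}$ controlled by $\norm{P_K^p} \le M(h)$) and $\norm{P_K^p}$ are bounded by constants depending only on $h$; hence $d(K)$ is \emph{bounded below} by some $\underline{d}(h) > 0$. Dually, $c(K) = \Tr\!\big(\int_0^\infty e^{(A_K + DL(K^\star))t}e^{(A_K + DL(K^\star))^\top t}\,dt\big)$ must be shown \emph{bounded above} uniformly on $\mathcal{K}_h$; this requires a uniform exponential decay rate for $e^{(A_K + DL(K^\star))t}$ — i.e. a uniform bound on the solution of the associated Lyapunov equation — which follows from continuity of the Lyapunov solution map on the compact-closure of $\mathcal{K}_h$ (using that $A_K + DL(K^\star)$ is Hurwitz for every $K \in \mathcal{K}$ and that $\mathcal{K}_h$ has compact closure inside $\mathcal{K}$, a consequence of Lemma \ref{lm:oloop_iter} and the coercivity of $\Tr(P_K - P^\star)$). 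Combining, $\beta(K) \le 1 - \underline{d}(h)/(2\bar{c}(h)) =: \bar{\beta}(h) < 1$.

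With these in hand, the conclusion is immediate: choose $q^\prime(h) = \big\lceil \log(\epsilon/M(h)) / \log \bar{\beta}(h)\big\rceil$, which depends only on $h$ (and $\epsilon$), so that for all $q \ge q^\prime(h)$,
\begin{align}
\norm{P_{K,L}^{p,q} - P_K}_F \le \Tr(P_K^p - P_{K,L}^{p,q}) \le \bar{\beta}(h)^q\, M(h) \le \epsilon. \nonumber
\end{align}
I expect the main obstacle to be item (i), specifically the \emph{uniform} upper bound on $c(K)$: one must argue that the Hurwitz matrices $A_K + DL(K^\star)$ do not approach the stability boundary as $K$ ranges over $\mathcal{K}_h$, which is where the compactness of $\overline{\mathcal{K}_h}$ (equivalently, properness of the map $K \mapsto \Tr(P_K - P^\star)$ and closedness of $\mathcal{K}_h$ within the open set $\mathcal{K}$) and continuity of the Lyapunov-solution map are essential; everything else is routine bookkeeping with the already-established monotonicity and trace inequalities.
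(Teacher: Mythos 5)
Your proposal is correct and follows the route the paper intends: iterate the one-step contraction of Theorem \ref{thm:iloop_converge_rate} and make both the contraction factor $\beta(K)$ and the initial gap uniform over $\mathcal{K}_h$. The paper's own proof is a one-line appeal to Theorems \ref{thm:oloop_converge_rate} and \ref{thm:iloop_converge_rate}, so you have supplied the substantive content it omits --- the uniform lower bound on $d(K)$ from $\Tr(P_K^p)\leq h+\Tr(P^\star)$, the uniform upper bound on $c(K)$ via compactness of $\mathcal{K}_h$ (a property the paper itself invokes without proof in Lemma \ref{lm:invariantOut}), and the resulting $K$-independent choice of $q^\prime(h)$.
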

\begin{proof}[Proof of Lemma \ref{thm:uniform_converge}]
	This Lemma is an immediate outcome of Theorems \ref{thm:oloop_converge_rate} and \ref{thm:iloop_converge_rate}.
\end{proof}

\subsection{Sampling-based PO on Hybrid Discrete-Time Nonlinear System} 
\label{ssec:hybrid}
	The exact knowledge of the system matrices $A, B, C, D, E$ are often  unavailable so that the policy evaluation step will result in biased estimates. When errors are present from using I/O or state data for the PO procedure in Alg. \ref{alg:model_based_design}, residuals from early termination of numerically solving Line \ref{alg:model_based_design::gare_solver} in Alg. \ref{alg:model_based_design}, or using an approximate cost function owing to inexact values of $Q$ and $R$, the algorithm may fail to converge.  

\begin{problem}[Sampling-based Policy Optimization]
	If $A, B, C, D, E, P$ are all replaced by approximate matrices $\hat{A}, \hat{B}, \hat{C}, \hat{D}, \hat{E}, \hat{P}$, under what conditions will the sequences $\{\hat{P}_{K,L}^{p,q}\}_{(p,q)=1}^{(p,q)=\infty}$, $\{\hat{K}_p\}_{p=0}^\infty$, $\{\hat{L}_q\}_{q=0}^\infty$ converge to a small neighborhood of the optimal values $\{P_{K,L}^\star\}_{(p,q)=0}^{(p,q)=\infty}$, $\{{K}^\star_p\}_{p=0}^\infty$, and $\{{L}^\star_q\}_{q=0}^\infty$.
	\label{prob:mfree}
\end{problem}

\subsubsection{Discrete-Time Nonlinear System Interpretation}
From Assumption \ref{ass:realizability}, a $P_K^0 \in \bb{S}^n$ exists such that when applied to find $K_0$ \ie $K_0 = R^{-1} B^\top P^0_K$, such a $K_0$ will be stabilizing. 
Now, factoring in approximation errors between the policy evaluation and improvement steps, we end up with a hybrid system consisting of a  continuous-time policy gain pair $(\hat{K}_p, \hat{L}_q(\hat{K}_p))$ and a learning algorithm that is essentially a discrete sampled data from a nonlinear system (owing to errors from various sources). We will leverage Lemmas \ref{lm:oloop_iter} and \ref{lm:inner_loop_iter} to show that under inexact loop updates and lumping gain iterate estimate errors as system inputs to the online PO scheme, it converges to the optimal solution and closed-loop dynamic stability is guaranteed in an input-to-state stability framework (ISS) \cite{Sontag2008}. Hence the loops are discrete-time nonlinear systems.

\subsubsection{Online (Model-Free) Nested Loop Reparameterization}

Consider \eqref{eq:riccati_outer_loop_iter_gain} and suppose that $\hat{P}_K^0 \in \bb{S}^n$ is chosen following Assumption \ref{ass:realizability}. It follows that a $\hat{K}_k^1 = R^{-1}B^\top \hat{P}_K^0$ will be stabilizing since $\tilde{K}_k^1 \equiv \hat{K}_k^1 - {K}_k^1  \triangleq 0$. The same argument applies for $L_0$. For $(p,q)>0$, we must show that for  $\tilde{K}_k^p \equiv \hat{K}_k^p - {K}_k^p  \triangleq 0$ so that the sequence $\{P_{K,L}^{p,q}\}_{(p,q)=0}^\infty$ will converge to the locally exponentially stable $\hat{P}_{K,L}^\star$ going by Lemmas \ref{lm:oloop_iter} and \ref{lm:inner_loop_iter}. We proceed by lumping the estimate errors as an input into the gain terms to be computed in the PO algorithm.
Under inexact outer loop update, the iterate $K_{p+1}$ becomes inaccurate so that the inexact outer-loop iteration involves the recursions
\begin{subequations}
	\begin{align}
		& \hat{A}_K^{p\top} \hat{P}_K^p +  \hat{P}_K^p \hat{A}_K^{p} + \hat{Q}_K^p+ \gamma^{-2} \hat{P}_K^p D D^\top \hat{P}_K^p = 0, \\
		& \hat{K}_{p+1} =  R^{-1} B^\top \hat{P}_K^p  + \tilde{K}_{p+1} \triangleq \bar{K}_{p+1} + \tilde{K}_{p+1},
		\label{eq:inexact_iter_oloop_gain}
	\end{align}
	\label{eq:inexact_iter_oloop}
\end{subequations}
where $\hat{A}_K^{p} = A - B\hat{K}_p$ and $\hat{Q}_K^p = Q + \hat{K}_p^\top R \hat{K}_p$. Similar argument applies to the inner loop updates so that the inexact inner loop update is
\begin{subequations}
\begin{align}
	&\hat{A}_{K,L}^{p,q\top} \hat{P}^{p,q}_{K,L} + \hat{P}^{p,q}_{K,L}  \hat{A}_{K,L}^{p,q} + \hat{Q}_K^p- {\gamma^{2}} \hat{L}_{q}^{\top} \hat{L}_{q}(\hat{K}_p) =0 \label{eq:inexact_iter_iloop_dyna} \\
	&\hat{K}_{p+1} = R^{-1}B^\top \hat{P}_K^{p,{q}} + \tilde{K}_p, \\
	&\hat{L}_{q+1}(\hat{K}_p) = \gamma^{-2} D^\top \hat{P}_{K,L}^{p,q} + \tilde{L}_{q+1}(\tilde{K}_p) \\
	& \qquad \qquad \triangleq \bar{L}_{q+1}(\bar{K}_p) + \tilde{L}_{q+1}(\tilde{K}_p)
	\label{eq:inexact_iter_iloop_gain} 
\end{align}
\label{eq:inexact_iter_iloop}
\end{subequations}

Consider the transformation of the infinite-dimensional stochastic differential equation \eqref{eq:po_leqg_opt} in light of the identities \eqref{eq:identities} under inexact updates for $(p,q)>0$
\begin{align}
	\df x = [\hat{A}_{K,L}^{p,q} x  + B (\hat{K}_p x - D\hat{L}_q(K_p) + u)]\df t + D \df w.
	\label{eq:inf-dim-trans}
\end{align}
On a time interval $[s, s+\delta s]$, it follows from It{\^o}'s stochastic calculus and the Hamilton-Jacobi-Bellman equation that
\begin{align}
	&d\left[x^\top(s+\delta s) \hat{P}_{K,L}^{p,q} (s+\delta s) - x^\top(s) \hat{P}_{K,L}^{p,q} x(s)\right]= \nonumber \\
	& \,\, (\df x)^\top \hat{P}_{K,L}^{p,q}  x + x^\top \hat{P}_{K,L}^{p,q} \df x + (\df x)^\top \hat{P}_{K,L}^{p,q}  (\df x). 
\end{align}
Along the trajectories of equation \eqref{eq:inf-dim-trans} and using the gains in  \eqref{eq:riccati_inner_loop_iter}, the \rhs in the foregoing becomes
%
\begin{align}
	&x^\top \left[\hat{A}_{K,L}^{p,q \top} \hat{P}_{K,L}^{p,q} +\hat{P}_{K,L}^{p,q} \hat{A}_{K,L}^{p,q}\right] x \df t + 2 x^\top \hat{P}_{K,L}^{p,q} D \df w 
	\label{eq:sys_trajos_mats}  \\
	& + 2 x^\top \hat{P}_{K,L}^{p,q} B (K_p x - D\hat{L}_q(K_p) + u) dt + \Tr(D^\top P D), \nonumber \\
	%
	%
&\text{so that }	x^\top(s+\delta s) \hat{P}_{K,L}^{p,q} (s+\delta s) - x^\top(s) \hat{P}_{K,L}^{p,q} x(s) \nonumber \\
	&=\int_s^{s+\delta s}\left[(-x^\top \hat{Q}_K^p x - \gamma^{2}w^\top w)\df t   + 2 \gamma^2 x^\top \hat{L}^\top_{q+1}(K_p) \df w\right]  \nonumber \\
	& \,+\int_s^{s+\delta s} 2 x^\top \hat{K}_{p+1}^\top R \left[\hat{K}_p x - D \hat{L}_q(\hat{K}_p) + u\right]\df t \nonumber \\
	& \qquad + \int_s^{s+\delta s}\Tr(D^\top \hat{P}_{K,L}^{p,q} D) \df t.
	\label{eq:sys_trajos}
\end{align}
\textbf{Observe}: The system dependent  matrices $\hat{A}_{K,L}^{p,q}, B, C, D$ from equation \eqref{eq:sys_trajos_mats} are now replaced by input and state terms including $\hat{Q}_K^p$, $\hat{K}_{p+1}$, and $\hat{L}_{q+1}$ which are all retrievable via online measurements. \textit{We essentially end up with an input-to-state system}. The price we pay is that the noise feedthrough matrix $D$ must be known precisely. In this article, as is common in many linear stochastic system with Brownian motion, $D$ is taken to be identity~\cite{DuncanSDEBrownian, DuncanStochastic}.

\subsubsection{Sampling-based PO Scheme}
Our goal is to explore the system model until exact equality of $\hat{A}_{K,L}^{p,q}, \hat{P}_{K,L}^{p,q}$ and $\hat{K}_{p+1}, \hat{L}_{q+1}(K_p)$ to the corresponding terms in \eqref{eq:riccati_inner_loop_iter} occur. To this end, \eqref{eq:sys_trajos} allows us to explore with the controls $ u = -K_0 x + \eta_p$ and $w = -L_0 x + \eta_q$ where $(\eta_p, \eta_q)$ is drawn uniformly at random over matrices with a Frobenium norm $r$ similar to~\cite{Gravell2021, Fazel2018}. Let us now introduce the following identities,
\begin{align}
	&x^\top \hat{Q}_K^p x = (x^\top \otimes x^\top) \, \vec(\hat{Q}_K^p ), \nonumber \\
	&\gamma^{2}w^\top w = \gamma^{2}(w^\top \otimes w^\top) \,\,\vec(I_v),\nonumber \\
	& 2 \gamma^2 x^\top \hat{L}_{q+1}^\top (\hat{K}_p)\df w =  2 \gamma^2 (\identity_n \otimes x^\top) \df w \,\, \vec (\hat{L}_{q+1}^\top (\hat{K}_p)), \nonumber \\
	&2 x^\top \hat{K}_{p+1}^\top R\hat{K}_p x = 2 (x^\top \otimes x^\top) (I_n \otimes  \hat{K}^\top_p) \,\vec (\hat{K}_{p+1}^\top R), \nonumber \\
	&2 x^\top \hat{K}_{p+1}^\top R D \hat{L}_{q}(\hat{K}_p) = 2 (\hat{L}^\top_q(\hat{K}_p)D^\top \otimes x^\top)\, \vec (\hat{K}_{p+1}^\top R), \nonumber \\
		&2 x^\top \hat{K}_{p+1}^\top R u = 2 (u^\top \otimes x^\top)\, \vec (\hat{K}_{p+1}^\top R), \nonumber \\
		&\Tr(D^\top \hat{P}_{K,L}^{p,q} D)  = \vec^\top(D)\, \vec(\hat{P}_{K,L}^{p,q} D).
\end{align}
Furthermore, consider the matrices $\delta_{xx} \in \bb{R}^{\frac{n(n+1)}{2}l}$,  $\delta_{ww} \in \bb{R}^{\frac{v(v+1)}{2}l}$, $\identity_{xx} \in \bb{R}^{l\times n^2}$, and $\identity_{ux} \in \mathbb{R}^{l\times nm}$ for  $l \in \bb{N}_+$ so that
\begin{align}
	\Delta_{xx} &= \left[\vecv (x_1),  \ldots, \vecv (x_l)\right]^\top, \,\,  x_l = x_{l+1} - x_{l}, 
	\nonumber \\
	\Delta_{ww} &= \left[\vecv (w_1),  \ldots, \vecv (w_l)\right]^\top, \,\,  w_l = w_{l+1} - w_{l},\nonumber \\
	I_{xx} &= \left[\int_{s_0}^{s_{1}}x \otimes x \, \df t, \ldots, \int_{s_{l-1}}^{s_{l}}x \otimes x \, \df t\right]^\top, \nonumber \\
	I_{ww} &= \left[\int_{s_0}^{s_{1}}w \otimes w \, \df t, \ldots, \int_{s_{l-1}}^{s_{l}}w \otimes w \, \df t\right]^\top, \nonumber \\
	I_{xw} & = \left[\int_{s_0}^{s_{1}} (\identity_n \otimes x) \df w,  \ldots, \int_{s_{l-1}}^{s_{l}} \, (\identity_n \otimes x) \df w\right]^\top, \nonumber \\
	I_{ux} &= \left[\int_{s_0}^{s_{1}}u \otimes x \, \df t, \ldots, \int_{s_{l-1}}^{s_{l}}u \otimes x \, \df t\right]^\top.
\end{align}
Next, set 
\begin{subequations}
	\begin{align}
		\Theta_{K,L}^{p,q} &= \left[\Delta_{xx}, -2 I_{xx}(\identity_n \otimes \hat{K}_p^\top) + 2 (\hat{L}^\top_q(\hat{K}_p) D^\top \otimes x^\top)  \nonumber \right. \\ & \left.
		\,  - 2 I_{ux},  -2\gamma^2 I_{xw}, -\vec^\top(D)\vec(\hat{P}_{K,L}^{p,q}D) \right], \\
		\Upsilon_{K,L}^{p,q} &= \left[-\identity_{xx} \vec(\hat{Q}_K^p),\,\, -\gamma^2 \identity_{ww} \vec(\identity_v)\right].
	\end{align}
\end{subequations}
Define $\mathbf{1}_{q^2}$ as one-vector with dimension $q^2$. Thus, 
\begin{align}
	\Theta_{K,L}^{p,q} &\begin{bmatrix}
		\svec({P}_{K,L}^{p,q}) & \vec(\hat{K}_{p+1}^\top R) & \vec(\hat{L}_{q+1}^\top(\hat{K}_p)) & \mathbf{1}_{q^2}
	\end{bmatrix}^\top \nonumber \\
	& \qquad \qquad = \Upsilon_{K,L}^{p,q}.
\end{align}
Suppose that $\Theta_{K,L}^{p,q}$ is of full rank, then we can retrieve the unknown matrices via least squares estimation \ie 
\begin{align}
	\begin{bmatrix}
		\svec({P}_{K,L}^{p,q}) \\ \vec(\hat{K}_{p+1}^\top R) \\ \vec(\hat{L}_{q+1}^\top(\hat{K}_p))\df w \\ \mathbf{1}_{q^2}
	\end{bmatrix} = (\Theta_{K,L}^{p,q\top} \Theta_{K,L}^{p,q})^{-1} \Theta_{K,L}^{p,q\top} \Upsilon_{K,L}^{p,q}.
	\label{eq:model_free}
\end{align}
We thus end up with a scheme for retrieving the system matrices provided that the algorithm is robust to perturbations upon iterating through \eqref{eq:model_free} for each $(p,q)$. The full scheme is summarized in the flowchart of \autoref{fig:sampling_based}. We next state the condition under which $\Theta_{K,L}^{p,q}$ is of full rank.
\begin{lemma}{\cite[Lemma 6]{JiangJiang}}
	If there exists an integer $l_0>0$ such that for all $l \ge l_0$,  $rank(I_{xx}, I_{ux}, I_{xw}, \mathbf{1}_{q^2}) = n(n+1) + mn +  nq + q^2$, then $\Theta_{K,L}^{p,q}$ has full rank for all $(p,q) \in (\bar{p}, \bar{q})$.
	\label{lm:rank_cond}
\end{lemma}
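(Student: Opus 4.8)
The plan is to argue by contradiction, reusing the very It\^o/HJB identity \eqref{eq:sys_trajos} that produced $\Theta_{K,L}^{p,q}$, together with the Hurwitz property of $A_{K,L}^{p,q}$ furnished by Lemma \ref{lm:inner_loop_iter}. First I would note that, by construction, every block column of $\Theta_{K,L}^{p,q}$ is a data matrix from the list $\Delta_{xx}, I_{xx}, I_{ux}, I_{xw}, \mathbf{1}_{q^2}$ post-multiplied by a fixed matrix that depends only on $\hat{K}_p$, $\hat{L}_q(\hat{K}_p)$, $\gamma$, $D$ (and not on the sampled trajectory). The only block that is not already one of $I_{xx}, I_{ux}, I_{xw}, \mathbf{1}_{q^2}$ is $\Delta_{xx}$; but evaluating $x^\top(\cdot)\,Y\,x(\cdot)$ at the sample endpoints $s_{j-1},s_j$ and applying It\^o's formula along \eqref{eq:inf-dim-trans} (exactly as in the derivation of \eqref{eq:sys_trajos}) expresses $\Delta_{xx}\,\svec(Y)$ as a linear combination of $I_{xx}\,\vec\big((A_{K,L}^{p,q})^\top Y + Y A_{K,L}^{p,q}\big)$, of $I_{ux}$- and $I_{xw}$-type terms carrying $YB$ and $YD$, and of a $\mathbf{1}_{q^2}$-term carrying the diffusion contribution $\Tr(D^\top Y D)$. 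Hence $\operatorname{range}(\Theta_{K,L}^{p,q}) \subseteq \operatorname{range}\big([I_{xx},\,I_{ux},\,I_{xw},\,\mathbf{1}_{q^2}]\big)$.

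Next, suppose $\Theta_{K,L}^{p,q}$ were column-rank deficient, so $\Theta_{K,L}^{p,q} z = 0$ for some nonzero $z = [\svec(Y)^\top,\ \vec(M)^\top,\ \vec(N)^\top,\ c^\top]^\top$. Substituting the It\^o identity for the $\Delta_{xx}$ block and collecting terms yields
\begin{align*}
I_{xx}\,\vec\!\big(\mathcal{E}_1(Y,M)\big) + I_{ux}\,\vec\!\big(\mathcal{E}_2(M)\big) + I_{xw}\,\vec\!\big(\mathcal{E}_3(Y,N)\big) + \mathbf{1}_{q^2}\,\mathcal{E}_4(Y,M,c) = 0,
\end{align*}
where $\mathcal{E}_1(Y,M) = (A_{K,L}^{p,q})^\top Y + Y A_{K,L}^{p,q}$ plus terms linear in $M$, and $\mathcal{E}_2,\mathcal{E}_3,\mathcal{E}_4$ are the corresponding affine maps read off from the identities preceding \eqref{eq:model_free}. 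Since, by hypothesis, $[I_{xx},\,I_{ux},\,I_{xw},\,\mathbf{1}_{q^2}]$ has full column rank $n(n+1)+mn+nq+q^2$ for every $l\ge l_0$, each coefficient block must vanish: $\mathcal{E}_1 = 0$, $\mathcal{E}_2 = 0$, $\mathcal{E}_3 = 0$, $\mathcal{E}_4 = 0$.

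The decisive step is to unwind this cascade. From $\mathcal{E}_2(M)=0$ one reads off $M=0$, because $M\mapsto\mathcal{E}_2(M)$ is, up to the nonsingular $R$, the identity on $\vec(\hat K_{p+1}^\top R)$; inserting $M=0$ into $\mathcal{E}_1=0$ leaves the homogeneous Lyapunov equation $(A_{K,L}^{p,q})^\top Y + Y A_{K,L}^{p,q} = 0$, and since $A_{K,L}^{p,q}$ is Hurwitz by Lemma \ref{lm:inner_loop_iter}, the unique solution is $Y=0$. Then $\mathcal{E}_3=0$ forces $N=0$ and $\mathcal{E}_4=0$ forces $c=0$, so $z=0$, a contradiction. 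Hence $\Theta_{K,L}^{p,q}$ has full column rank for all $(p,q)\in(\bar p,\bar q)$, which is the claim; this recovers \cite[Lemma 6]{JiangJiang} in the present stochastic setting. I expect the main obstacle to be the bookkeeping in the first paragraph: correctly deriving, via It\^o's lemma and the trace/diffusion term, the explicit linear identity that rewrites the $\Delta_{xx}$ block over $I_{xx}, I_{ux}, I_{xw}, \mathbf{1}_{q^2}$ so the rank hypothesis applies block-by-block, and handling the stochastic integrals $I_{xw}$ pathwise (consistently with how \eqref{eq:sys_trajos} was obtained) rather than discarding them.
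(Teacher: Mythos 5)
The paper does not actually prove this lemma; it imports it verbatim as \cite[Lemma 6]{JiangJiang}, so there is no in-paper proof to compare against. Your reconstruction follows the standard Jiang--Jiang argument for such rank lemmas --- assume a nonzero null vector $z$ of $\Theta_{K,L}^{p,q}$, use the It\^o/HJB identity to re-express the $\Delta_{xx}$-block over the data matrices $I_{xx}, I_{ux}, I_{xw}, \mathbf{1}$, invoke the full-column-rank hypothesis to force each coefficient block to vanish, and then use the Hurwitz property of the closed-loop matrix to kill $Y$ via a homogeneous Lyapunov equation. That is the right architecture, and it is the proof the cited reference gives (transplanted to the stochastic two-player setting).

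There is, however, one step that fails as written: the claim that ``from $\mathcal{E}_2(M)=0$ one reads off $M=0$, because $M\mapsto\mathcal{E}_2(M)$ is, up to the nonsingular $R$, the identity.'' This contradicts your own first paragraph, where you (correctly) observe that the It\^o rewriting of $\Delta_{xx}\,\svec(Y)$ contributes an $I_{ux}$-term carrying $YB$. The $I_{ux}$ coefficient is therefore of the form $2\,\vec(B^\top Y - M)$, so $\mathcal{E}_2=0$ yields $M=B^\top Y$, not $M=0$. The order of the cascade matters: if you set $M=0$ prematurely, the residual $I_{xx}$ coefficient is a Lyapunov equation in $(A+D\hat{L}_q)$ rather than in the Hurwitz matrix $\hat{A}_{K,L}^{p,q}$, and the conclusion $Y=0$ no longer follows. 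The correct unwinding is to first eliminate $M$ and $N$ in favor of $Y$ using $\mathcal{E}_2=\mathcal{E}_3=0$, substitute into $\mathcal{E}_1=0$ so that the $B\hat{K}_p$ and $D\hat{L}_q$ contributions recombine into the closed-loop matrix, obtain $\hat{A}_{K,L}^{(p,q)\top}Y+Y\hat{A}_{K,L}^{p,q}=0$, conclude $Y=0$ by Hurwitzness, and only then back out $M=N=0$ and $c=0$. With that reordering (and the symmetry bookkeeping that lets the rank hypothesis on $I_{xx}$ act on $\svec$ rather than $\vec$), your proof is the standard one and goes through.
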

\begin{remark}
	Lemma \ref{lm:rank_cond} allows the convergence assurance of \autoref{fig:sampling_based} under the condition that the rank condition be fulfilled.
\end{remark}
%
\begin{figure}[tb]
	\centering 
	\includegraphics[width=\columnwidth]{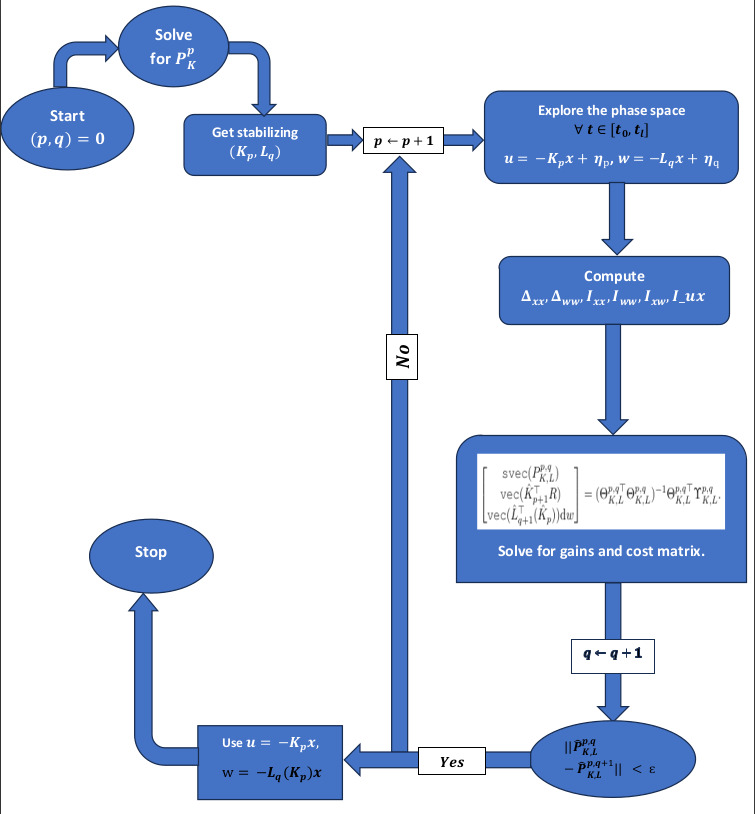}
	\caption{Flowchart for Sampling-based PO in Continuous-time Mixed $\htwo/\hinf$ Stochastic Control.}
	\label{fig:sampling_based}
\end{figure}
\subsubsection{Robustness of Minimizing Controller to Perturbations}
We now analyze the robustness of the sampling-based scheme as a hybrid nonlinear discrete time system gains with continuous-time dynamics. Let $\tilde{P} = P_K - \hat{P}_K$ and $\tilde{K} = K - \hat{K}$ denote errors arising from the inexact updates.
\begin{lemma}[Outer-Loop Robustness to Perturbations]
	For any $K \in \mathcal{K}$, there exists an $e(K) >0$ such that for a perturbation  $\tilde{K}$, $K + \tilde{K} \in \mathcal{K}$, as long as $\norm{\tilde{K}} < e(K)$.
	\label{lm:robust_after_perturb}
\end{lemma}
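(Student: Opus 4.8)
The plan is to exploit the fact that $\mc{K}$, as defined in \eqref{eq:constraint_leqg}, is characterized by two open conditions --- the Hurwitz condition $\lambda_i(A - BK) < 0$ and the strict $\hinf$-norm bound $\|T_{zw}(K)\|_\infty < \gamma$ --- and that both conditions are preserved under sufficiently small perturbations of $K$. First I would recall from the bounded real lemma (cited as \cite[Lemma A.1]{Zhang2021} in the Remark following Lemma \ref{lm:mb_upper_bd}) that $K \in \mc{K}$ if and only if the Riccati equation \eqref{eq:brlemma2} admits a positive definite solution $P_K \succ 0$ with $A_K + \gamma^{-2} D D^\top P_K$ Hurwitz; equivalently, $K \in \mc{K}$ iff $A - BK$ is Hurwitz and there exists $P \succ 0$ solving the strict Riccati inequality $A_K^\top P + P A_K + Q_K + \gamma^{-2} P D D^\top P \prec 0$. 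This reformulation turns the membership question into one about continuity of a matrix-valued map.

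The key steps, in order, are: (i) fix $K \in \mc{K}$ and let $P_K \succ 0$ be the stabilizing solution of \eqref{eq:brlemma2}; (ii) show that the map $K \mapsto A_K^\top P_K + P_K A_K + Q_K + \gamma^{-2} P_K D D^\top P_K$, viewed after substituting a fixed $P = P_K$, is continuous in $K$ (indeed it is polynomial in the entries of $K$), so that replacing $K$ by $K + \tilde K$ changes this expression by a quantity bounded by a polynomial in $\|\tilde K\|$ that vanishes at $\tilde K = 0$; hence for $\|\tilde K\|$ small enough the perturbed expression stays negative definite, i.e. $A_{K+\tilde K}^\top P_K + P_K A_{K+\tilde K} + Q_{K+\tilde K} + \gamma^{-2} P_K D D^\top P_K \prec 0$; (iii) simultaneously ensure $A - B(K+\tilde K)$ remains Hurwitz: since eigenvalues depend continuously on matrix entries and $A - BK$ has all eigenvalues in the open left half-plane, there is $e_1(K) > 0$ with $\|\tilde K\| < e_1(K) \Rightarrow \lambda_i(A - B(K+\tilde K)) < 0$; (iv) invoke the converse direction of the bounded real lemma: the existence of $P_K \succ 0$ satisfying the strict Riccati inequality for the perturbed closed loop, together with Hurwitzness of $A - B(K + \tilde K)$, certifies $\|T_{zw}(K+\tilde K)\|_\infty < \gamma$, hence $K + \tilde K \in \mc{K}$; (v) set $e(K) = \min\{e_1(K), e_2(K)\}$ where $e_2(K)$ is the radius obtained in step (ii), and note $e(K) > 0$.

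For step (ii) I would be explicit: writing $\Delta(\tilde K) := A_{K+\tilde K}^\top P_K + P_K A_{K+\tilde K} + Q_{K+\tilde K} + \gamma^{-2} P_K D D^\top P_K$ and using $A_{K+\tilde K} = A_K - B\tilde K$, $Q_{K+\tilde K} = Q_K + \tilde K^\top R K + K^\top R \tilde K + \tilde K^\top R \tilde K$, one gets $\Delta(\tilde K) = -M_K + \tilde K^\top(RK - B^\top P_K) + (RK - B^\top P_K)^\top \tilde K + \tilde K^\top R \tilde K$, where $-M_K := A_K^\top P_K + P_K A_K + Q_K + \gamma^{-2} P_K D D^\top P_K$ with $M_K \succeq 0$ by \eqref{eq:brlemma2} (in fact $M_K = 0$ at the Riccati solution, but the argument only needs $M_K \succeq 0$; if one prefers strict negativity one perturbs $P_K$ slightly first). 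Then $\|\Delta(\tilde K) - (-M_K)\| \le 2\|RK - B^\top P_K\|\,\|\tilde K\| + \|R\|\,\|\tilde K\|^2$, which can be made smaller than $\lambda_{\min}(M_K + \epsilon I)$ for a small relaxation $\epsilon > 0$, forcing $\Delta(\tilde K) \prec 0$. The main obstacle is the borderline case $M_K = 0$ (the Riccati \emph{equation} rather than a strict inequality): there the above perturbation estimate does not immediately yield strictness, so I would first replace $P_K$ by $P_K + \epsilon_0 I$ which, by continuity of the Riccati operator and a Lyapunov argument, still solves a strict Riccati inequality for small $\epsilon_0$, and run the perturbation bound against that slack. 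This is essentially the only delicate point; everything else is continuity of eigenvalues and of polynomial maps.
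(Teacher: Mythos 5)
Your argument is correct in substance but follows a genuinely different route from the paper. The paper fixes $K\in\mc{K}$ and applies the implicit function theorem to the map $F(\tilde P,\tilde K)$ obtained by substituting $(P_K+\tilde P,\,K+\tilde K)$ into the Riccati equation \eqref{eq:brlemma2}: invertibility of $\partial\mathcal{F}/\partial\vec(\tilde P)$ at the origin (a Lyapunov operator for the Hurwitz matrix $A_K+\gamma^{-2}DD^\top P_K$) yields a continuously differentiable branch $\tilde K\mapsto\tilde P$ with $\tilde P\to 0$, so the perturbed Riccati \emph{equation} retains a positive definite solution and \cite[Lemma A.1]{Zhang2021} gives $K+\tilde K\in\mc{K}$. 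You instead keep a \emph{fixed} certificate $P$ and verify the strict Riccati \emph{inequality} for $K+\tilde K$ by a direct polynomial perturbation bound, then invoke the converse bounded real lemma. Your route is more elementary (no implicit function theorem) and has the advantage of not needing to establish solvability of the perturbed Riccati equation at all; the paper's route additionally delivers the differentiable dependence of $P_{K+\tilde K}$ on $\tilde K$, which the paper reuses downstream (e.g.\ in Lemma \ref{lm:invariantOut} and the ISS theorems), so it is not merely a stylistic choice there. One point in your write-up needs tightening: to convert the Riccati equation into a strict inequality you propose replacing $P_K$ by $P_K+\epsilon_0 I$, but the resulting residual is $\epsilon_0\bigl[(A_K+\gamma^{-2}DD^\top P_K)^\top+(A_K+\gamma^{-2}DD^\top P_K)\bigr]+\gamma^{-2}\epsilon_0^2DD^\top$, and a Hurwitz matrix need not have negative definite symmetric part, so $\epsilon_0 I$ does not work in general. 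The fix is the Lyapunov argument you allude to: take $S\succ 0$ with $(A_K+\gamma^{-2}DD^\top P_K)^\top S+S(A_K+\gamma^{-2}DD^\top P_K)=-I$ and use $P_K+\epsilon_0 S$, whose residual is $-\epsilon_0 I+\gamma^{-2}\epsilon_0^2SDD^\top S\prec 0$ for small $\epsilon_0$. With that substitution (and noting that your step (iii) is then redundant, since $P\succ0$, $Q_{K+\tilde K}\succ0$ and the strict inequality already force $A-B(K+\tilde K)$ Hurwitz by Lyapunov), your proof goes through.
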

As long as $\tilde{K}$ is small, if we start with a robustly stabilizing $K \in \mathcal{K}$, we can guarantee the feasibility of the iterates. 
\begin{theorem} \label{thm: ISS outer-loop}
	The inexact outer loop is small-disturbance ISS. That is, for any $h>0$ and $\hat{K}_0 \in \mathcal{K}_h$, if $\norm{\tilde{K}} < f(h)$, there exist a $\mathcal{KL}$-function $\beta_1(\cdot,\cdot)$ and a $\mathcal{K}_\infty$-function $\gamma_1(\cdot)$ such that 
	\begin{align}
		\norm{{P}^p_{\hat{K}} - P^\star} \leq \beta_1(\norm{{P}^0_{\hat{K}} - P^*}, p) + \gamma_1(\norm{\tilde{K}}).
	\end{align}
\end{theorem}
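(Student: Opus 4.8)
The plan is to treat the inexact outer-loop recursion \eqref{eq:inexact_iter_oloop} as a perturbed version of the exact recursion analyzed in Lemma \ref{lm:oloop_iter} and Theorem \ref{thm:oloop_converge_rate}, and to invoke a standard ISS-Lyapunov argument with the trace gap $V_p := \Tr({P}^p_{\hat K} - P^\star)$ playing the role of the Lyapunov function. First I would establish that the iterates remain feasible: using Lemma \ref{lm:robust_after_perturb}, as long as $\|\tilde K\| < f(h)$ for a suitably small $f(h)$ (chosen below), the perturbed gain $\hat K_{p+1} = \bar K_{p+1} + \tilde K_{p+1}$ stays in $\mathcal{K}$ whenever $\bar K_{p+1} = R^{-1}B^\top \hat P_K^p \in \mathcal{K}$, so the Riccati solution $\hat P_K^p$ in \eqref{eq:inexact_iter_oloop} is well-defined, positive definite, and the sublevel set $\mathcal{K}_h$ is rendered forward-invariant up to the perturbation. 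This keeps all the constants $b(h), c(h)$ from Lemmas \ref{lm:bound_EK} and \ref{lm:mb_upper_bd} uniformly bounded along the trajectory.

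Next I would derive the one-step contraction-with-offset inequality. Writing $\hat K_{p+1} = \bar K_{p+1} + \tilde K_{p+1}$ where $\bar K_{p+1}$ is the exact Kleinman update from $\hat P_K^p$, the exact analysis of Theorem \ref{thm:oloop_converge_rate} applied to $\bar K_{p+1}$ gives $\Tr(\bar P^{p+1}_{K} - P^\star) \le \alpha(h)\,V_p$ with $\alpha(h) = 1 - c(h)/(2nb(h)) \in (0,1)$. It then remains to bound the additional error incurred by replacing $\bar K_{p+1}$ with $\hat K_{p+1}$: since $P$ depends on the gain through a Lyapunov equation whose coefficients are analytic (indeed Lipschitz on the compact feasible set $\mathcal{K}_h$) in $K$, one has $\|\hat P^{p+1}_K - \bar P^{p+1}_K\|_F \le \ell(h)\,\|\tilde K_{p+1}\| \le \ell(h)\,\|\tilde K\|_\infty$ for some Lipschitz constant $\ell(h)$; converting to trace via Lemma \ref{lm:normTrace} and combining yields
\begin{align}
	V_{p+1} \le \alpha(h)\, V_p + \sqrt{n}\,\ell(h)\,\|\tilde K\|_\infty.
\end{align}
This is precisely the form of a discrete-time ISS estimate for the scalar nonnegative sequence $\{V_p\}$.

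Finally I would unroll this recursion: iterating gives
\begin{align}
	V_p \le \alpha(h)^p\, V_0 + \frac{\sqrt{n}\,\ell(h)}{1-\alpha(h)}\,\|\tilde K\|_\infty,
\end{align}
and then pass back from the trace gap to the Frobenius-norm gap using $\|{P}^p_{\hat K}-P^\star\|_F \le \Tr({P}^p_{\hat K}-P^\star)$ (valid since the difference is PSD by the monotonicity in Lemma \ref{lm:oloop_iter}, which survives the feasibility argument). Setting $\beta_1(s,p) := \alpha(h)^p\, s$ — a $\mathcal{KL}$-function — and $\gamma_1(s) := \frac{\sqrt n\,\ell(h)}{1-\alpha(h)}\, s$ — a $\mathcal{K}_\infty$-function — delivers the claimed bound. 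I expect the main obstacle to be the feasibility/invariance bookkeeping in the first step: one must pick $f(h)$ small enough that the perturbed iterates never exit $\mathcal{K}_h$ (so that $\alpha(h), \ell(h)$ do not blow up), which is a mild circularity — the bound on $V_p$ is needed to guarantee $\hat K_p$ stays in a region where the bound holds — resolved by an induction on $p$ that simultaneously carries the feasibility claim and the inequality $V_p \le V_0 + \gamma_1(\|\tilde K\|_\infty)$, shrinking $f(h)$ if necessary so the right-hand side stays below $h$.
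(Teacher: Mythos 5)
Your proposal follows essentially the same route as the paper: the paper's Lemma \ref{lm:invariantOut} supplies exactly the one-step forward-invariance of $\mathcal{K}_h$ (resolving the circularity you flag via the same induction you sketch), and its inequality \eqref{eq:PKPK'DiffTrace} is precisely your contraction-with-offset estimate on the trace gap $\Tr(P^p_{\hat K}-P^\star)$, which is then unrolled geometrically and converted to the Frobenius norm. The only cosmetic difference is that the paper obtains the offset by completing squares in the Riccati difference, giving a term quadratic in $\norm{\tilde K}$ rather than your linear Lipschitz bound; both are $\mathcal{K}_\infty$ gains, so the conclusion is unaffected.
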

\begin{proof}
	From Lemma \ref{lm:invariantOut}, $\hat{K}^p_K \in \mathcal{K}_h$ for any $p \in \mathbb{N}_+$. From \eqref{eq:PKPK'DiffTrace}, at the $p$'th iteration, we have
	\begin{align}\label{eq:PiDiffTrace}
		\begin{split}
			&\Tr({P}^p_{\hat{K}} - P^\star) \leq (1-\underline{f}_1(h)) \Tr({P}^{p-1}_{\hat{K}} - P^\star) \\
			&+ \bar{f}_2(h) \norm{R}\norm{\tilde{K}^{p}_K}^2.
		\end{split}
	\end{align}
	Repeating \eqref{eq:PiDiffTrace} for $p, p-1,\cdots,1$, 
	\begin{align}\label{eq:PiDiffTrace2}
			\Tr[{P}^p_{\hat{K}} - P^\star] \leq (1-\underline{f}_1)^{p} \Tr({P}_{\hat{K}}^{1} - P^\star) + \frac{\bar{f}_2 \norm{R}  \norm{\tilde{K}}_\infty^2}{\underline{f}_1(h)}.
	\end{align}
	It follows from \eqref{eq:PiDiffTrace2} and ~\cite[Theorem 2]{Mori1988} that 
	\begin{align}\label{eq:PiDiffTrace3}
			&\norm{P^p_{\hat{K}} - P^\star}_F \leq (1-\underline{f}_1)^{p} \sqrt{n}\norm{{P}^{1}_{\hat{K}} - P^\star}_F  + \frac{\bar{f}_2 \norm{R}\norm{\tilde{K}}_\infty^2}{\underline{f}_1}.
	\end{align}
	As $p\rightarrow \infty$, $P^p_{\hat{K}} \rightarrow P^\star$. The radius of the neighbor of $P^\star$ is proportional to $\norm{\tilde{K}}_\infty^2$. Thus, the proof follows.
\end{proof}


\subsubsection{Robustness of Maximizing Controller to Perturbations}
The perturbed inner-loop iteration \eqref{eq:inexact_iter_iloop} has inexact matrix $\hat{A}_{K,L}^{p,q}$, and sequences $\{\hat{L}_{q+1}(K_p)\}_{q=0}^{\infty}$, and $\{\hat{P}_{K,L}^{p,q}\}_{q=0}^{\infty}$.  We next analyze its robustness to perturbations when it differs from the exact loop matrices and sequences. 

\begin{lemma}[Stability of the Inner-Loop's System Matrix]
	Given $K \in \breve{\mathcal{K}}$, there exists a $g \in \reline_+$, such that if $\norm{\tilde{L}_{q+1}(K_p)}_F \leq g$,  $\hat{A}_{K,L}^{p,q}$ is Hurwitz for all $q \in \mathbb{N}_+$.
	\label{lm:ISS_InnerLoop}
\end{lemma}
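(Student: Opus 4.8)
The goal is a perturbation/continuity statement: if the gain error $\tilde L_{q+1}(K_p)$ is small in Frobenius norm, the perturbed closed-loop matrix $\hat A_{K,L}^{p,q} = A_K^p + D\hat L_q(K_p)$ stays Hurwitz. The plan is to treat this as a robustness-of-Hurwitzness argument around the \emph{exact} inner-loop iterates, leveraging the machinery already in place. First I would recall from Lemma \ref{lm:inner_loop_iter} that for $K\in\breve{\mathcal K}$ the exact matrices $A_{K,L}^{p,q}=A_K^p+DL_q(K_p)$ are Hurwitz for every $q\in\mathbb N_+$, and moreover (statement 2 of that lemma) the cost matrices $P_{K,L}^{p,q}$ are monotone and uniformly bounded, say $\|P_{K,L}^{p,q}\|\le \bar P$ for all $q$, with $\bar P$ depending only on $K$ (via $P_K^p$, which is the monotone limit). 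This uniform bound is the key that lets the radius $g$ be chosen \emph{independently of $q$}.

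The second step is to quantify ``how Hurwitz'' $A_{K,L}^{p,q}$ is, uniformly in $q$. Since $A_{K,L}^{p,q}$ is Hurwitz and solves the Lyapunov-type equation in \eqref{eq:riccati_inner_loop_iter} with right-hand side $Q_K^p - \gamma^2 L_q^\top L_q \preceq Q_K^p$ (bounded) and solution $P_{K,L}^{p,q}$ uniformly bounded, one extracts a uniform bound on the stability margin — e.g. via the integral representation $\int_0^\infty e^{A_{K,L}^{p,q\top}t}(\cdot)e^{A_{K,L}^{p,q}t}\,dt$ being uniformly bounded below and above, which by a standard Lyapunov-margin estimate (the same $c(\Phi)=\log(5/4)/\|\Phi\|$-type bound used in Lemma \ref{lm:mb_upper_bd}) yields $\sup_q\|e^{A_{K,L}^{p,q}t}\|\le \kappa e^{-\lambda t}$ for constants $\kappa,\lambda>0$ depending only on $K$. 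Then the perturbed matrix is $\hat A_{K,L}^{p,q} = A_{K,L}^{p,q} + D\,\tilde L_{q+1}(K_p)$ (absorbing the iterate-index shift; note $\hat L_q = L_q + \tilde L_q$), a bounded additive perturbation of size $\|D\|\,\|\tilde L_{q+1}(K_p)\|_F$. A Bauer--Fike / Gronwall-type argument then shows $\hat A_{K,L}^{p,q}$ remains Hurwitz provided $\|D\|\,\|\tilde L_{q+1}(K_p)\|_F < \lambda/\kappa$; setting $g := \lambda/(\kappa\|D\|)$ (or $g:=\lambda/(2\kappa\|D\|)$ for a safety margin) finishes the argument, and $g$ depends only on $K$ as required.

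Concretely I would carry it out as: (i) invoke Lemma \ref{lm:inner_loop_iter} for the uniform boundedness $\|P_{K,L}^{p,q}\|\le \bar P$ and Hurwitzness of the exact $A_{K,L}^{p,q}$; (ii) derive a $q$-uniform exponential decay bound $\|e^{A_{K,L}^{p,q}t}\|\le\kappa e^{-\lambda t}$ from the Lyapunov equation and the bound $\bar P$, reusing the $\log(5/4)$-norm technique of Lemma \ref{lm:mb_upper_bd}; (iii) write $\hat A_{K,L}^{p,q}=A_{K,L}^{p,q}+D\tilde L_{q+1}(K_p)$ and apply a perturbation bound for Hurwitz matrices to get the explicit threshold $g=\lambda/(2\kappa\|D\|)$. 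The main obstacle I anticipate is step (ii): making the exponential-decay constants genuinely \emph{uniform in $q$} rather than $q$-dependent — this hinges critically on the uniform upper bound on $P_{K,L}^{p,q}$ (which exists because $P_{K,L}^{p,q}\preceq P_{K,L}^{p,\bar q}\preceq P_K^p$ by Lemma \ref{lm:inner_loop_iter}) and on a corresponding uniform \emph{lower} bound ensuring the stability margin does not degenerate as $q\to\infty$; one must check the right-hand side $Q_K^p-\gamma^2 L_q^\top L_q$ of \eqref{eq:riccati_inner_loop_iter} stays uniformly positive definite, which follows since $Q_K^p\succeq Q\succ 0$ and, along the convergent inner loop, $\gamma^2 L_q^\top L_q$ is controlled.
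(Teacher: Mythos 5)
Your proposal is correct in strategy and reaches the lemma by a genuinely different route from the paper. The paper argues by induction on $q$: assuming $\hat{A}_{K,L}^{p,q}$ is Hurwitz, it rewrites the exact inner-loop equation \eqref{eq:riccati_inner_loop_iter} around the perturbed closed-loop matrix so that the exact, positive-definite $P_{K,L}^{p,q}$ serves as a Lyapunov certificate for $\hat{A}_{K,L}^{p,q+1}$, with residual term $Q_K-\gamma^{-2}\hat{P}DD^\top\hat{P}+\gamma^{-2}\tilde{P}DD^\top\tilde{P}-\tilde{L}^\top D^\top P-PD\tilde{L}$; the threshold $e=\sigma_{\min}(Q_K-\gamma^{-2}PDD^\top P)/(2\norm{D^\top P_{K,L}^{p,q}})$ is chosen precisely to keep that residual positive semidefinite, and Hurwitzness follows from Lemma \ref{lm:inverseLya}. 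You instead establish a $q$-uniform exponential-decay estimate $\norm{e^{A_{K,L}^{p,q}t}}\leq\kappa e^{-\lambda t}$ for the exact matrices and treat $\hat{A}_{K,L}^{p,q}=A_{K,L}^{p,q}-D\tilde{L}_q$ as a bounded additive perturbation handled by a Gr\"onwall bound, giving $g=\lambda/(2\kappa\norm{D})$. Both arguments draw their uniformity in $q$ from the same source, namely $P_{K,L}^{p,0}\preceq P_{K,L}^{p,q}\preceq P_{K}^{p}$ from Lemma \ref{lm:inner_loop_iter}; yours is more modular and yields an explicit decay-rate threshold, while the paper's avoids semigroup estimates and reads the bound directly off the Lyapunov data. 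The one sub-step you should repair is in your item (ii): do not base the uniform stability margin on positive definiteness of $Q_K^p-\gamma^{2}L_q^\top L_q$, since that matrix need not be positive semidefinite (the bounded real lemma gives no such sign condition, and the paper's own parallel assertion is equally unsupported). The clean fix, already implicit in your plan, is that $A_{K,L}^{p,q}$ converges to the Hurwitz limit $A_K^p+\gamma^{-2}DD^\top P_K^p$ while each iterate is Hurwitz, so continuity of the spectral abscissa on the compact set $\{A_{K,L}^{p,q}\}_{q}\cup\{A_K^\gamma\}$ delivers uniform $\kappa,\lambda$ without any sign condition on the Lyapunov right-hand side.
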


\begin{lemma}
	For any  $h>0$ and $K \in \mathcal{K}_h$, let ${K}' = R^{-1}B^\top P_K$, where $P_K$ is the solution of \eqref{eq:brlemma2}, and $\hat{K}' = {K}' + \tilde{K}$. Then, there exists $f(h)>0$, such that $\hat{K}' \in \mathcal{K}_h$ as long as $\norm{\tilde{K}} < f(h)$.
	\label{lm:invariantOut}
\end{lemma}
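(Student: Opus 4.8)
The plan is to decompose $\Tr(P_{\hat K'}-P^\star)$ into the deviation left after one \emph{exact} outer-loop step --- which Theorem~\ref{thm:oloop_converge_rate} already contracts strictly below $h$ --- plus a perturbation term that I will make arbitrarily small by shrinking $\norm{\tilde K}$, and then choose $f(h)$ so that the two contributions together stay under $h$.

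First I would establish that the unperturbed update $K'=R^{-1}B^\top P_K$ lies \emph{strictly} inside $\mathcal{K}_h$. Since $K\in\mathcal{K}_h\subseteq\mathcal{K}$, the Remark following Lemma~\ref{lm:mb_upper_bd} provides a unique $P_K\succ0$ solving \eqref{eq:brlemma2} with $A_K+\gamma^{-2}DD^\top P_K$ Hurwitz, so $K'$ is precisely the outer-loop iterate emanating from $K$ in \eqref{eq:riccati_outer_loop_iter}. Lemma~\ref{lm:oloop_iter} then gives $K'\in\mathcal{K}$ and $P_K\succeq P_{K'}\succeq P^\star$, while Theorem~\ref{thm:oloop_converge_rate} refines this to $\Tr(P_{K'}-P^\star)\le\alpha(h)\,\Tr(P_K-P^\star)\le\alpha(h)h$ with $\alpha(h)=1-c(h)/(2nb(h))<1$. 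This leaves a slack of $(1-\alpha(h))h>0$ available to absorb $\tilde K$.

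Next I would control the shift of the Riccati solution caused by $\tilde K$. Working on $\mathcal{K}_h$, I would use that $\mathcal{K}_h$ is bounded (it is a sublevel set of the coercive map $K\mapsto\Tr(P_K-P^\star)$; recall $P_K\preceq P_K^0$) and relatively closed in the open set $\mathcal{K}$, hence compact and at positive distance from $\partial\mathcal{K}$. On a neighbourhood of $\mathcal{K}_h$ inside $\mathcal{K}$ the solution map $K\mapsto P_K$ of \eqref{eq:brlemma2} is continuously differentiable (the implicit function theorem applies, the associated Lyapunov operator being invertible by the Hurwitz property above), so there is a uniform constant $L(h)>0$ with $\norm{P_{\hat K'}-P_{K'}}_F\le L(h)\,\norm{\tilde K}$ once $\norm{\tilde K}$ is small, whence $\Tr(P_{\hat K'}-P_{K'})\le\sqrt n\,L(h)\,\norm{\tilde K}$ by Lemma~\ref{lm:normTrace}. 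In parallel, Lemma~\ref{lm:robust_after_perturb} gives, for each admissible $K'$, a radius $e(K')>0$ with $K'+\tilde K\in\mathcal{K}$; compactness of $\mathcal{K}_h$ together with positivity of the distance to $\partial\mathcal{K}$ yields $\underline e(h):=\inf_{K'}e(K')>0$.

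Finally I would set $f(h)=\min\{\underline e(h),\,(1-\alpha(h))h/(\sqrt n\,L(h))\}>0$. If $\norm{\tilde K}<f(h)$, the first term in the minimum gives $\hat K'=K'+\tilde K\in\mathcal{K}$, and
\begin{align*}
\Tr(P_{\hat K'}-P^\star)&=\Tr(P_{\hat K'}-P_{K'})+\Tr(P_{K'}-P^\star)\\
&\le\sqrt n\,L(h)\,\norm{\tilde K}+\alpha(h)h<(1-\alpha(h))h+\alpha(h)h=h,
\end{align*}
so $\hat K'\in\mathcal{K}_h$. The step I expect to be the main obstacle is the \emph{uniformity}: upgrading the pointwise $C^1$ regularity of the bounded-real Riccati solution map and the pointwise robustness radius $e(\cdot)$ into the $K$-independent constants $L(h)$ and $\underline e(h)$. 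This hinges on compactness of $\mathcal{K}_h$ (equivalently, coercivity of the mixed $\htwo/\hinf$ cost on $\mathcal{K}$); a fully constructive route would instead re-use the explicit perturbation estimates already underlying Lemmas~\ref{lm:bound_EK} and~\ref{lm:robust_after_perturb}.
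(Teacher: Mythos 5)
Your proposal is correct, but it takes a genuinely different route from the paper. You split $\Tr(P_{\hat K'}-P^\star)$ into an \emph{exact-step} contraction, $\Tr(P_{K'}-P^\star)\le\alpha(h)h$ borrowed from Theorem~\ref{thm:oloop_converge_rate}, plus a perturbation of the Riccati solution map bounded linearly in $\norm{\tilde K}$ via the implicit function theorem on a neighbourhood of the compact set $\mathcal{K}_h$. The paper instead never separates the two effects: it subtracts the bounded-real Riccati equation \eqref{eq:AK'PK'} at the \emph{perturbed} gain $\hat K'$ from \eqref{eq:brlemma2} at $K$, completes the square, and reads off from the resulting Lyapunov inequality \eqref{eq:PKPK'Diff}--\eqref{eq:PKPK'DiffTrace} a single estimate $\Tr(P_{\hat K'}-P^\star)\le(1-\underline f_1(h))\Tr(P_K-P^\star)+\bar f_2(h)\norm{R}\norm{\tilde K}^2$, in which the contraction factor (via Lemmas~\ref{lm:mb_upper_bd} and~\ref{lm:bound_EK}) and the perturbation term appear together, the latter \emph{quadratic} in $\norm{\tilde K}$ with the explicit constant $\bar f_2(h)=\sup_{\hat K'\in\mathcal{K}_h}\Tr\bigl(\int_0^\infty e^{A_{\hat K'}^{\star\top}t}e^{A_{\hat K'}^{\star}t}\,\df t\bigr)$; the threshold is then $\sqrt{\underline f_1(h)h/(\bar f_2(h)\norm{R})}$ rather than your $(1-\alpha(h))h/(\sqrt n\,L(h))$. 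Both arguments lean on the same two uniformity facts you correctly flagged as the crux --- compactness of $\mathcal{K}_h$ to get $\underline e(h)=\inf_{K\in\mathcal{K}_h}e(K)>0$ from Lemma~\ref{lm:robust_after_perturb}, and inf/sup of continuous functionals over $\mathcal{K}_h$ --- and the paper asserts them with no more detail than you do. What your modular route buys is reuse of Theorem~\ref{thm:oloop_converge_rate} as a black box and a cleaner first-order perturbation picture; what it loses is constructiveness of $L(h)$ and, more importantly, the explicit inequality \eqref{eq:PKPK'DiffTrace} with its $\norm{\tilde K}^2$ dependence, which the paper reuses verbatim as the one-step recursion in the ISS Theorem~\ref{thm: ISS outer-loop}. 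Two minor points to tidy: Lemma~\ref{lm:normTrace} is stated for positive semi-definite matrices, so for the possibly indefinite difference $P_{\hat K'}-P_{K'}$ you should invoke $|\Tr(M)|\le\sqrt n\norm{M}_F$ (Cauchy--Schwarz) directly; and your claim that $\mathcal{K}_h$ is compact via coercivity of $K\mapsto\Tr(P_K-P^\star)$ deserves at least a citation, though the paper is equally terse on this point.
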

\begin{theorem} 
	Assume $\norm{\tilde{L}_q(K_p)} < e$ for all $q \in \mathbb{N}_+$. There exists $\hat{\beta}(K) \in [0,1)$, and $\lambda(\cdot) \in \breve{\mathcal{K}}_\infty$, such that
	\begin{align}
		\norm{\hat{P}_{K,L}^{p,q}  - P_{K,L}^{p,q}}_F \leq \hat{\beta}^{q-1}(K) \Tr(P_{K,L}^{p,q} 
		) + \lambda(\norm{\tilde{L}}_\infty).
	\end{align}
	\label{thm:innerISS}
\end{theorem}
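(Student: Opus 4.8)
The plan is to establish the inner-loop counterpart of Theorem~\ref{thm: ISS outer-loop}. Fix the (exact) outer iterate $K=K_p\in\breve{\mathcal K}$, so that $A_K=A-BK$ and $Q_K$ are exact and only the maximizing gain carries the perturbation, and reduce the perturbed recursion~\eqref{eq:inexact_iter_iloop} to a one-dimensional input-to-state recursion in the inner index $q$ whose decay rate is the one furnished by Theorem~\ref{thm:iloop_converge_rate}. First I would secure feasibility: take $e\le g$, with $g$ as in Lemma~\ref{lm:ISS_InnerLoop}, so that $\hat A_{K,L}^{p,q}=A_K+D\hat L_q(K_p)$ is Hurwitz for every $q$ whenever $\norm{\tilde L_q(K_p)}<e$; then~\eqref{eq:inexact_iter_iloop_dyna} has a unique $\hat P_{K,L}^{p,q}\succeq 0$, and Lemma~\ref{lm:inner_loop_iter} supplies the sandwich $0\preceq P_{K,L}^{p,q}\preceq P_K^p$ that makes the constants below independent of $q$.

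Next I would derive the error dynamics. Subtracting the exact policy-evaluation equation of~\eqref{eq:riccati_inner_loop_iter} from~\eqref{eq:inexact_iter_iloop_dyna} and setting $\tilde P^{p,q}:=\hat P_{K,L}^{p,q}-P_{K,L}^{p,q}$, $\Delta A_q:=D\tilde L_q(K_p)$, the error solves the Lyapunov identity
\begin{align*}
	\hat A_{K,L}^{p,q\top}\tilde P^{p,q}+\tilde P^{p,q}\hat A_{K,L}^{p,q}+\mathcal E_q=0,\qquad \mathcal E_q=\Delta A_q^\top P_{K,L}^{p,q}+P_{K,L}^{p,q}\Delta A_q-\gamma^2\big(\hat L_q^\top\hat L_q-L_q^\top L_q\big).
\end{align*}
Because $\hat L_q-L_q=\gamma^{-2}D^\top\tilde P^{p,q-1}+\tilde L_q(K_p)$, the forcing $\mathcal E_q$ splits into a part that merely propagates $\tilde P^{p,q-1}$ (the piece controlled by the exact-iteration contraction) plus a part of norm at most $c_0(K)\norm{\tilde L}_\infty$, using $\norm{P_{K,L}^{p,q}}\le\norm{P_K^p}$ and the uniform gain bounds available on $\breve{\mathcal K}$.

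Put $a_q:=\Tr(P_K^p-\hat P_{K,L}^{p,q})$. Comparing $\hat P_{K,L}^{p,q+1}$ with the iterate obtained by applying the \emph{exact} update to $\hat P_{K,L}^{p,q}$, and re-running the estimates behind Theorem~\ref{thm:iloop_converge_rate} from that point (Lemmas~\ref{lm:mb_upper_bd}, \ref{lm:traceInner}, \ref{lm:normTrace}, with $c(K)$, $d(K)$ as in Lemma~\ref{lm:traceInner} and~\eqref{eq:defdK}), I obtain the scalar ISS recursion
\begin{align*}
	a_{q+1}\le\hat\beta(K)\,a_q+\lambda_0\big(\norm{\tilde L}_\infty\big),\qquad \hat\beta(K)\in[0,1),\quad \lambda_0(\cdot)\in\breve{\mathcal K}_\infty,\ \lambda_0(0)=0,
\end{align*}
where $\hat\beta(K)$ is the rate $1-d(K)/2c(K)$ of Theorem~\ref{thm:iloop_converge_rate}, possibly enlarged slightly and still below $1$ after shrinking $e$ so that the perturbation does not spoil the contraction. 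Unrolling from $q=1$—where the initialization convention of~\S\ref{ssec:hybrid} gives $\tilde L_1(K_p)=0$, hence $\hat P_{K,L}^{p,1}=P_{K,L}^{p,1}$ and $a_1=\Tr(P_K^p-P_{K,L}^{p,1})\ge 0$—and summing the geometric series gives $a_q\le\hat\beta^{q-1}(K)a_1+\lambda_0(\norm{\tilde L}_\infty)/(1-\hat\beta(K))$. Then the triangle inequality with Lemma~\ref{lm:normTrace} and the exact contraction of Theorem~\ref{thm:iloop_converge_rate},
\begin{align*}
	\norm{\hat P_{K,L}^{p,q}-P_{K,L}^{p,q}}_F\le a_q+\Tr(P_K^p-P_{K,L}^{p,q})\le a_q+\beta^{q-1}(K)\,\Tr(P_K^p-P_{K,L}^{p,1}),
\end{align*}
collapses the two geometric transients into one $\hat\beta^{q-1}(K)\Tr(P_{K,L}^{p,q})$ term (absorbing the multiplicative constant into $\hat\beta(K)$) and identifies $\lambda(\cdot):=\lambda_0(\cdot)/(1-\hat\beta(K))\in\breve{\mathcal K}_\infty$, which is the asserted bound.

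The crux is the third step. Since $\tilde P^{p,q}$ is driven by $\tilde P^{p,q-1}$ through the recomputed gain $\hat L_q=\gamma^{-2}D^\top\hat P_{K,L}^{p,q-1}+\tilde L_q$, a na\"ive norm bound on the solution of the error Lyapunov equation produces a propagation coefficient that need not be below $1$; one has to recycle the exact contraction mechanism of Theorem~\ref{thm:iloop_converge_rate}—compare the perturbed iterate to the exact update of the previous perturbed iterate, contract its gap to $P_K^p$ by $\beta(K)$, and treat $\tilde L_q$ as an additive disturbance—while also verifying that the perturbed iterates stay in a region where $\hat A_{K,L}^{p,q}$ remains Hurwitz and (ideally) $0\preceq\hat P_{K,L}^{p,q}\preceq P_K^p$, so that $a_q$ keeps its sign and the $q$-uniform constants inherited from the sandwich of Lemma~\ref{lm:inner_loop_iter} stay valid; this is precisely what forces $e$ to be chosen small.
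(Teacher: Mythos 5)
Your proposal follows essentially the same route as the paper's proof: Lemma \ref{lm:ISS_InnerLoop} secures Hurwitzness of $\hat{A}_{K,L}^{p,q}$ under the smallness condition on $\tilde{L}_q(K_p)$, the perturbed and exact Lyapunov equations are differenced, the integral representation together with Lemmas \ref{lm:mb_upper_bd}, \ref{lm:traceInner} and \ref{lm:normTrace} yields a scalar trace recursion with rate $\hat{\beta}(K)=1-\log(5/4)/(2f_K c(K))$ and additive disturbance proportional to $\gamma^2\norm{\tilde{L}}_\infty^2$, and geometric unrolling gives the stated bound with $\lambda(\cdot)$ as the accumulated disturbance term. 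The only cosmetic difference is that you track $a_q=\Tr(P_K^p-\hat{P}_{K,L}^{p,q})$ and triangulate against the exact iterate sequence at the end (invoking Theorem \ref{thm:iloop_converge_rate}), whereas the paper derives the recursion on the gap to the target directly; both arguments use the same lemmas and land on the same constants.
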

From Theorem \ref{thm:innerISS}, as $q \to \infty$, $\hat{P}_{K,L}^{p,q}$ approaches the solution $P_K$ and enters the ball centered at $P_{K,L}^{p,q}$ with radius proportional to $\norm{\tilde{L}}_\infty$. Hence, the proposed inner-loop iterative algorithm well approximates $P_{K,L}^{p,q}$.

	\section{Numerical Experiments}
\label{sec:sim}
We consider a humanoid robot model~\cite{Fierro2013,Pristovani_2018} in the form of  a three-link kinematic chain; and a standard double pendulum. The humanoid is non-minimum phase, underactuated, and possesses badly damped poles. Its passive joint can be modeled as a Wiener process noise that additively perturbs its dynamics. 

This model has three states: two upper hinge (the hip and knee) \textit{actuated joints}  and a lower hinge (the ankle) \textit{passive joint}. The dynamics is $x=[\theta_1, \theta_2, \theta_3, \dot{\theta}_1, \dot{\theta}_2, \dot{\theta}_3]^\top$, where $\theta_1$, $\theta_2$, and $\theta_3$ are the angles of the ankle, hip, and knee respectively. The linearized model of the triple inverted pendulum admits a form of the infinite dimensional linear PDE in \eqref{eq:po_leqg_opt}, where $A \in \mathbb{R}^{6 \times 6}$ and $B \in \mathbb{R}^{6 \times 2}$(see~\cite[Section 3]{FURUT1984}), and $D = \begin{bmatrix} 0_{3 \times 3}, I_{3}\end{bmatrix}^\top$. We impose an $\hinf$ norm bound of $\gamma = 5$ on the robot, set the initial state  to $x(0) = [0,-5,10,10,-10,10]^\top$ and set $C = \begin{bmatrix} I_6 , 0_{2 \times 6} \end{bmatrix}^\top, \quad E = \begin{bmatrix} 0_{6 \times 2} , I_2 \end{bmatrix}^\top$. Throughout, $w(t)$ is set to a Wiener process such that its time derivative  $dw$ is drawn from a zero-mean Gaussian distribution with variance $\mc{X}^2$. 
We chose a step size, $dt = 0.0001$. We next report our findings for the model-based, model-free algorithm, and the natural policy gradient algorithm (NPG)~\cite{ShamNPG}. For other numerical experiment reports, we refer readers to our recent conference paper~\cite{LekanIFAC}. 

\subsection{Model-based Mixed Design vs. NPG}
Let us describe numerical experiments on the algorithms described so far. At each iteration,  $\tilde{K}_p$ is sampled from a uniform Gaussian distribution whose Frobenius norm is $0.15$. We found 
 \begin{small}
	\begin{align*}
		\hat{K}_0 = \begin{bmatrix} -203.3  &-74.2 &-31.4 &-67.7  &-28.4  &-16.5 \\
			-529.5 &-198.8 &-77.8 &-175.5 &-78.7 &-39.0 \end{bmatrix}.
	\end{align*}
\end{small}
The results for running the model-based and NPD algorithms are shown in Figures \ref{fig:norm_NG_robust} and \ref{fig:norm_Nature_robust}. The robust mixed design PO scheme approaches the optimal solution after the $5$'th iteration (See \autoref{fig:norm_NG_robust}). At the last iteration, the deviation from the optimal cost matrix\footnote{Calculated as ${\norm{\hat{P}^{20}_{K}- P^\star}}/{\norm{P^\star}_F}$.} is $2.9\%$, while the gain error\footnote{Calculated as ${\norm{\hat{K}^{20}_{K} - K^\star}_F}/{\norm{K^\star}_F}$.} 
is $2.6\%$. In contrast, NPG exhibits cost matrix and controller gain errors that are unbounded as the iteration lengthens. 

We compared the time it takes to compute the optimal policies in model-based nested algorithm against NPG in Table \ref{tab:compTime}. For the double and triple inverted pendulums, the computational time of our algorithm is much less than that of NPG by around $90\%$. This is a validation of our superior convergence rate compared to NPG's sublinear convergence rate. 
\begin{table}[tb]
\centering
\caption{Computational Time: Model-based PO vs. NPG.}
\begin{tabular}{ |p{1.5cm}|p{1.5cm}|p{1.3cm}||p{1.5cm}|p{1.3cm}|p{1.3cm}|  }
	\hline
	\multicolumn{6}{|c|}{Policy Optimization Computational time (secs)} \\
	\hline
	\multicolumn{3}{|c||}{Double Inverted Pendulum} & \multicolumn{3}{|c|}{Triple Inverted Pendulum} \\
	\hline
	Model-based & Model-free & NPG  & Model-based & Model-free &NPG\\
	\hline
	$0.0901$ & $0.3061$   & $2.1649$   & $0.1455$  & $0.7829$  &   $2.3209$\\
	\hline
\end{tabular}
\label{tab:compTime}
\end{table}
\begin{figure}[t!]
	\centering
	\begin{subfigure}{.32\linewidth}
		\centering
		\includegraphics[width=0.99\linewidth]{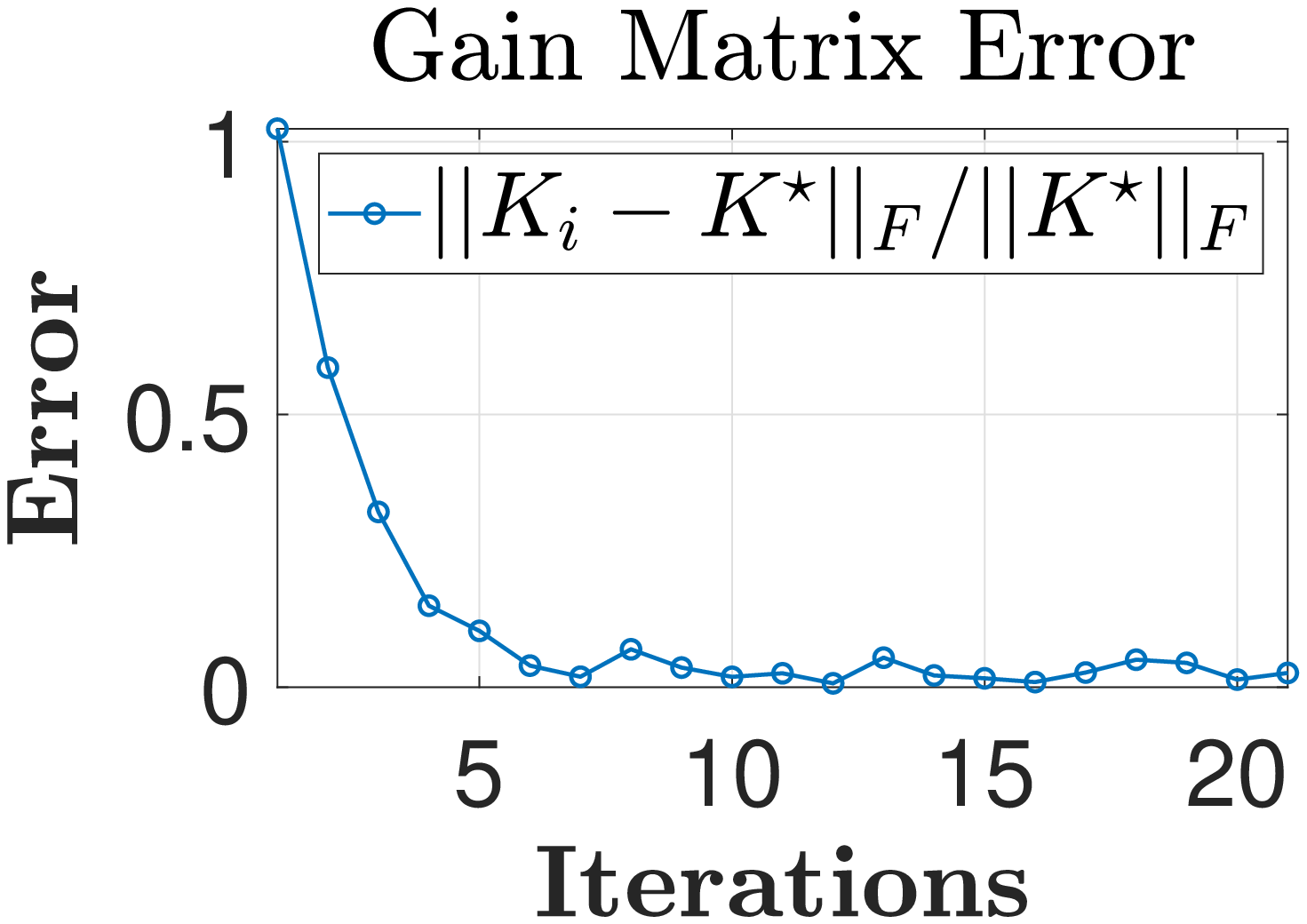}
		\label{fig:Knorm_NG_robust}
	\end{subfigure}
	\begin{subfigure}{.32\linewidth}
		\centering
		\includegraphics[width=0.99\linewidth]{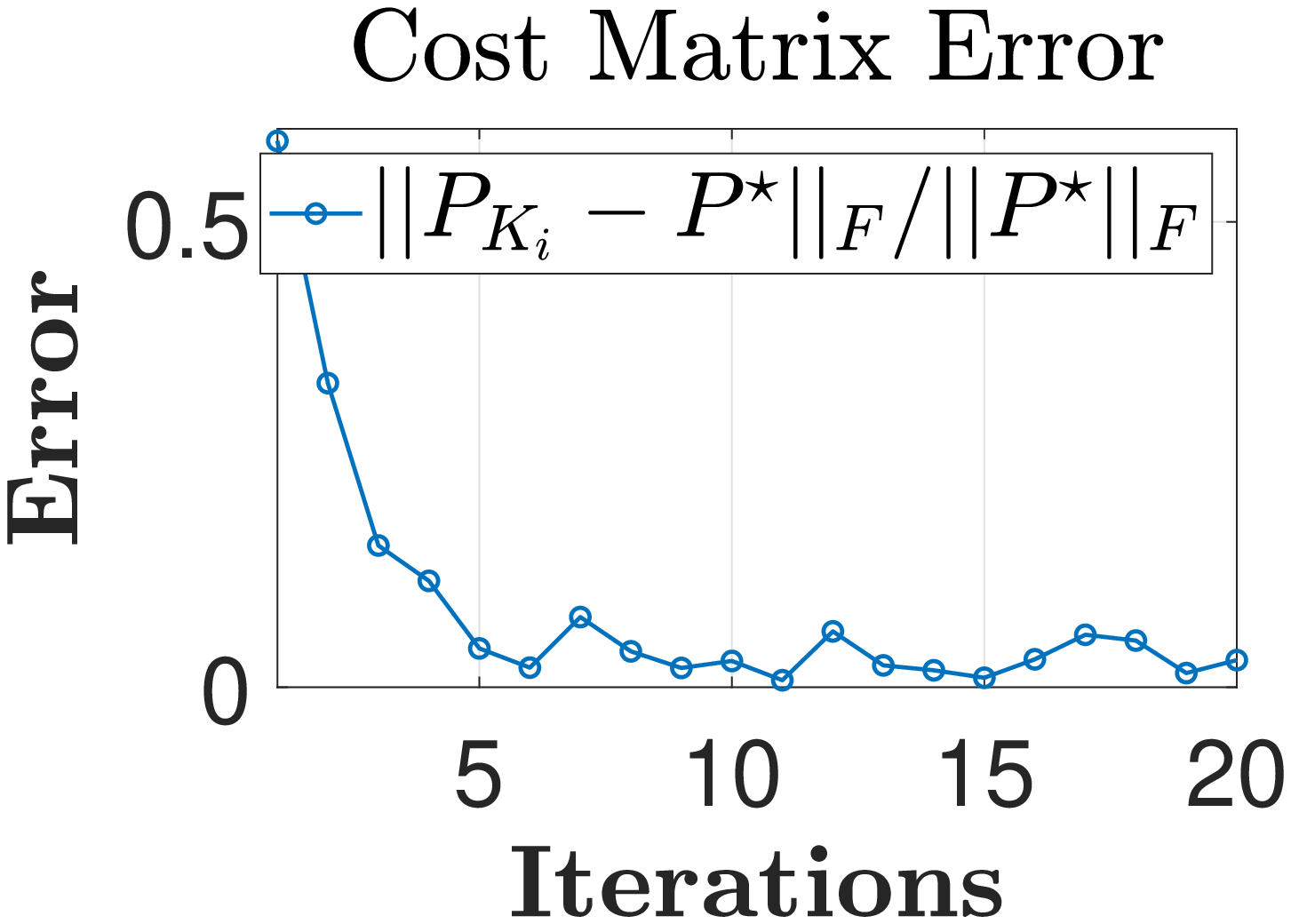}
		\label{fig:Pnorm_NG_robust}
	\end{subfigure}
	\begin{subfigure}{.32\linewidth}
		\centering
		\includegraphics[width=0.99\linewidth]{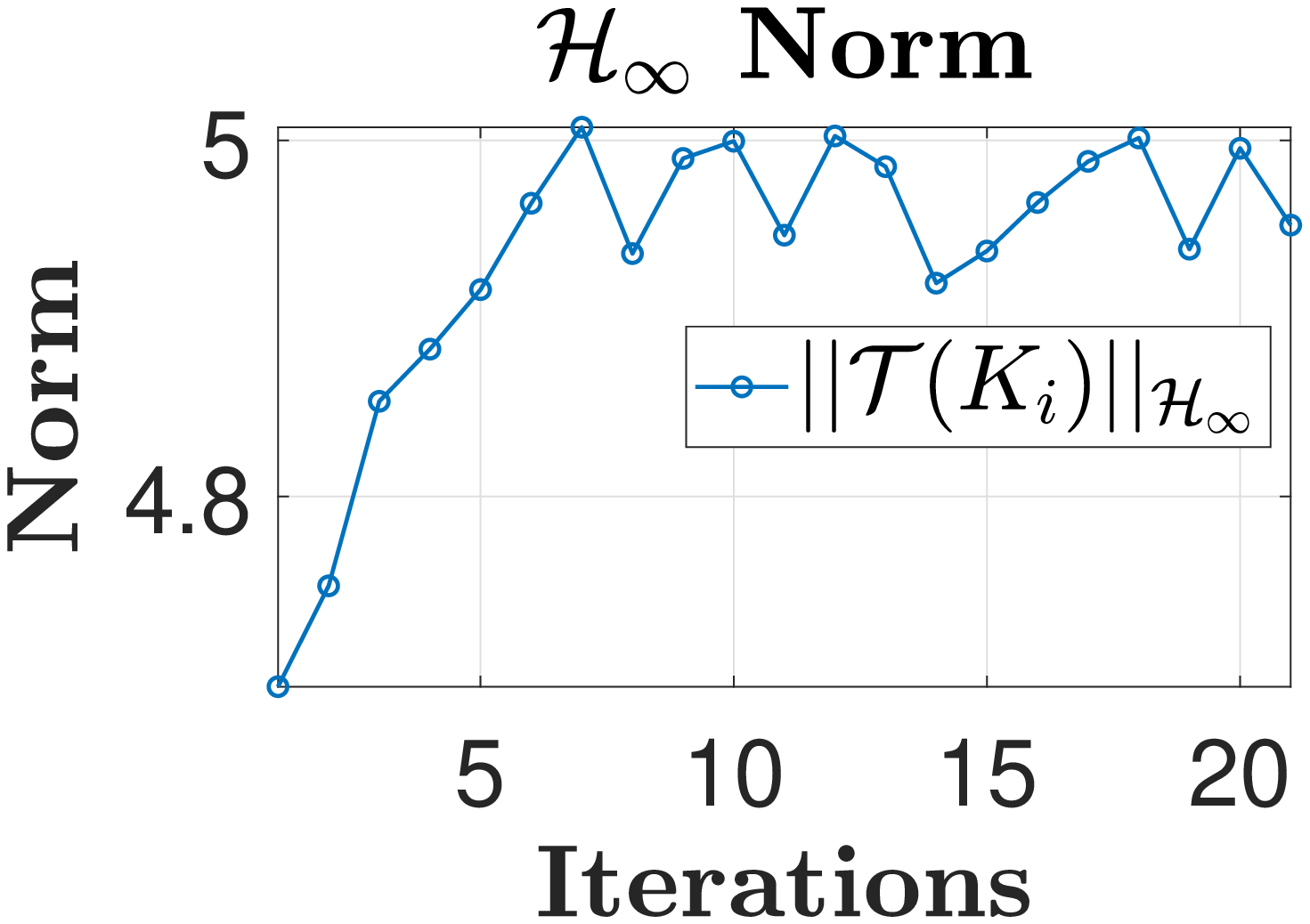}
		\label{fig:Hnorm_NG_robust}
	\end{subfigure}
	\caption{Model-based design: $\norm{\tilde{K}}_\infty = 0.15$.}
\label{fig:norm_NG_robust}
\end{figure}
%
%
\begin{figure}[t!]
	\centering
	\begin{subfigure}{.32\linewidth}
		\includegraphics [width=0.99\linewidth]{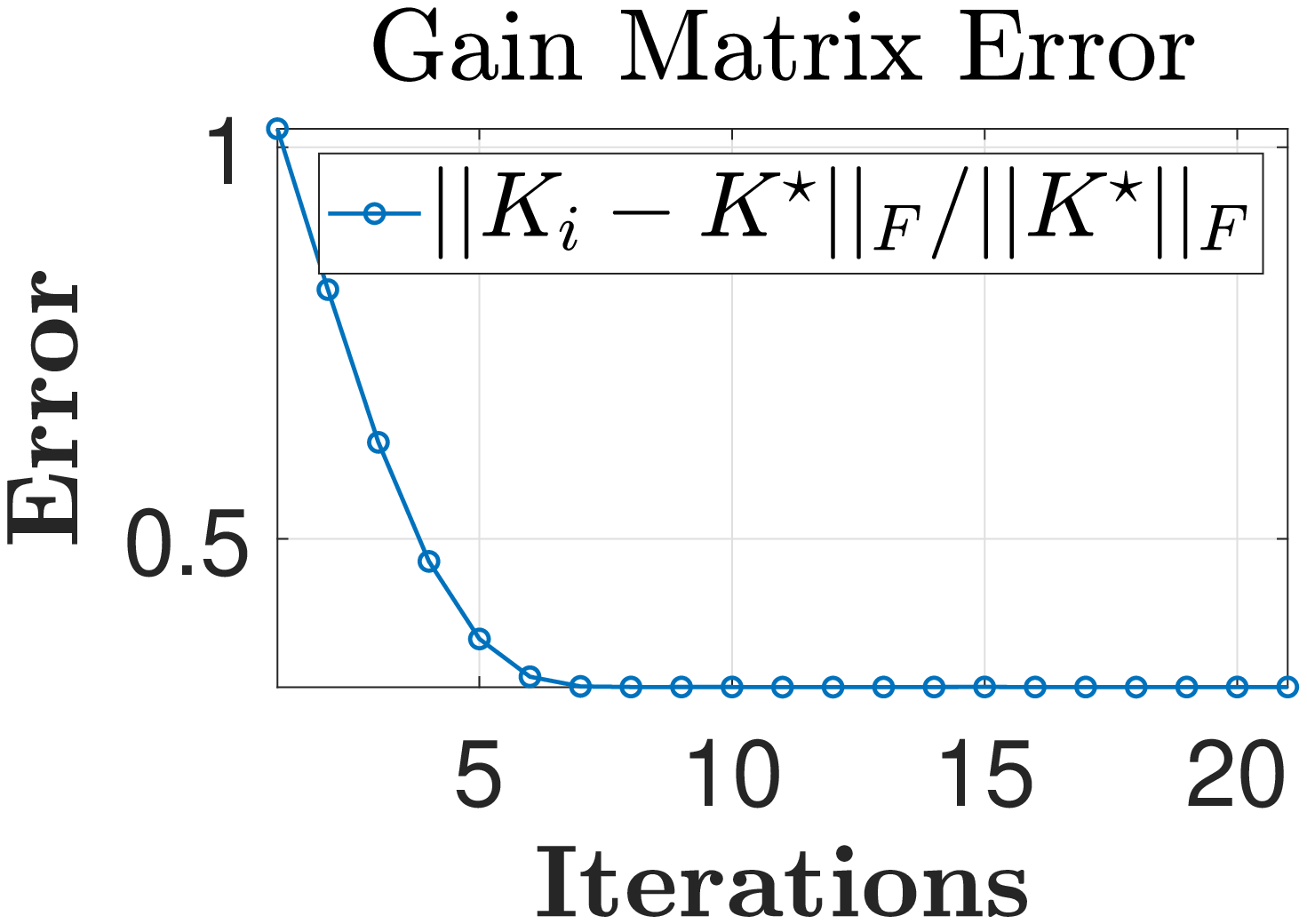}
		\label{fig: K evolution}  
	\end{subfigure}
	\begin{subfigure}{.32\linewidth}
		\centering
		\includegraphics [width=0.99\linewidth]{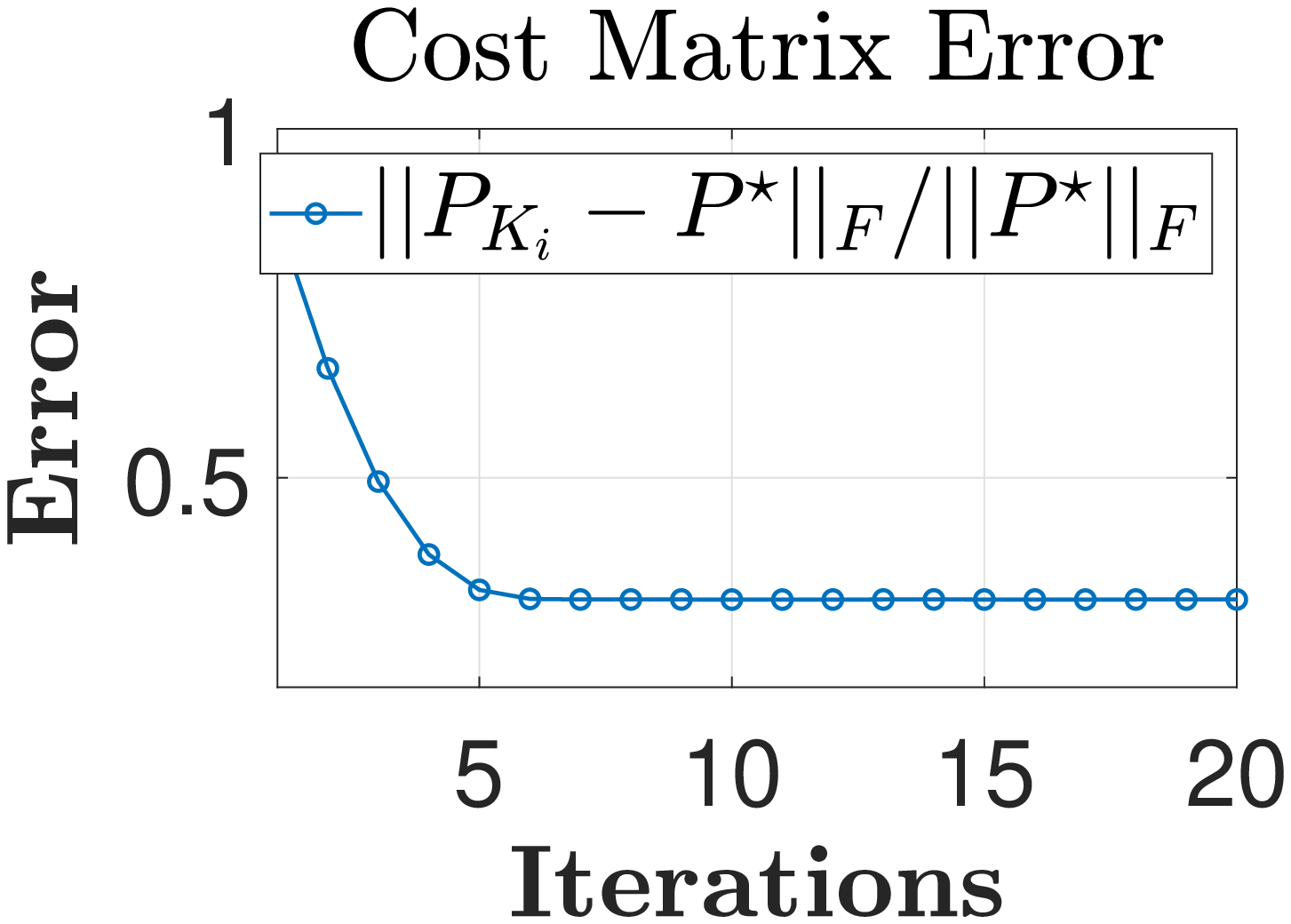}
		\label{fig: P evolution}
	\end{subfigure}	
	\begin{subfigure}{.32\linewidth}
		\centering
		\includegraphics [width=0.99\linewidth]{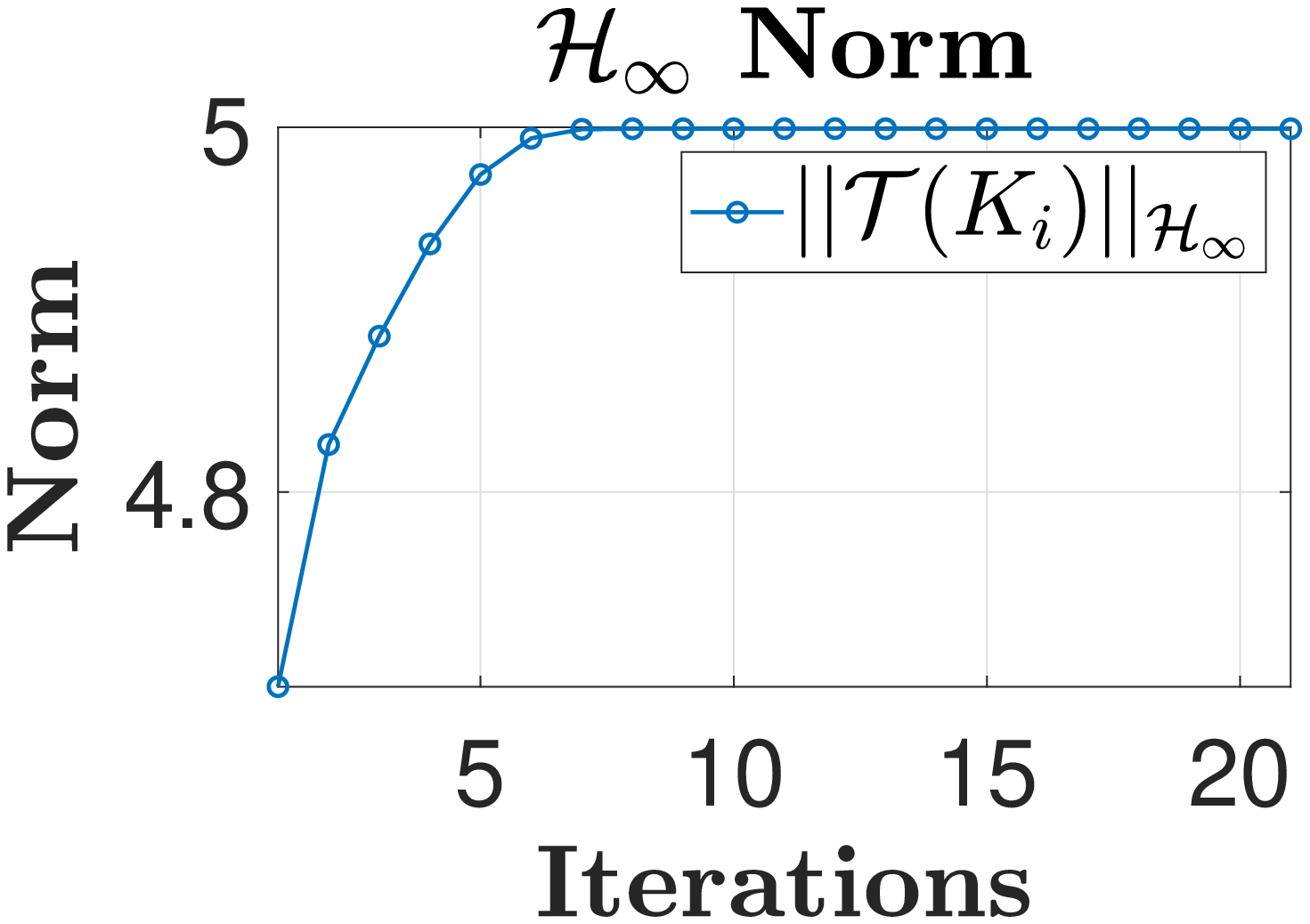}
		\label{fig: H evolution}
	\end{subfigure}	
	\caption{Sampling-based scheme results.}
	\label{fig:Learn_triple}
\end{figure}
\begin{figure}[tb!]
\centering
\begin{subfigure}{.32\linewidth}
	\centering
	\includegraphics[width=0.99\linewidth]{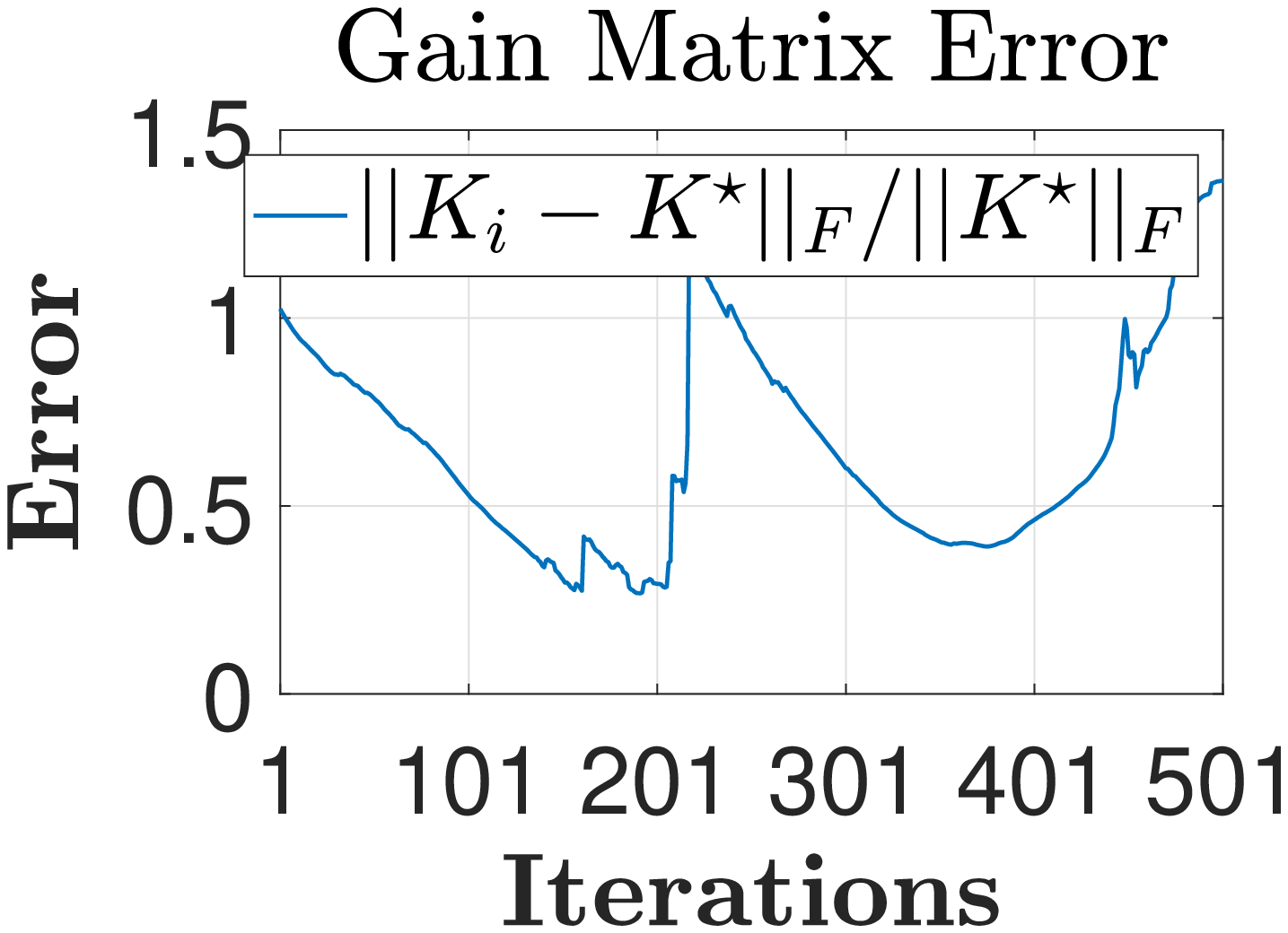}
	\label{fig:Knorm_Nature_robust}
\end{subfigure}
\begin{subfigure}{.32\linewidth}
	\centering
	\includegraphics[width=0.99\linewidth]{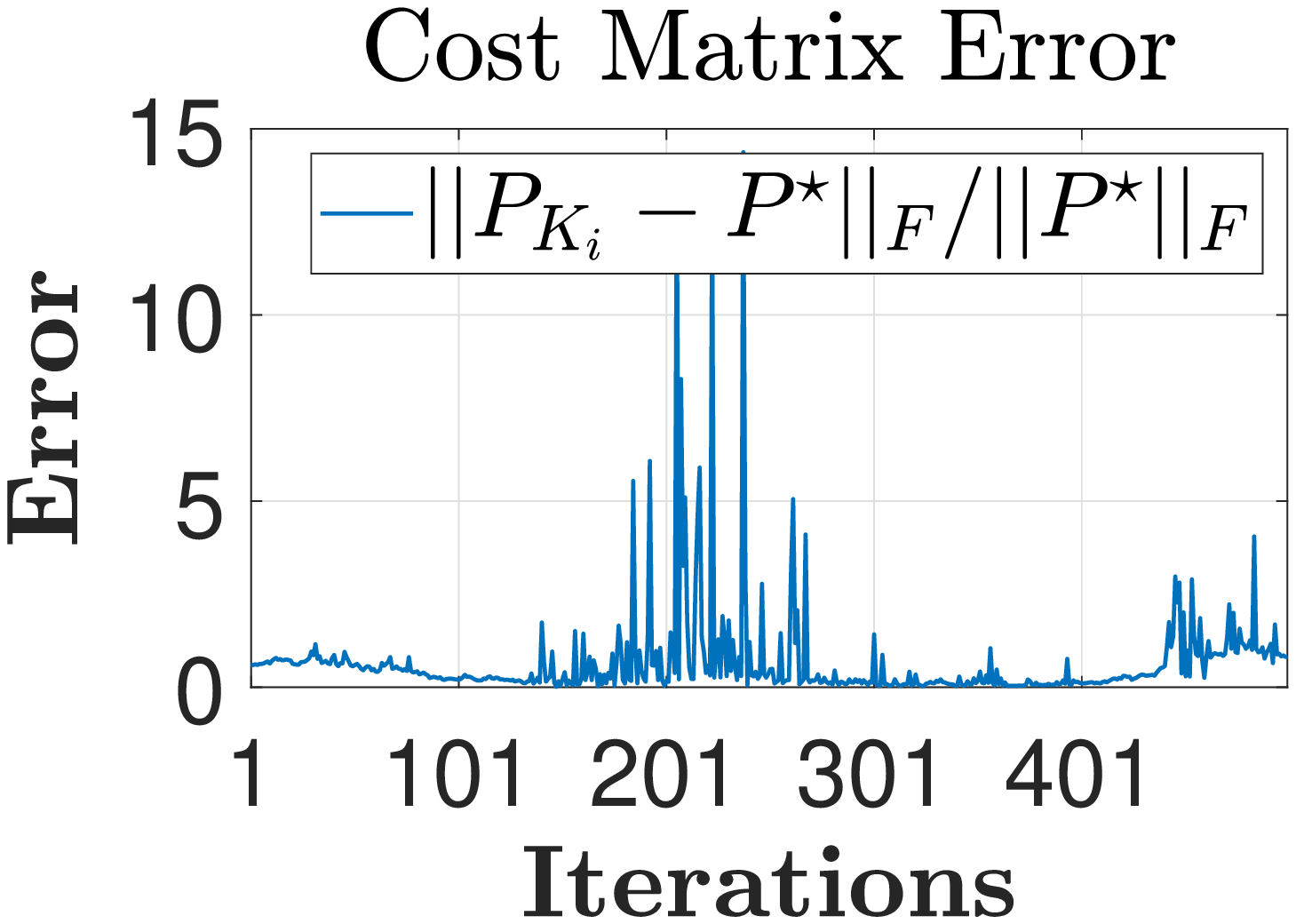}
	\label{fig:Pnorm_Nature_robust}
\end{subfigure}
\begin{subfigure}{.32\linewidth}
	\centering
	\includegraphics[width=0.99\linewidth]{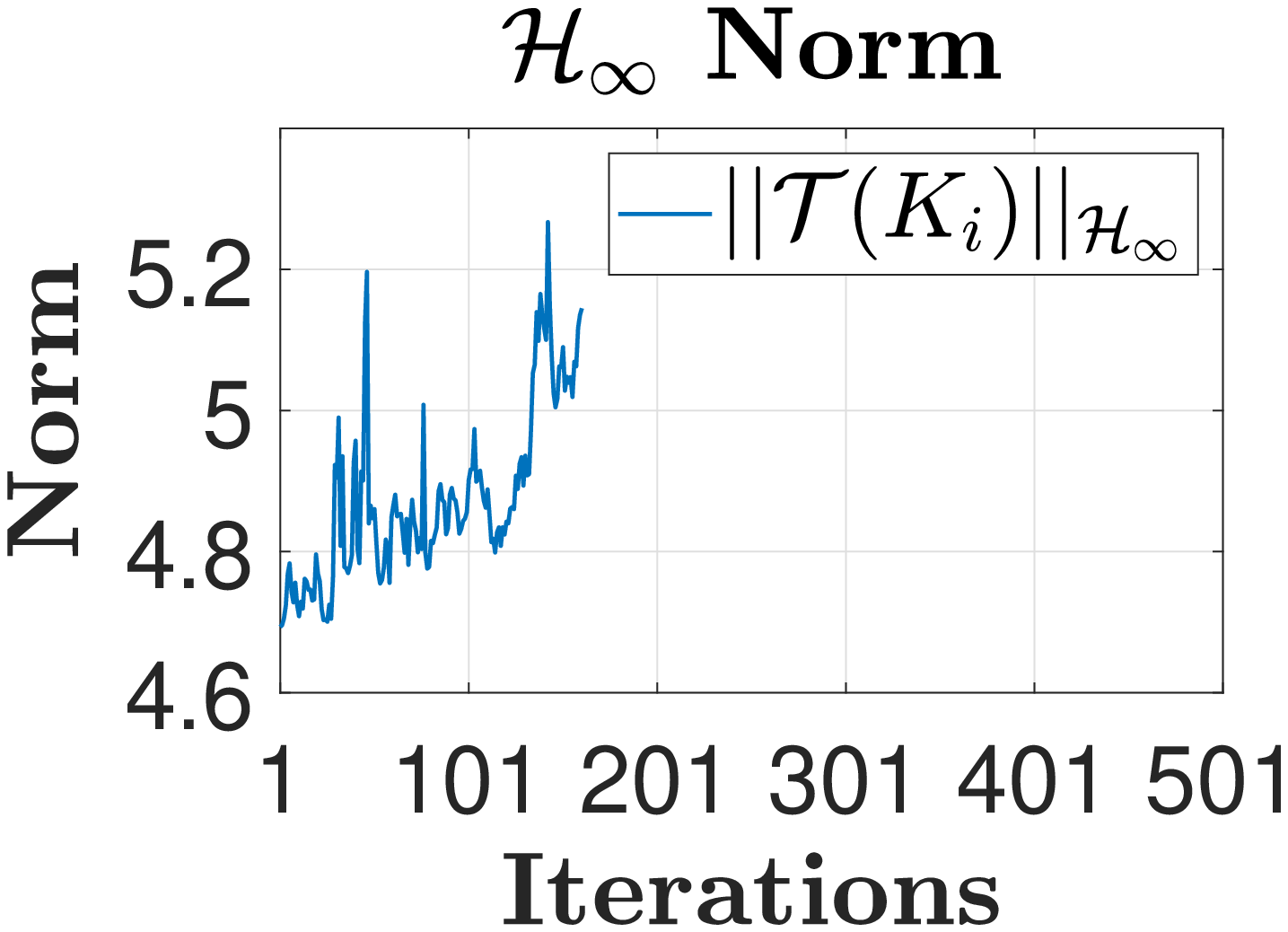}
	\label{fig:Hnorm_Nature_robust}
\end{subfigure}
\caption{Model-based design vs. NPG with $\norm{\tilde{K}}_\infty = 0.1$.}
\label{fig:norm_Nature_robust}
\end{figure}
\subsection{Sampling-based Mixed Design vs. NPG}
For the model-based algorithm, we set $\bar{p} = 20$ and found the maximum data collection time before attaining the full rank condition of Lemma \ref{lm:rank_cond} to be $t_l = 1500s$. The system parameters $A$ and $B$ are unknown but the initial controller $\hat{K}_0 \in \mathcal{K}$ is searched for following~\cite[Alg. 1]{LekanIFAC}. 

We run the sampling-based algorithm on \eqref{eq:po_leqg_opt}. From the charts of Fig. \ref{fig:Learn_triple}, the controller $\hat{K}_p$ found at each iteration converges after $5$ iterations alongside $\hat{P}_{K}^p$. At the $20$'th iteration, the relative error ${||\hat{K}_{20}-K_{*}||}/{||K_{*}||} = 31.5\%$ and ${||\hat{P}_{K_{20}}-P_{*}||}/{||P_{*}||} = 31.6\%$. These demonstrate that the proposed algorithm does find an approximate optimal solution from the noisy data.

Comparing our algorithm under an additive Wiener process noise against the natural policy gradient, we find that the relative errors in gain and cost matrix errors are not well-behaved. The same disruption applies to the $\hinf$-norm plot. These further demonstrates the need for a robust PO scheme such as the one we have presented in this work for problems that fall into the class of systems \eqref{eq:po_leqg_opt}.
	\ifCLASSOPTIONcaptionsoff
	\newpage
	\fi
	\bibliographystyle{ieeetr}        
	\bibliography{mixednote} 

\begin{thebibliography}{10}

\bibitem{Fazel2018}
M.~Fazel, R.~Ge, S.~Kakade, and M.~Mesbahi, ``Global convergence of policy
  gradient methods for the linear quadratic regulator,'' in {\em Proceedings of
  the 35th International Conference on Machine Learning}, vol.~80 of {\em
  Proceedings of Machine Learning Research}, pp.~1467--1476, PMLR, 10--15 Jul
  2018.

\bibitem{Mohammadi2022}
H.~Mohammadi, A.~Zare, M.~Soltanolkotabi, and M.~R. Jovanović, ``Convergence
  and sample complexity of gradient methods for the model-free
  linear–quadratic regulator problem,'' {\em IEEE Transactions on Automatic
  Control}, vol.~67, no.~5, pp.~2435--2450, 2022.

\bibitem{HuAnnualRevs}
B.~Hu, K.~Zhang, N.~Li, M.~Mesbahi, M.~Fazel, and T.~Ba{\c{s}}ar, ``Toward a
  theoretical foundation of policy optimization for learning control
  policies,'' {\em Annual Review of Control, Robotics, and Autonomous Systems},
  vol.~6, pp.~123--158, 2023.

\bibitem{Gravell2021}
B.~Gravell, P.~M. Esfahani, and T.~Summers, ``Learning optimal controllers for
  linear systems with multiplicative noise via policy gradient,'' {\em IEEE
  Transactions on Automatic Control}, vol.~66, no.~11, pp.~5283--5298, 2021.

\bibitem{zhang2021derivative}
K.~Zhang, X.~Zhang, B.~Hu, and T.~Basar, ``Derivative-free policy optimization
  for linear risk-sensitive and robust control design: Implicit regularization
  and sample complexity,'' {\em Advances in Neural Information Processing
  Systems}, vol.~34, pp.~2949--2964, 2021.

\bibitem{Zhang2020SIAM}
K.~Zhang, A.~Koppel, H.~Zhu, and T.~Ba\c{s}ar, ``Global convergence of policy
  gradient methods to (almost) locally optimal policies,'' {\em SIAM Journal on
  Control and Optimization}, vol.~58, no.~6, pp.~3586--3612, 2020.

\bibitem{Zhang2021}
K.~{Zhang}, B.~{Hu}, and T.~{Ba{\c{s}}ar}, ``{Policy Optimization for
  $\mathcal{H}_2$ Linear Control with $\mathcal{H}_\infty$ Robustness
  Guarantee: Implicit Regularization and Global Convergence},'' {\em arXiv
  e-prints}, p.~arXiv:1910.09496, oct 2019.

\bibitem{LevineEnd2End}
S.~Levine, C.~Finn, T.~Darrell, and P.~Abbeel, ``{End-to-End Training of Deep
  Visuomotor Policies},'' {\em The Journal of Machine Learning Research},
  vol.~17, no.~1, pp.~1334--1373, 2016.

\bibitem{RechtTour}
B.~Recht, ``A tour of reinforcement learning: The view from continuous
  control,'' {\em Annual Review of Control, Robotics, and Autonomous Systems},
  vol.~2, pp.~253--279, 2019.

\bibitem{Pang2021}
B.~{Pang} and Z.~P. {Jiang}, ``Adaptive optimal control of linear periodic
  systems: an off-policy value iteration approach,'' {\em IEEE Transactions on
  Automatic Control}, vol.~66, no.~2, pp.~888--894, 2021.

\bibitem{DeanSampleComplexity}
S.~Dean, H.~Mania, N.~Matni, B.~Recht, and S.~Tu, ``On the sample complexity of
  the linear quadratic regulator,'' {\em Foundations of Computational
  Mathematics}, vol.~20, no.~4, pp.~633--679, 2020.

\bibitem{Glover1989}
K.~Glover, ``Minimum entropy and risk-sensitive control: the continuous time
  case,'' in {\em Proceedings of the 28th IEEE Conference on Decision and
  Control,}, pp.~388--391 vol.1, 1989.

\bibitem{Khargonekar1988}
P.~Khargonekar, I.~Petersen, and M.~Rotea, ``$\mathcal{H}_\infty$ optimal
  control with state-feedback,'' {\em IEEE Transactions on Automatic Control},
  vol.~33, no.~8, pp.~786--788, 1988.

\bibitem{basar1990minimax}
T.~Basar, ``Minimax disturbance attenuation in ltv plants in discrete time,''
  in {\em 1990 American Control Conference}, pp.~3112--3113, IEEE, 1990.

\bibitem{SteeleStochCalc}
J.~M. Steele, {\em Stochastic calculus and financial applications}, vol.~1.
\newblock Springer, 2001.

\bibitem{Oksendal}
B.~{\O}ksendal and B.~{\O}ksendal, {\em Stochastic differential equations}.
\newblock Springer, 2003.

\bibitem{ShamNPG}
S.~M. Kakade, ``A natural policy gradient,'' {\em Advances in neural
  information processing systems}, vol.~14, 2001.

\bibitem{Zhang2019}
K.~Zhang, Z.~Yang, and T.~Basar, ``Policy optimization provably converges to
  nash equilibria in zero-sum linear quadratic games,'' in {\em Advances in
  Neural Information Processing Systems} (H.~Wallach, H.~Larochelle,
  A.~Beygelzimer, F.~d\textquotesingle Alch\'{e}-Buc, E.~Fox, and R.~Garnett,
  eds.), vol.~32, Curran Associates, Inc., 2019.

\bibitem{Bu2019}
J.~{Bu}, L.~J. {Ratliff}, and M.~{Mesbahi}, ``{Global Convergence of Policy
  Gradient for Sequential Zero-Sum Linear Quadratic Dynamic Games},'' {\em
  arXiv e-prints}, Nov. 2019.

\bibitem{Duncan2013}
T.~E. Duncan, ``{Linear-Exponential-Quadratic Gaussian} control,'' {\em IEEE
  Transactions on Automatic Control}, vol.~58, no.~11, pp.~2910--2911, 2013.

\bibitem{Kantarovich}
{\em {Functional Analysis in Normed Spaces}}.
\newblock New York: MacMillan, 1964.

\bibitem{book_Basar}
T.~Başar and P.~Bernhard, {\em $H_\infty$-Optimal Control and Related Minimax
  Design Problems: A Dynamic Game Approach}.
\newblock Springer, 2008.

\bibitem{Kleinman1968}
D.~Z. Kleinman, ``On an iterative technique for riccati equation
  computations,'' {\em IEEE Transactions on Automatic Control}, vol.~13,
  pp.~114--115, 1968.

\bibitem{LekanIFAC}
L.~Molu, ``Mixed $\mathcal{H}_2/\mathcal{H}_\infty$ policy synthesis.,'' in
  {\em The International Federation of Automatic Control, 22nd World Congress},
  p.~arXiv:2302.08846, July 2023.

\bibitem{boyd1994linear}
S.~Boyd, L.~El~Ghaoui, E.~Feron, and V.~Balakrishnan, {\em Linear matrix
  inequalities in system and control theory}.
\newblock SIAM, 1994.

\bibitem{Sontag2008}
E.~D. Sontag, {\em Input to State Stability: Basic Concepts and Results},
  pp.~163--220.
\newblock Berlin, Heidelberg: Springer Berlin Heidelberg, 2008.

\bibitem{DuncanSDEBrownian}
T.~E. Duncan, B.~Maslowski, and B.~Pasik-Duncan, ``Control of some linear
  stochastic systems in a hilbert space with fractional brownian motions,'' in
  {\em 2011 16th International Conference on Methods \& Models in Automation \&
  Robotics}, pp.~107--110, IEEE, 2011.

\bibitem{DuncanStochastic}
T.~E. Duncan and B.~Pasik-Duncan, ``Stochastic linear-quadratic control for
  systems with a fractional brownian motion,'' in {\em 49th IEEE Conference on
  Decision and Control (CDC)}, pp.~6163--6168, IEEE, 2010.

\bibitem{JiangJiang}
Y.~Jiang and Z.-P. Jiang, ``{Computational Adaptive Optimal Control for
  Continuolus-Time Linear Systems With COmpletely Unknown Dynamics},'' vol.~48,
  pp.~2699--2704, 2023.

\bibitem{Mori1988}
T.~Mori, ``Comments on "a matrix inequality associated with bounds on solutions
  of algebraic {Riccati} and {Lyapunov} equation" by {J. M. Saniuk} and {I.B.
  Rhodes},'' {\em IEEE Transactions on Automatic Control}, vol.~33, no.~11,
  pp.~1088--, 1988.

\bibitem{Fierro2013}
M.~González-Fierro, C.~Balaguer, N.~Swann, and T.~Nanayakkara, ``A humanoid
  robot standing up through learning from demonstration using a multimodal
  reward function,'' in {\em 2013 13th IEEE-RAS International Conference on
  Humanoid Robots (Humanoids)}, pp.~74--79, 2013.

\bibitem{Pristovani_2018}
R.~D. Pristovani, D.~R. Sanggar, and P.~Dadet, ``Implementation of push
  recovery strategy using triple linear inverted pendulum model in
  {\textquotedblleft}t-{FloW}{\textquotedblright} humanoid robot,'' {\em
  Journal of Physics: Conference Series}, vol.~1007, p.~012068, apr 2018.

\bibitem{FURUT1984}
K.~Furut, T.~Ochiai, and N.~Ono, ``Attitude control of a triple inverted
  pendulum,'' {\em International Journal of Control}, vol.~39, no.~6,
  pp.~1351--1365, 1984.

\bibitem{MAGNUS1985}
J.~R. Magnus and H.~Neudecker, ``Matrix differential calculus with applications
  to simple, hadamard, and kronecker products,'' {\em Journal of Mathematical
  Psychology}, vol.~29, no.~4, pp.~474--492, 1985.

\bibitem{book_horn}
R.~A. Horn and C.~R. Johnson, {\em Matrix Analysis, second edition}.
\newblock Cambridge University Press, 2013.

\bibitem{Zhou_Robust}
K.~Zhou, J.~C. Doyle, and K.~Glover, {\em Robust and Optimal Control}.
\newblock Prentice hall Upper Saddle River, NJ, 1996.

\end{thebibliography}
	\newcounter{appidx}
\setcounter{appidx}{1}
\renewcommand\theequation{\Alph{appidx}.\arabic{equation}}

\setcounter{equation}{0}

\section*{Appendix \Alph{appidx}: Lemmas and Proofs}
In this appendix, we introduce a series of lemmas to guide our problem description and proposed solution. 

\begin{proof}[Proof of Lemma \ref{lm:oloop_iter}]
	When $p = 0$, $K_0 \in \mathcal{K}$, and it satisfies (1) (See \cite[Alg. 1]{LekanIFAC}.)  
	For $p>0$, introduce the identities,
	\begin{subequations}
		\begin{align}
			RK_{p+1} &= B^\top P_K^p, \qquad K_{p+1}^\top R = P_K^p B, \\
			A_K^{p^\top}P_K^p &= 	A_K^{(p+1)^\top}P_K^p + (K_{p+1} - K_p)^\top B^\top P_K^p, \\
			P_K^pA_K^{p} &= 	P_K^p A_K^{(p+1)} + P_K^p B (K_{p+1} - K_p).
		\end{align}
		\label{eq:riccati_identities}
	\end{subequations}
	Therefore,  equation \eqref{eq:riccati_outer_loop_iter} becomes
	%
	%
	\begin{align}
		&A_K^{(p+1)^\top}P_K^p + P_K^p A_K^{(p+1)} + \gamma^{-2} P_K^p DD^\top P_K^p +C^\top C   
		\label{eq:riccati_outer_converge_lm1} \\ 
		&  + K_{p+1}^\top R K_{p+1}  +(K_{p+1} - K_{p})^\top R (K_{p+1} -   K_p) = 0. \nonumber
	\end{align}
	Thus, for a stabilizing $K_{p+1} (\neq K_p)$ we must have $(K_{p+1} - K_{p})^\top R (K_{p+1} -   K_p) \succ 0$  so that
	\begin{align}
		A_K^{(p+1)^\top}P_K^p + P_K^p A_K^{(p+1)} + \gamma^{-2} P_K^p DD^\top P_K^p + Q_K^{p+1}  \prec 0.
		\label{eq:lemma2_st1}
	\end{align} 
	If (read: since) the inequality \eqref{eq:lemma2_st1} holds, the bounded real Lemma~\cite[Lemma A.1, statement 3]{Zhang2021} stipulates that a $P_K^p \succ 0$ exists; by~\cite[Lemma A.1, statement 1]{Zhang2021}, $\|T_{zw}(K_p)\|_\infty < \gamma$ given that $\lambda_i(A_K^{(p+1)}) <0$ in \eqref{eq:lemma2_st1}. \textit{A fortiori}, $K_p \in \mc{K}$ for $p >0$ by the bounded real Lemma. This proves the first statement.
	
	The proof for statement (2) now follows. At the $(p+1)$'th iteration, it can be verified that \eqref{eq:riccati_outer_loop_iter} admits the form
	\begin{align}
		&A_K^{(p+1)^\top}P_K^{p+1} + P_K^{p+1} A_K^{(p+1)} + C^\top C + K_{p+1}^\top R K_{p+1}\nonumber \\
		& \qquad \qquad  + \gamma^{-2} P_K^{p+1} DD^\top P_K^{p+1} = 0,
		\label{eq:riccati_outer_converge_lm1.2}
	\end{align}
	so that subtracting \eqref{eq:riccati_outer_converge_lm1.2} from \eqref{eq:riccati_outer_converge_lm1} (at the $p$'th iteration) and using the statistical independence property of the noise term $w(t)$ (from Ass. \ref{ass:realizability}) \ie $DD^\top = 0$, we have
	\begin{align}
		&A_K^{(p+1)^\top}\left[P_K^{p} - P_K^{p+1}\right]+ \left[P_K^{p} - P_K^{p+1}\right]A_K^{(p+1)} \nonumber \\
		&\qquad  (K_{p+1}- K_p)^\top R (K_{p+1}-K_p)=0. 
		\label{eq:riccati_outer_converge_lm1.4}
	\end{align}
	\textbf{Observe}: 
	Equation \eqref{eq:riccati_outer_converge_lm1.4} is  a Lyapunov equation of the form 
	\begin{align}
		&A_K^{(p+1)^\top}\left[P_K^{p} - P_K^{p+1}\right]+ \left[P_K^{p} - P_K^{p+1}\right]A_K^{(p+1)} \nonumber \\
		&\qquad =- (K_{p+1}- K_p)^\top R (K_{p+1}-K_p) \triangleq \Psi_p
	\end{align}
	Statement 1 of Lemma \ref{lm:inverseLya} implies that  $A_K^{(p+1)}$ is Hurwitz since $\Psi_p$ above is Hurwitz. Hence, we must have $P_K^{p} - P_K^{p+1} \succeq 0$ because $(K_{p+1}- K_p)^\top R (K_{p+1}-K) \succeq 0$ by statement (3) of Lemma \ref{lm:ZhouRobust}.  
	Whence, $P_K^p \succeq P_K^{p+1}$ and $K_{p+1} \ge K_p$. This proves the second statement. In this sentiment, the sequence $\{P_K^{p}\}_{p=0}^{\infty}$ is decreasing, bounded below by $0$ and has a finite norm so that $\{P_K^{p}\}_{p=0}^{\infty}$ converges to $P_{K}^\infty$. This satisfies \eqref{eq:GARE}. Observe from equation \eqref{eq:riccati_outer_loop_iter} that $P_K^p$ is self-adjoint so that from the ``limit of monotonic positive operators theorem"~\cite[p. 189]{Kantarovich},  $\lim_{p \rightarrow \infty}\norm{P_{K}^p - P^*}_F = 0$. By a similar argument for decreasing operators sequences~\cite[p. 190]{Kantarovich}, the sequence $\{K_K^{p}\}^{\infty}_{p=0}$ is increasing and upper bounded by $K_K^\infty$. Hence, $\lim_{p \rightarrow \infty}\norm{K_p - K^*}_F = 0$. The third statement is thus proven.
	\label{proof:lm_oloop_iter}
\end{proof}
\begin{proof}[Proof of Lemma \ref{lm:EK=0}]
	Since $R \succ 0$, $\Psi_K = 0$ implies $K = K^\prime$. Therefore at $\Psi_K =0$, we must have $K = K^\prime$ which implies that $P_K = P_K^\prime$. If  $K = K^\prime$ and $P_K = P_K^\prime$, it suffices to conclude that $K^\prime= K \triangleq K^\star$ where $K^\star = R^{-1}B^\top P^\star $. Hence,
	$\Psi_K =0$ is tantamount to  $P_K = P^\star$ and $K = K^\star$. 
\end{proof}

\begin{proof}[Proof of Lemma \ref{lm:bound_EK}]
	Define $A^\star = A - BR^{-1}B^\top P^\star + \gamma^{-2}DD^\top P^\star$ so that \eqref{eq:brlemma2} becomes
	\begin{align}\label{eq:AREAoptPK}
		&A^{\star \top}P_{K}^p + P_{K}^pA^\star + Q_{K_p}  + (K^\star - K_p)^\top R K_p^\prime \nonumber  \\
		&+ K_p^{\prime\top} R(K^\star - K_p)  
		-\gamma^{-2} P^\star DD^\top P_{K}^p  - \gamma^{-2}P_{K}^pDD^\top P^\star \nonumber \\
		& + \gamma^{-2}P_{K}^p DD^\top P_{K}^p = 0.
	\end{align}
	In addition, \eqref{eq:GARE} can be rewritten (replacing $\alpha$ with $\gamma$) as
	\begin{align}
			&A^{\star \top} P^\star + P^\star A^\star + Q + K^{\star \top}RK^\star - \gamma^{-2}P^\star DD^\top P^\star = 0. 
			\label{eq:GAREAoptPopt}
	\end{align}
	Subtracting \eqref{eq:GAREAoptPopt} from \eqref{eq:AREAoptPK} and completing squares, we have
	\begin{align} \label{eq:Aopt(PK-Popt)}
		\begin{split}
			&A^{\star \top} (P_{K}^p - P^\star) + (P_{K}^p - P^\star)A^\star + \Psi_{K_p} \\
			&+ \gamma^{-2}(P_{K}^p - P^\star)DD^\top(P_{K}^p - P^\star) \\
			& - (K_p^\prime- K^\star)^\top R(K_p^\prime- K^\star) = 0.  
		\end{split}
	\end{align}
	Let $\tilde{P}_K^p := P_K^p - P^\star$. It follows from $K_p^\star = R^{-1}B^\top P^\star$ and \eqref{eq:Aopt(PK-Popt)} that 
	\begin{align}\label{eq:Aopt(PK-Popt)3}
		\begin{split}
			&A^{\star \top} \tilde{P}_{K}^p + \tilde{P}_{K}^p(A-BR^{-1}B^\top P_K^p + \gamma^{-2}DD^\top P_K^p)  \\
			&\qquad \qquad + \Psi_{K_p}= 0,
		\end{split}
	\end{align}    
	whereupon $\mathcal{A}(K_p) \vec(\tilde{P}_K^p) = -\vec(\Psi_K)$ %
	with $\mathcal{A}(K_p)$ being
	\begin{align}
		\begin{split}
			&I_n \otimes A^{\star \top}   + (A-BR^{-1}B^\top P_K^p + \gamma^{-2}DD^\top P_K^p)^\top \otimes I_n.    
		\end{split}
	\end{align}

	From \eqref{eq:brlemma2} and the implicit function theorem,  $P_K^p$ is a continuously differentiable function of $K_p \in \mathcal{K}$. Since $A^\star$ is Hurwitz, there exists a ball $\mathcal{B}(K^\star, \delta) :=\{K \in \mathcal{K}|\norm{K - K^\star}_F \leq \delta \} $, such that $\mathcal{A}(K)$ is invertible for any $K \in \mathcal{K}_h \cap \mathcal{B}(K^\star, \delta)$. Therefore, for any $K \in \mathcal{K}_h \cap \mathcal{B}(K^\star, \delta)$, it follows that 
	\begin{align}\label{eq:DeltaPKNorm1}
		\norm{\tilde{P}_K^p}_F \leq \sgmin^{-1}(\mathcal{A}(K_p))\norm{\Psi_{K_p}}_F.
	\end{align}

	On the other hand, for any $K \in \mathcal{K}_h \cap \mathcal{B}^c(K^\star,\delta)$, where $\mathcal{B}^c$ is a complement of $\mathcal{B}$, $\Psi_{K_p} \neq 0$ and there exists a constant $b_1>0$ such that $\norm{\Psi_{K_p}} \geq b_1$. Thus, by Lemma \ref{lm:normTrace}, we have
	\begin{align}
		\norm{\tilde{P}_K^p}_F \leq \Tr(P_K^p) \leq \frac{h + \Tr(P^\star)}{b_1} \norm{\Psi_{K_p}}_F.
	\end{align}
	Suppose that  $b_2= \max_{K \in \mathcal{K}_h \cap \mathcal{B}_\delta(K^\star)}\sgmin^{-1}(\mathcal{A}(K))$ and $b(h)= \max \{b_2, \frac{h + \Tr(P^\star)}{b_1} \}$, 
	then the proof follows from \eqref{eq:DeltaPKNorm1} and the foregoing.
\end{proof}

\begin{proof}[Proof of Lemma \ref{lm:inner_loop_iter}]
	To prove the first statement, we proceed by induction. 
	For a $p\ge 0$ we have  $K_p \in \breve{\mathcal{K}}$ by Theorem \ref{thm:oloop_converge_rate}. 
	Subtracting \eqref{eq:riccati_inner_loop_iter} from \eqref{eq:riccati_outer_loop_iter} yields
	\begin{align}
		&0=A_K^{p\top} (P_K^p - P_{K,L}^{p,q}) + (P_K^p - P_{K,L}^{p,q})A_K^p +\nonumber \\
		& \gamma^2\left[L_{q+1}(K_p) - L_q(K_p)\right]^\top\left[L_{q+1}(K_p) - L_q(K_p)\right].
		\label{eq:P_KL_Diff}
	\end{align}
	In equation \eqref{eq:P_KL_Diff}, we have that  $\left[L_{q+1}(K_p) - L_q(K_p)\right]^\top\left[L_{q+1}(K_p) - L_q(K_p)\right] \succeq 0$ so that \eqref{eq:P_KL_Diff} admits a Lyapunov equation form. Following statement 2 of Lemma \ref{lm:ZhouRobust}, 
	we must have $(P_K^p - P_{K,L}^{p,q}) \succeq 0$. \textit{A fortiori}, we must have $A_{K,L}^{(p,q)}$ as  Hurwitz in \eqref{eq:P_KL_Diff} following statement 2 of Lemma \ref{lm:ZhouRobust}. This proves the first statement.
	
	To prove the second statement, we abuse notation by dropping the templated argument in $L_q(K_p)$. Let us consider the identities,
	\begin{align}
		&A_{K,L}^{(p,q)^\top} P_{K,L}^{p,q} = A_{K,L}^{(p,q+1)^\top} P_{K,L}^{p,q} - \gamma^2 \left[L_{q+1}-L_q\right]^\top L_{q+1} \nonumber \\
		&P_{K,L}^{p,q}A_{K,L}^{(p,q)} =  P_{K,L}^{p,q} A_{K,L}^{(p,q+1)} - \gamma^2 L_{q+1}^\top \left[L_{q+1}-L_q\right].
		\label{eq:inner_identities}
	\end{align}	%
	We now rewrite \eqref{eq:riccati_inner_loop_iter} in light of \eqref{eq:inner_identities} as %
	\begin{align}
		&A_{K,L}^{(p,q+1)^\top} P_{K,L}^{p,q} + P_{K,L}^{p,q}A_{K,L}^{(p,q+1)} - \gamma^2 \left[L_{q+1} - L_q\right]^\top L_{q+1} \nonumber \\
		& \,\, + Q_K -  \gamma^2 L_{q+1}^\top \left[L_{q+1} - L_q\right]-  \gamma^2 (L_{q}^\top L_{q}) = 0.
		\label{eq:riccati_inner_q}
	\end{align}
	At the $(q+1)$'st iteration, we have \eqref{eq:riccati_inner_loop_iter} as 
	\begin{align}
		&A_{K,L}^{(p,q+1)^\top} P_{K,L}^{p,q+1} +  P_{K,L}^{p,q+1}A_{K,L}^{(p,q+1)} + Q_K \nonumber \\
		& \qquad \qquad - \gamma^2 L_{q+1}^\top(K_p) L_{q+1}(K_p)  = 0.
		\label{eq:riccati_inner_qp1}
	\end{align}
	Subtracting \eqref{eq:riccati_inner_q} from \eqref{eq:riccati_inner_qp1}, we have 
	\begin{align}
		&A_{K,L}^{(p,q+1)^\top} \left[P_{K,L}^{p,q+1} - P_{K,L}^{p,q}\right] + \left[P_{K,L}^{p,q+1} - P_{K,L}^{p,q}\right] A_{K,L}^{(p,q+1)} + \nonumber \\
		& \qquad \qquad + \gamma^2 \left[L_{q+1}  - L_q\right]^\top \left[L_{q+1}  - L_q\right] = 0.
		\label{eq:riccati_inner_diff}
	\end{align}	 
	Since $\left[L_{q+1}  - L_q\right]^\top \left[L_{q+1}  - L_q\right] \succeq 0$, \eqref{eq:riccati_inner_diff} is indeed a Lyapunov equation so that $P_{K,L}^{p,q+1} \succeq P_{K,L}^{p,q}$ holds following Lemma \ref{lm:ZhouRobust}. Whence, we must have  $A_{K,L}^{(p,q+1)\top}$ Hurwitz. Following the argument for all $(q, {q}^\prime) \in \bar{q}$ with $q \neq q^\prime$, statement 2) holds.
	
	Observe: $P_{K,L}^{p,q}$ is self-adjoint by reason of \eqref{eq:riccati_outer_loop_iter}. By the theorem on the ``limit of monotonically decreasing operators"~\cite[pp. 190]{Kantarovich},  statement 2) implies that the sequence $\{ P_{K,L}^{p,\bar{q}}, \cdots, P_{K,L}^{p,q=0}  \}$ is monotonically decreasing and bounded from above by $P_{K,L}^{p,\bar{q}} \equiv P_{K,L}^{\star}$. That is, $P_{K,L}^{p,q}$ exists and is the solution of \eqref{eq:riccati_outer_loop_iter} and $P_{K,L}^{p,\bar{q}}$ is the unique positive definite solution to \eqref{eq:riccati_inner_loop_iter}. \textit{A fortiori}, we must have $\lim_{q \rightarrow \infty} P_{K,L}^{p,q}=P_{K,L}^{p,\infty}$. This establishes the third statement.
\end{proof}
\begin{proof}[Proof of Lemma \ref{lm:traceInner}]
	Subtracting \eqref{eq:lyapunovPKL} from \eqref{eq:brlemma2}, and using $L(K^\star) = \gamma^{-2} D^\top P_K$, we find that
	\begin{align}
		&(A_K+DL(K^\star))^\top(P_K - P_K^L) + (P_K - P_K^L)(A_K+  \\
		&\quad DL(K^\star))+ \Psi_K^L - \gamma^{-2}(P_K^L - P_K)DD^\top(P_K^L - P_K)= 0. \nonumber
	\end{align}
	Since $A_K + DL(K^\star)$ is Hurwitz, it follows from statements (1) and (3) of  Lemma~\ref{lm:ZhouRobust} 
	that we must have
	\begin{align}\label{eq:PK-PKLBound}
		P_K - P_K^L \preceq \int_{0}^{\infty} e^{(A_K + DL(K^\star))^\top t} \Psi_K^L e^{(A_K + DL(K^\star))t} \mathrm{d}t.
	\end{align}
	Taking the trace of the lhs, using~\cite[Theorem 2]{Mori1988}, and employing the cyclic property of the trace, the proof follows. 
\end{proof}
\begin{proof}[Proof of Lemma \ref{lm:robust_after_perturb}]
	\text{Let $F(\tilde{P} , \tilde{K})$ be }	
	\begin{align}
		&(A_K+\gamma^{-2}DD^\top P_K)^\top \tilde{P} + \tilde{P} (A_K+\gamma^{-2}DD^\top P_K) \nonumber \\
		& - \tilde{K}^\top B^\top (P_K + \tilde{P} ) - (P_K + \tilde{P} )B \tilde{K} + \tilde{K}^\top R K + K^\top R \tilde{K} \nonumber \\
		&+ \tilde{K}^\top R \tilde{K} + \gamma^{-2}\tilde{P}  DD^\top \tilde{P}.
	\end{align}
	\textbf{Observe}: the pair $(P_K+\tilde{P} , K+\tilde{K})$ satisfies \eqref{eq:brlemma2} iff $F(\tilde{P}, \tilde{K})=0$ and that $F(\tilde{P} , \tilde{K})=0$ implies an implicit function of $\tilde{P} $ with respect to $\tilde{K}$ since if $\tilde{P}  \in \bb{S}^n$ exists, $\tilde{K}$ must exist under the controllability and observability assumptions of Ass \ref{ass:realizability}. Let $\mathcal{F}(\tilde{P} , \tilde{K}) = \vec(F(\tilde{P} , \tilde{K}))$ so that,
	\begin{align}\label{eq:partialF}
		\begin{split}
			&\mathcal{F}(\tilde{P} , \tilde{K}) =\left[I_n \otimes (A_K+\gamma^{-2}DD^\top P_K)^\top  \right.\\
			& + \left.(A_K+\gamma^{-2}DD^\top P_K)^\top \otimes I_n\right] \vec(\tilde{P}) \\
			& - (P_K B \otimes I_n) \vec(\tilde{K}^\top) - (I_n \otimes P_KB)\vec(\tilde{K}) \\
			& - (I_n \otimes \tilde{K}^\top B^\top + \tilde{K}^\top B^\top \otimes I_n) \vec(\tilde{P} ) \\
			& + (K^\top R \otimes I_n)\vec(\tilde{K}^\top) + (I_n \otimes K^\top R) \vec(\tilde{K}) \\
			& + \vec(\tilde{K}^\top R \tilde{K}) + \gamma^{-2}\vec(\tilde{P} DD^\top \tilde{P} ).
		\end{split}
	\end{align}
	%
	Thus,
	\begin{align}
		&\frac{\partial \mathcal{F}(\tilde{P} , \tilde{K})}{\partial \vec(\tilde{P} )} = I_n \otimes [(A_K + \gamma^{-2}DD^\top P_K) - B\tilde{K}]^\top  \\
		&+ [(A_K + \gamma^{-2}DD^\top P_K) - B \tilde{K}]^\top \otimes I_n + \tilde{P}  DD^\top \otimes I_n \nonumber\\
		& + I_n \otimes \tilde{P}  DD^\top - (P_K B \otimes I_n) \vec(\tilde{K}^\top) - (I_n \otimes P_KB)  \nonumber \\
		& \vec(\tilde{K}) + (K^\top R \otimes I_n)\vec(\tilde{K}^\top) + (I_n \otimes K^\top R) \vec(\tilde{K}),  \nonumber 
	\end{align}
	where we have used~\cite[Theorem 9]{MAGNUS1985}, to obtain ${\partial \vec(\tilde{P}  DD^\top \tilde{P} )}/{\partial \vec(\tilde{P} )} = \tilde{P} DD^\top \otimes I_n + I_n \otimes \tilde{P} DD^\top$.
	%
	Since  $\mathcal{F}(0,0) = 0$, $(A_K + \gamma^{-2}DD^\top P_K)$ is Hurwitz, hence ${\partial \mathcal{F}(\tilde{P} , \tilde{K})}/{\partial \vec(\tilde{P})}\lvert_{\tilde{P} =0, \tilde{K} =0}$ is invertible. From the implicit function theorem,  there must exist an $e_1(K) >0$, such that $\tilde{P}$ is continuously differentiable with respect to $\tilde{K}$ for any $\tilde{K}\in \mathcal{B}_{e_1(K)}(0)$. Thus, $\norm{\tilde{P}} \to 0$ as $\norm{\tilde{K}} \to 0$. Since $K \in \mathcal{K}$ by ~\cite[Lemma A.1]{Zhang2021}, we must have $P_K \succ 0$. Therefore, there exists $e(K)>0$, such that $\sigma_{\max}(\tilde{P}) <  \sigma_{\min}(P_K)$, i.e. $P_K - \tilde{P} \succ 0$, as long as $\norm{\tilde{K}} < e(K)$. 
	
	Since $\tilde{P}$ and $\tilde{K}$ satisfy $F(\tilde{P}, \tilde{K}) = 0$, we have
	\begin{align}
		&(A-BK-B\tilde{K})^T (P_K + \tilde{P}) + (P_K + \tilde{P}) + Q + \\
		& (K + \tilde{K})^T R (K + \tilde{K}) + \gamma^{-2}(P_K + \tilde{P}) DD^T (P_K + \tilde{P}) =0\nonumber 
	\end{align}
	Since $P_K + \tilde{P} \succ 0$ when $\norm{\tilde{K}} < e(K)$, by~\cite[Lemma A.1]{Zhang2021}, $K + \tilde{K} \in \mathcal{K}$. That is, if $\tilde{K}$ is small, if we start the PI with a robustly stabilizing $K \in \mathcal{K}$, we can guarantee the feasibility of the iterates. 
\end{proof}

\begin{proof}[Proof of Lemma \ref{lm:ISS_InnerLoop}]
	Define $\tilde{L}_q^\top(K_p) = {L}_q^\top(K_p) - \hat{L}_q^\top(K_p)$ and $\tilde{P}_{K,L}^{p,q}={P}_{K,L}^{p,q}-\hat{P}_{K,L}^{p,q}$. Further, assume $\hat{A}_{K,L}^{p,q}$ is Hurwitz. From \eqref{eq:riccati_inner_loop_iter}, 
	\begin{align}
			&\hat{A}_{K,L}^{p,q\top} P_{K,L}^{p,q} + P_{K,L}^{p,q} \hat{A}_{K,L}^{p,q+1} + Q_K -\gamma^{-2}\hat{P}_{K,L}^{p,q} DD^\top \hat{P}_{K,L}^{p,q}  +  \nonumber \\
			&\gamma^{-2}(P_{K,L}^{p,q} - \hat{P}_{K,L}^{p,q})DD^\top (P_{K,L}^{p,q} - \hat{P}_{K,L}^{p,q}) - \tilde{L}_q^\top(K_p) D^\top P_{K,L}^{p,q} \nonumber  \\
			&\qquad - P_{K,L}^{p,q} D \tilde{L}_q(K_p) = 0.
	\end{align}
	Set $\norm{\tilde{L}_q^\top(K_p)} < {\sigma_{\min} (Q_K - \gamma^{-2}P_{K,L}^{p,q} D D^\top P_{K,L}^{p,q})}/\allowbreak{2\norm{D^\top P_{K,L}^{p,q}}}\triangleq e$. It follows from \eqref{eq:riccati_inner_qp1} that $Q \succeq \gamma^2 L_{q+1}^\top(K_p) L_{q+1}(K_p)$ by reason of it being admissible as a Lyapunov equation. The inequality $P_{K,L}^{p,q} \succeq \hat{P}_{K,L}^{p,q}$ so that
		\begin{align*}
			\begin{split}
				&-\gamma^{-2}\hat{P}_{K,L}^{p,q} DD^\top  \hat{P}_{K,L}^{p,q}  + \gamma^{-2}(P_{K,L}^{p,q} - \hat{P}_{K,L}^{p,q})DD^\top (P_{K,L}^{p,q} \\
				&\quad- \hat{P}_{K,L}^{p,q})  - (\tilde{L}_K^j)^\top D^\top P_{K,L}^{p,q} - P_{K,L}^{p,q} D \tilde{L}_q(K_p) + Q_K \succeq 0.
			\end{split}  
		\end{align*}
		Consequently, $\hat{A}_{K,L}^{p,q+1}$ is Hurwitz. Since $\hat{L}_q(K_0) = 0$ and $K \in \breve{\mathcal{K}}$, $\hat{A}_{K,L}^{p,0} = A-BK$ is Hurwitz. Hence, $\hat{A}_{K,L}^{p,q}$ is Hurwitz for all $q \in \mathbb{N}_+$ as long as $\norm{\tilde{L}_q(K_p)}_F \leq e$.
\end{proof}
	
\begin{proof}[Proof of Lemma \ref{lm:invariantOut}]
	Since $\mathcal{K}_h$ is compact, it follows from Lemma \ref{lm:robust_after_perturb} that $\underline{e}(h) :=\inf_{K \in \mathcal{K}_h}e(K)>0$. In addition, $\hat{K}' \in \mathcal{K}$ when $\norm{\tilde{K}} < \underline{e}(h)$. By~\cite[Lemma A.1]{Zhang2021}, ${P}_{\hat{K}'} = {P}_{\hat{K}'}^\top \succ 0$ is the solution of 
	\begin{align}\label{eq:AK'PK'}
		{A}_{\hat{K}'}^\top {P}_{\hat{K}'} + {P}_{\hat{K}'}{A}_{\hat{K}'} + {Q}_{\hat{K}'} + \gamma^{-2}{P}_{\hat{K}'}DD^\top {P}_{\hat{K}'} = 0,
	\end{align}
	where ${A}_{\hat{K}'} = A - B\hat{K}'$ and ${Q}_{\hat{K}'} = Q + (\hat{K}')^\top  R\hat{K}'$. Let ${A}_{\hat{K}'}^\star = A - B\hat{K}' + \gamma^{-2}DD^T P_{\hat{K}'}$.
	%
	%
	It follows from~\cite[Lemma A.1]{Zhang2021} that ${A}_{\hat{K}'}^\star$ is Hurwitz. 	Subtracting \eqref{eq:AK'PK'} from \eqref{eq:brlemma2}, using ${K}' = R^{-1}B^T P_K$, and completing the squares, 	%
	\begin{align}
		\begin{split}
			&{A}_{\hat{K}'}^{\star\top}(P_K - {P}_{\hat{K}'}) + (P_K - {P}_{\hat{K}'}){A}_{\hat{K}'}^\star \\
			&+ ({K}' - K)^\top R  ({K}' - K) - \tilde{K}^\top R \tilde{K}\\
			&+ \gamma^{-2}(P_K - {P}_{\hat{K}'})DD^\top (P_K - {P}_{\hat{K}'}) = 0.
		\end{split}
	\end{align}
	From Lemma \ref{lm:ZhouRobust}, we have $(P_K - {P}_{\hat{K}'}) \succeq$
	\begin{align}\label{eq:PKPK'Diff}
		\begin{split}
			& \int_{0}^{\infty} e^{A_{\hat{K}'}^{\star\top}t} E_K e^{A_{\hat{K}'}^{\star}t} \df t - \int_{0}^{\infty} e^{A_{\hat{K}'}^{\star\top}t}  \tilde{K}^T R \tilde{K} e^{A_{\hat{K}'}^{\star}t} \df t,
		\end{split}
	\end{align}
	so that taking the trace, using Lemma \ref{lm:mb_upper_bd} and~\cite[Theorem 2]{Mori1988},
	\begin{align}
		\begin{split}
			&\Tr(P_K - {P}_{\hat{K}'}) \geq \frac{\log(5/4)}{2\norm{{A}_{\hat{K}'}^\star}}\norm{E_K} \\
			&- \Tr\left( \int_{0}^{\infty} e^{A_{\hat{K}'}^{\star\top}t}  e^{A_{\hat{K}'}^{\star}t} \df t \right) \norm{R}\norm{\tilde{K}}^2.
		\end{split}
	\end{align}
	It follows from Lemmas \ref{lm:bound_EK} and \ref{lm:normTrace} that 
	\begin{align}\label{eq:PKPK'DiffTrace}
			&\Tr({P}_{\hat{K}'} - P^\star) \leq \left(1 -  
				\frac{\log(5/4) b(h)}{2n\norm{{A}_{\hat{K}'}^\star}}
			\right) \Tr({P}_{{K}} - P^\star) \nonumber \\
			&+ 
				\Tr\left( \int_{0}^{\infty} e^{A_{\hat{K}'}^{\star\top}t}  e^{A_{\hat{K}'}^{\star}t} \df t \right)
				 \norm{R}\norm{\tilde{K}}^2.
		\end{align}
			Let
			\[
			f_1(\hat{K}') = \frac{\log(5/4) b(h)}{2n\norm{{A}_{\hat{K}'}^\star}}, f_2(\hat{K}') = \Tr\left( \int_{0}^{\infty} e^{A_{\hat{K}'}^{\star\top t}}   e^{A_{\hat{K}'}^{\star t}} \df t \right).
			\]
			
		Since $f_1(\hat{K}')$ and $f_2(\hat{K}')$ are continuous with respect to $\hat{K}'$, 
		\begin{align}
			&\underline{f}_1(h)= \inf_{\hat{K}' \in  \mathcal{K}_{h}}{f}_1(\hat{K}') > 0,
			\bar{f}_2(h)= \sup_{\hat{K}' \in  \mathcal{K}_{h}}{f}_2(\hat{K}') < \infty.
		\end{align}
		It follows from \eqref{eq:PKPK'DiffTrace} that if $\norm{\tilde{K}} < \sqrt{\frac{\underline{f}_1(h) h}{\bar{f}_2(h)\norm{R}} }$, then $\Tr({P}_{\hat{K}'} - P^\star) < h$. In summary, if 
		\begin{align}
			\norm{\tilde{K}} < \min\bigg\{ \underline{e}(h), \sqrt{\frac{\underline{f}_1(h) h}{\bar{f}_2(h)\norm{R}} } \bigg\} =: f(h),
		\end{align}
		we have $\hat{K}' \in \mathcal{K}_h$.
\end{proof}

\begin{lemma}{Norm of a Matrix Trace~\cite[Theorem 4.2.2]{book_horn}}
	For any positive semi-definite matrix $P \in \mathbb{S}^n$, $\norm{P}_F \leq \Tr(P) \leq \sqrt{n}\norm{P}_F$, and $\norm{P} \leq \Tr(P) \leq n \norm{P}$. For any $x \in \mathbb{R}^n$, $x^\top P x \geq \eigmin(P) \norm{x}^{2}$.
	\label{lm:normTrace}
\end{lemma}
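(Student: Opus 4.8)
The plan is to reduce every assertion to the spectral decomposition of $P$. Since $P \in \mathbb{S}^n$ is positive semi-definite, I would write $P = U\Lambda U^\top$ with $U$ orthogonal and $\Lambda = \mathrm{diag}(\lambda_1,\dots,\lambda_n)$ where $\lambda_i \ge 0$ are the eigenvalues of $P$. Under this decomposition $\Tr(P) = \sum_i \lambda_i$, $\norm{P}_F = (\sum_i \lambda_i^2)^{1/2}$, $\norm{P} = \eigmax(P) = \max_i \lambda_i$, and $\eigmin(P) = \min_i \lambda_i$, so that all the matrix inequalities become elementary statements about the nonnegative vector $(\lambda_1,\dots,\lambda_n) \in \reline^n$.

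For the Frobenius-norm sandwich: the lower bound $\norm{P}_F \le \Tr(P)$ follows from $\big(\sum_i \lambda_i\big)^2 = \sum_i \lambda_i^2 + \sum_{i \ne j}\lambda_i \lambda_j \ge \sum_i \lambda_i^2$, where the cross terms are nonnegative because each $\lambda_i \ge 0$; taking square roots gives the claim. The upper bound $\Tr(P) \le \sqrt{n}\,\norm{P}_F$ is the Cauchy--Schwarz inequality applied to the vectors $(1,\dots,1)$ and $(\lambda_1,\dots,\lambda_n)$. For the spectral-norm sandwich I would simply note $\eigmax(P) \le \sum_i \lambda_i = \Tr(P) \le n\,\eigmax(P)$, again using $\lambda_i \ge 0$ for the left inequality and $\lambda_i \le \eigmax(P)$ for the right. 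Finally, for the quadratic-form bound, set $y = U^\top x$; then $x^\top P x = y^\top \Lambda y = \sum_i \lambda_i y_i^2 \ge \eigmin(P)\sum_i y_i^2 = \eigmin(P)\norm{y}^2 = \eigmin(P)\norm{x}^2$, the last equality because $U$ is orthogonal so $\norm{y} = \norm{x}$.

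I do not expect any genuine obstacle here; the only point requiring care is the norm conventions---that $\norm{\cdot}$ denotes the spectral norm for matrices (which equals $\eigmax$ on symmetric positive semi-definite matrices) and the Euclidean norm for vectors---but this is already fixed in the Notations paragraph. One could instead just invoke \cite[Theorem 4.2.2]{book_horn} verbatim, but the self-contained argument above is short enough to state in full.
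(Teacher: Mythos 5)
Your proposal is correct and complete: every inequality reduces, as you say, to an elementary statement about the nonnegative eigenvalue vector of $P$, and each step (the nonnegativity of the cross terms, Cauchy--Schwarz against the all-ones vector, $\lambda_i \le \eigmax(P)$, and the orthogonal change of variables $y = U^\top x$) is valid. The paper itself supplies no proof of this lemma---it simply cites \cite[Theorem 4.2.2]{book_horn}---so there is nothing to compare against; your spectral-decomposition argument is the standard self-contained verification and can stand as is.
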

	
\begin{lemma} \label{lm:ZhouRobust}
	Assume $A \in \mathbb{R}^{n\times n}$ is Hurwitz and satisfies $A^\top P + PA +Q = 0$. Then, the following properties hold
	\begin{enumerate}[(1)]
		\item $P = \int_{0}^{\infty} e^{A^\top t}Qe^{At} \df t$;
		\item $P \succ 0$ if $Q\succ 0$, and $P \succeq 0$ if $Q\succeq 0$;
		\item If $Q\succeq 0$, then $(Q,A)$ is observable iff $P \succ 0$;
		\item For a $P^\prime \in \mathbb{S}^n$ satisfying $A^\top P^\prime + P^\prime A +Q^\prime = 0$, where $Q^\prime \preceq Q$, we have $P^\prime \preceq P$. 
	\end{enumerate}
\end{lemma}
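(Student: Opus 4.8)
The plan is to derive all four statements from one observation: since $A$ is Hurwitz, the Lyapunov operator $\mathcal{L}(X) = A^\top X + X A$ on $\mathbb{S}^n$ is invertible --- its eigenvalues are the sums $\lambda_i(A) + \lambda_j(A)$, each with strictly negative real part --- so the equation $A^\top P + P A + Q = 0$ has a unique solution. First I would prove (1) by exhibiting that solution explicitly. Set $X := \int_0^\infty e^{A^\top t} Q e^{A t}\,\df t$; the integral converges absolutely because $\norm{e^{A t}} \le c\,e^{-\mu t}$ for some $c,\mu > 0$. Since $\tfrac{\df}{\df t}\bigl(e^{A^\top t} Q e^{A t}\bigr) = A^\top e^{A^\top t} Q e^{A t} + e^{A^\top t} Q e^{A t} A$, integrating over $[0,\infty)$ gives $A^\top X + X A = \lim_{t\to\infty} e^{A^\top t} Q e^{A t} - Q = -Q$, so $X$ solves the Lyapunov equation and hence $P = X$ by uniqueness.

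Statements (2) and (3) then follow from the integral representation. For (2): $x^\top P x = \int_0^\infty \bigl(e^{A t}x\bigr)^\top Q \bigl(e^{A t}x\bigr)\,\df t \ge 0$ whenever $Q \succeq 0$, and when $Q \succ 0$ with $x \neq 0$ the integrand is continuous and equals $x^\top Q x > 0$ at $t = 0$, forcing $x^\top P x > 0$. For (3), factor $Q = M^\top M$, so that $x^\top P x = \int_0^\infty \norm{M e^{A t} x}^2\,\df t$; thus $x^\top P x = 0$ iff $M e^{A t} x \equiv 0$ on $[0,\infty)$ iff $Q e^{A t} x \equiv 0$, and differentiating this identity repeatedly at $t = 0$ (truncating at order $n-1$ via Cayley--Hamilton) shows this is equivalent to $x$ lying in the unobservable subspace $\bigcap_{k\ge 0}\ker(Q A^k)$. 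Hence $P \succ 0$ exactly when that subspace is $\{0\}$, i.e. when $(Q, A)$ is observable.

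Finally, (4) is obtained by subtracting the two Lyapunov equations: $A^\top (P - P') + (P - P') A + (Q - Q') = 0$ with $Q - Q' \succeq 0$, so statement (2) applied to this equation yields $P - P' \succeq 0$, i.e. $P' \preceq P$. I do not anticipate a genuine obstacle here --- this is a classical result (cf. Zhou, Doyle \& Glover) --- but the step requiring the most care is the observability equivalence in (3): one must argue that vanishing of the integral forces pointwise vanishing of the continuous nonnegative integrand, and that analyticity of $t \mapsto Q e^{At} x$ together with Cayley--Hamilton lets one pass from $Q e^{At} x \equiv 0$ to the finite set of conditions $Q A^k x = 0$, $k = 0, \dots, n-1$.
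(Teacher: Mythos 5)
Your proof is correct and is essentially the paper's argument: the paper simply cites Zhou--Doyle--Glover (Lemma 3.18) for statements (1)--(3), which you prove in the standard way, and for (4) it compares the integral representations $P = \int_0^\infty e^{A^\top t} Q e^{At}\,\df t$ and $P' = \int_0^\infty e^{A^\top t} Q' e^{At}\,\df t$ termwise, which is equivalent to your step of subtracting the two Lyapunov equations and applying (2) to $A^\top(P-P') + (P-P')A + (Q-Q') = 0$. No gaps; your treatment of the observability equivalence in (3) via Cayley--Hamilton is the standard one.
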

\begin{proof}[Proof of Lemma \ref{lm:ZhouRobust}]
	The first three statements are proven in \cite[Lemma 3.18]{Zhou_Robust}. From the expression in the fourth statement above, $P^\prime$ can be expressed as
	\begin{align}
		&P^\prime = \int_{0}^\infty e^{A^\top t}Q^\prime e^{At} \df t
		\label{eq:zhou_pprime}
	\end{align}
	owing to statement (1) above. If $Q^\prime \preceq Q$, then comparing statement (1) with \eqref{eq:zhou_pprime}, we must have  $P^\prime \preceq P$.
\end{proof}

\begin{lemma}{\cite[Lemma 3.19]{Zhou_Robust}}
	Suppose that $P$ satisfies $A^\top P + PA +Q = 0$, then the following statements hold:
	\begin{enumerate}
		\item $A$ is Hurwitz if $P \succ 0$ and $Q \succ 0$.
		\item $A$ is Hurwitz if $P \succeq 0$, $Q \succeq 0$ and $(Q,A)$ is detectable.
	\end{enumerate}  
	\label{lm:inverseLya}
\end{lemma}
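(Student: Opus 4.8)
\textbf{Proof proposal for Lemma \ref{lm:inverseLya}.}

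The plan is to recover Hurwitzness of $A$ from the Lyapunov relation $A^\top P + PA + Q = 0$ by testing it against eigenvectors of $A$ and exploiting the sign information carried by $P$ and $Q$. First I would handle statement (1). Let $\lambda \in \bb{C}$ be any eigenvalue of $A$ with (possibly complex) eigenvector $v \neq 0$, so $Av = \lambda v$ and $v^* A^\top = \bar{\lambda} v^*$ after conjugate-transposing $A^\top v^* {}^* $ appropriately; concretely, pre-multiply the Lyapunov equation by $v^*$ and post-multiply by $v$ to obtain $\bar{\lambda}\, v^* P v + \lambda\, v^* P v + v^* Q v = 0$, i.e. $2\,\mathrm{Re}(\lambda)\, v^* P v = - v^* Q v$. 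Since $P \succ 0$ we have $v^* P v > 0$, and since $Q \succ 0$ we have $v^* Q v > 0$; hence $\mathrm{Re}(\lambda) = -\, v^* Q v / (2 v^* P v) < 0$. As $\lambda$ was an arbitrary eigenvalue, every eigenvalue of $A$ has strictly negative real part, so $A$ is Hurwitz. This disposes of the first claim with only the eigenvector computation.

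For statement (2), the same identity gives $2\,\mathrm{Re}(\lambda)\, v^* P v = - v^* Q v \le 0$ because now $P \succeq 0$ and $Q \succeq 0$; the obstacle is that $v^* P v$ could vanish, so the division above is no longer legitimate and we cannot immediately conclude $\mathrm{Re}(\lambda) \le 0$, let alone $<0$. The fix is the standard detectability argument. Suppose, for contradiction, that $A$ has an eigenvalue $\lambda$ with $\mathrm{Re}(\lambda) \ge 0$ and eigenvector $v \neq 0$. From $2\,\mathrm{Re}(\lambda)\, v^* P v = - v^* Q v$ with the left side $\ge 0$ and the right side $\le 0$, both must be zero: $v^* Q v = 0$ and (if $\mathrm{Re}(\lambda) > 0$) also $v^* P v = 0$. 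Since $Q \succeq 0$, $v^* Q v = 0$ forces $Q v = 0$, i.e. $Q^{1/2} v = 0$. Then $v$ is an eigenvector of $A$ with $\mathrm{Re}(\lambda)\ge 0$ lying in the kernel of the observation map $Q$ (equivalently $\sqrt{Q}$), which directly contradicts detectability of $(Q, A)$ --- detectability requires precisely that no such unobservable eigenvector with nonnegative-real-part eigenvalue exist. Hence every eigenvalue of $A$ has strictly negative real part and $A$ is Hurwitz.

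The only genuinely delicate point is the case $\mathrm{Re}(\lambda) = 0$ in statement (2): here the identity only yields $v^* Q v = 0$ (not $v^* P v = 0$), but that is already enough, since $Qv = 0$ together with $Av = \lambda v$, $\mathrm{Re}(\lambda) = 0$ still exhibits an undetectable mode on the imaginary axis, again contradicting detectability. I would also note for cleanliness that if $v$ is complex one may pass to the real invariant subspace spanned by $\mathrm{Re}(v), \mathrm{Im}(v)$ and argue with real quadratic forms, but the complex computation above is self-contained and shorter, so I would keep it. No results beyond elementary linear algebra and the definition of detectability are needed; in particular this is exactly the converse direction to Lemma \ref{lm:ZhouRobust}, statements (1)--(3), and could alternatively be cited from \cite[Lemma 3.19]{Zhou_Robust}, but the two-line eigenvector argument is worth writing out.
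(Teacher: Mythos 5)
Your argument is correct. Note that the paper does not prove this lemma at all---it is imported verbatim as \cite[Lemma 3.19]{Zhou_Robust}---so there is no internal proof to compare against; what you have written is essentially the standard textbook argument behind that citation. Statement (1) via the eigenvector identity $2\,\mathrm{Re}(\lambda)\,v^*Pv = -v^*Qv$ is sound (the step $v^*A^\top Pv=\bar{\lambda}\,v^*Pv$ uses that $A$ is real, which holds here), and for statement (2) you correctly observe that $\mathrm{Re}(\lambda)\ge 0$ forces both sides to vanish, so $v^*Qv=0$, hence $Qv=0$ (equivalently $\sqrt{Q}\,v=0$), exhibiting an undetectable mode in the closed right half-plane and contradicting the PBH characterization of detectability of $(Q,A)$; your remark that the purely imaginary case needs only $v^*Qv=0$, not $v^*Pv=0$, is exactly the right care. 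The only value added relative to the paper is self-containedness: the paper leans on the cited source, while your two-line eigenvector computation makes the lemma elementary and independent of it.
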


\begin{proof}[Proof of Theorem \ref{thm:innerISS}]
	When $\norm{\tilde{L}_q(K_p)} < e$, we have an Hurwitz $\hat{A}_{K,L}^{p,q}$ going by Lemma \ref{lm:ISS_InnerLoop}. Rewriting \eqref{eq:inexact_iter_iloop} for the $(p+1)$'th iteration and subtracting it from \eqref{eq:inexact_iter_iloop}, we have 
	\begin{align}
			&\hat{A}_{K,L}^{(p, q+1)\top}(\hat{P}_{K,L}^{(p, q+1)} - \hat{P}_{K,L}^{p, q} )+ (\hat{P}_{K,L}^{(p, q+1)} - \hat{P}_{K,L}^{p, q} )\hat{A}_{K,L}^{(p, q+1)\top} \nonumber \\
			&\,\, + \gamma^{-2}(\gamma^{2}\hat{L}_q(K_p) - D^\top \hat{P}_{K,L}^{p, q})^\top (\gamma^{2}\hat{L}_q(K_p) - D^\top \hat{P}_{K,L}^{p, q}) \nonumber \\
			&\qquad - \gamma^{2}\tilde{L}^\top_q(K_p) \tilde{L}_q(K_p)= 0.      
	\end{align}
	Suppose that $\hat{\Psi}_{K,L}^{p,q}= \gamma^{-2}(\gamma^{2}\hat{L}_q(K_p) - D^\top\hat{P}_{K,L}^{p,q})^\top(\gamma^{2}\hat{L}_q(K_p) - D^\top \hat{P}_{K}^j)$. It follows that since $\hat{A}_{K,L}^{p, q+1}$ is Hurwitz,  $\hat{P}_{K,L}^{(p, q+1)} - \hat{P}_{K,L}^{p, q} $ becomes
	\begin{align}
		\int_{0}^{\infty} e^{\hat{A}_{K,L}^{(p, q+1)^\top}t} \left[\hat{\Psi}_{K,L}^{p,q} - \gamma^{2} \tilde{L}_q^\top(K_p) \tilde{L}_q^\top(K_p) \right]e^{\hat{A}_{K,L}^{(p, q+1)}t} dt.   
	\end{align}
	%
	Now let $\hat{F}_K^q = \int_{0}^{\infty} e^{\hat{A}_{K,L}^{(p, q+1)^\top}t} \hat{\Psi}_{K,L}^{p,q} e^{\hat{A}_{K,L}^{(p, q+1)}t} dt$ so that
	\begin{align}
		\begin{split}
			&{P}_{K,L}^{p,q+1} - \hat{P}_{K,L}^{p,q+1} = {P}_{K,L} - \hat{P}_{K,L}^{p} - \hat{F}_K^q \\
			&+ \int_{0}^{\infty}e^{\hat{A}_{K,L}^{(p,q+1)^\top}t} \left( \gamma^{2}\tilde{L}_q^\top(K_p) \tilde{L}_q(K_p) \right) e^{\hat{A}_{K,L}^{p,q+1}t} dt.    
		\end{split}
		\label{eq:diffInner}
	\end{align}
	Let $f_K = \sup_{q \in \mathbb{N}_+}\norm{\hat{A}_{K,L}^{p,q+1}} $. From Lemma \ref{lm:traceInner}, we can write  $-\norm{\hat{F}_K^q} \le - \frac{\log(5/4)}{2f_K} \norm{ \hat{\Psi}_{K,L}^{p,q}}$. Furthermore, by Lemma \ref{lm:traceInner}, we can write $-\norm{ \hat{\Psi}_{K,L}^{p,q}} \leq - \frac{1}{c(K)}\Tr(P_{K,L}^{p,q} - \hat{P}_{K,L}^{p,q})$, where $c(K) = \Tr(\int_{0}^{\infty} e^{(A_K + DL_q(K_p))t} e^{(A_K + DL_q(K_p))^\top t} dt)$. Therefore, the trace of \eqref{eq:diffInner} becomes
	\begin{align}
			&\Tr({P}_{K,L}^{p,q} - \hat{P}_{K,L}^{p,q+1}) \leq \left(1-\frac{\log(5/4)}{2f_K c(K)}\right) \Tr({P}_{K} - \hat{P}_{K,L}^{p,q}) \nonumber \\
			&+ \Tr\left(\int_{0}^{\infty}e^{(\hat{A}_{K,L}^{p,q+1})t} e^{(\hat{A}_{K,L}^{p,q+1})^\top t} dt\right)\gamma^2 \norm{\tilde{L}_q(K_p)}^2.   
		\end{align}
		\begin{align}
			\text{Let } g = \sup_{q \in \mathbb{N}_+}\Tr\left(\int_{0}^{\infty}e^{(\hat{A}_{K,L}^{p,q+1})t} e^{(\hat{A}_{K,L}^{p,q+1})^\top t} dt\right),
		\end{align}
		and $
		\hat{\beta}(K) = 1-\frac{\log(5/4)}{2f_K c(K)}$, so that 
		\begin{align}
			\Tr({P}_{K,L}^{p,q} - \hat{P}_{K,L}^{p,q}) \leq \hat{\beta}^{q-1}(K)\Tr({P}_{K,L}^{p,q}) + \lambda(\norm{\tilde{L}}_\infty),
		\end{align}
		where $\lambda(\norm{\tilde{L}}_\infty) := \frac{1}{1-\hat{\beta}(K)}\gamma^2 g \norm{\tilde{L}}_\infty^2$. As $\norm{{P}_{K} - \hat{P}_{K,L}^{p,q}}_F \leq \Tr({P}_{K,L}^{p,q} - \hat{P}_{K,L}^{p,q})$, we establish the theorem.
\end{proof}
\end{document}